\renewcommand{\maketag@@@}[1]{\hbox{\m@th\normalsize\normalfont#1}}%
\newtheorem{fact}{Fact}
\newtheorem{assumption}{Assumption}
\newtheorem{definition}{Definition}
\newtheorem{lemma}{Lemma}
\newtheorem{theorem}{Theorem} 
\newtheorem{remark}{Remark}
\newtheorem{proposition}{Proposition}
\newcommand\numberthis{\addtocounter{equation}{1}\tag{\theequation}}
\newcommand{\ip}[2]{\left\langle#1,#2\right\rangle}
\newcommand{\norm}[1]{\left\lVert#1\right\rVert}
\renewcommand{\(}{\left(}
\renewcommand{\)}{\right)}
\newcommand{\bfm}[1]{\bm{#1}}
\def\la{\left\langle}
\def\ra{\right\rangle}
\def\ln{\left\|}
\def\rn{\right\|}
\def\lsb{\left[}
\def\rsb{\right]}
\def\A{\mathcal{A}}
\def\B{\mathcal{B}}
\def\C{\mathbb{C}}
\def\P{\mathcal{P}}
\def\T{\mathcal{T}}
\def\H{\mathcal{H}}
\def\D{\mathcal{D}}
\def\G{\mathcal{G}}
\def\R{\mathbb{R}}
\def\E{\mathbb{E}}
\def\I{\mathcal{I}}
   \def\mA{\bfm A}  
  \def\mB{\bfm B} 
  \def\mC{\bfm C}  
  \def\mD{\bfm D}  
  \def\mE{\bfm E}  
 \def\mF{\bfm F}  
   \def\mG{\bfm G} 
  \def\mI{\bfm I}
   \def\mL{\bfm L}
  \def\mM{\bfm M}
  \def\mQ{\bfm Q}  
   \def\mR{\bfm R}
   \def\mU{\bfm U} 
 \def\mV{\bfm V}  
   \def\mX{\bfm X}  
   \def\mY{\bfm Y}  
   \def\mZ{\bfm Z}
\newcommand{\va}{\bm{a}}
\newcommand{\vb}{\bm{b}}
\newcommand{\ve}{\bm{e}}
\newcommand{\vg}{\bm{g}}
\newcommand{\vh}{\bm{h}}
\newcommand{\vx}{\bm{x}}
\newcommand{\vw}{\bm{w}}
\newcommand{\vy}{\bm{y}}
\newcommand{\vz}{\bm{z}}
\newcommand{\vX}{\bm{X}}
\newcommand{\vZ}{\bm{Z}}
\newcommand{\vD}{\bm{D}}
\newcommand{\vE}{\bm{E}}
\newcommand{\vF}{\bm{F}}
\newcommand{\vR}{\bm{R}}
\newcommand{\vM}{\bm{M}}
\newcommand{\vU}{\bm{U}}
\newcommand{\vV}{\bm{V}}
\newcommand{\vP}{\bm{P}}
\newcommand{\vQ}{\bm{Q}}
\def\vSS{\bm{\Sigma}}
\newcommand \vDelta{\bm{\Delta}}
\newcommand \vDeltal{\bm{\Delta}_L}
\newcommand \vDeltar{\bm{\Delta}_R}
\def\bzero{\bm{0}}
\def\bPhi {\bm{\Phi}}
\def\bPsi {\bm {\Psi}}
\DeclareMathOperator{\rank}{rank}
\DeclareMathOperator{\diag}{diag}
\DeclareMathOperator{\Real}{{\normalfont Re}}
\newcommand{\dist}[2]{{\normalfont\mbox{dist}}(#1,#2)}
\newcommand{\distsq}[2]{{\normalfont\mbox{dist}^2}(#1,#2)}
\renewcommand*{\@fnsymbol}[1]{\ensuremath{\ifcase#1\or \dagger\or \ddagger\or \mathsection\or
    *\or \mathparagraph\or \|\or **\or \dagger\dagger
    \or \ddagger\ddagger \else\@ctrerr\fi}}
\title{Simpler Gradient Methods for Blind Super-Resolution with Lower Iteration Complexity}
\author{Jinsheng Li, Wei Cui, and Xu Zhang
\thanks{This work was supported in part by the National Natural Science Foundation of China under Grant 62025103 and the Postdoctoral Fellowship Program of CPSF under Grant GZC20232038. \textit{(Corresponding author: Xu Zhang.)}}
\thanks{
J. Li and W. Cui are with the School of Information and Electronics, Beijing Institute of
Technology, Beijing 100081, China (e-mail: jinshengli@bit.edu.cn; cuiwei@bit.edu.cn). 
X.~Zhang is with the School of Artificial Intelligence, Xidian University, Xi'an 710126, China (e-mail: zhang.xu@xidian.edu.cn). 
}}
\begin{document}


\maketitle
\begin{abstract}
We study the problem of blind super-resolution, which can be formulated as a low-rank matrix recovery problem via vectorized Hankel lift (VHL). The previous gradient descent method based on VHL named PGD-VHL relies on additional regularization such as the projection and balancing penalty, exhibiting a suboptimal iteration complexity. In this paper, we propose a simpler unconstrained optimization problem without the above two types of regularization and develop two new and provable gradient methods named VGD-VHL and ScalGD-VHL. A novel and sharp analysis is provided for the theoretical guarantees of our algorithms, which demonstrates that our methods offer lower iteration complexity than PGD-VHL. In addition, ScalGD-VHL has the lowest iteration complexity while being independent of the condition number. {Furthermore, our novel analysis reveals that the blind super-resolution problem is less incoherence-demanding, thereby eliminating the necessity for incoherent projections to achieve linear convergence.} Empirical results illustrate that our methods exhibit superior computational efficiency while achieving comparable recovery performance to prior arts.
\end{abstract}
\begin{IEEEkeywords}
Blind super-resolution, vanilla gradient descent, scaled gradient descent, low-rank matrix factorization 
\end{IEEEkeywords}
\section{Introduction}
Super-resolution refers to enhancing high-resolution details from coarse-scale observations, which is an essential problem in various applications including single-molecule imaging \cite{Quirin2011}, radar target detection\cite{Bajwa2011}, and astronomy \cite{Puschmann2005}. 
In specific contexts like radar and communication systems \cite{Candes2013a,Zheng2017,Heckel2016}, super-resolution involves estimating 
the locations of some point sources based on known point spread functions (PSFs). However, the PSFs might not be available due to imperfect data acquisition, unknown calibrations, or varying temporal and spatial characteristics, which are inherent in scenarios such as blind deconvolution of seismic data \cite{Margrave2011}, blind channel identification \cite{Luo2006}, and 2D microscopy imaging \cite{Rust2006}. In this paper, we focus on this blind super-resolution scenario and aim to resolve the locations of point sources from their convolution with unknown PSFs.

 
Under the assumption that the PSFs live in a known low-dimensional subspace,  several studies \cite{Chi2016, Yang2016,Chen2022,Suliman2022,Mao2022} employed the lifting trick \cite{Ahmed2014} and formulated blind super-resolution of point sources as a matrix recovery problem. Chen et al. exploited the low-dimensional structure of the target matrix via a vectorized Hankel lift (VHL) framework and cast blind super-resolution as a nuclear norm minimization problem \cite{Chen2022}, which can be time-consuming for large-scale scenarios. To address the computational challenge, Mao and Chen proposed an advanced projected gradient descent method via VHL named PGD-VHL based on low-rank matrix factorization in \cite{Mao2022}. 


In particular, PGD-VHL \cite{Mao2022}  formulated the following constrained optimization problem by introducing two types of regularization, that is, projection and the balancing penalty:
\begin{align*}
\min_{\mF\in\mathcal{M} }~ f(\mF)+ \frac{1}{16}\|{\mL^H\mL - \mR^H\mR}\|_F^2,
\end{align*}
where $f(\cdot)$ is the loss function constructed for blind super-resolution via VHL, $\mF=\begin{bmatrix}
 	\mL^H &\mR^H
 \end{bmatrix}^H$ for $\mL\in\C^{sn_1\times r}$  and $\mR\in\C^{n_2\times r}$, $\mathcal{M}$ denotes the set 
\begin{small}
\begin{align*}
 	\mathcal{M} = \bigg\{ \begin{bmatrix}
 		\mL\\
 		\mR\\
 	\end{bmatrix}~:~ \max_{0\leq j\leq n_1-1} \|{\mL_j}\|_F \leq \sqrt{\frac{\mu r\sigma }{n}},  \|{\mR}\|_{2,\infty} \leq  \sqrt{\frac{\mu r\sigma}{n} }\bigg\},
\end{align*}
\end{small}where $\mL_j= \mL(j s : (j+1) s - 1, :)$ is the $j$-th block of $\mL$, $\|{\mR}\|_{2,\infty}$ is the largest $\ell_2$-norm of its rows, and   $\mu,\sigma,r,s$ are parameters associated with the problem.  
\subsection{Motivation and Contributions}
\begin{figure}[!t]
		\centering		\includegraphics[width=0.8\linewidth]{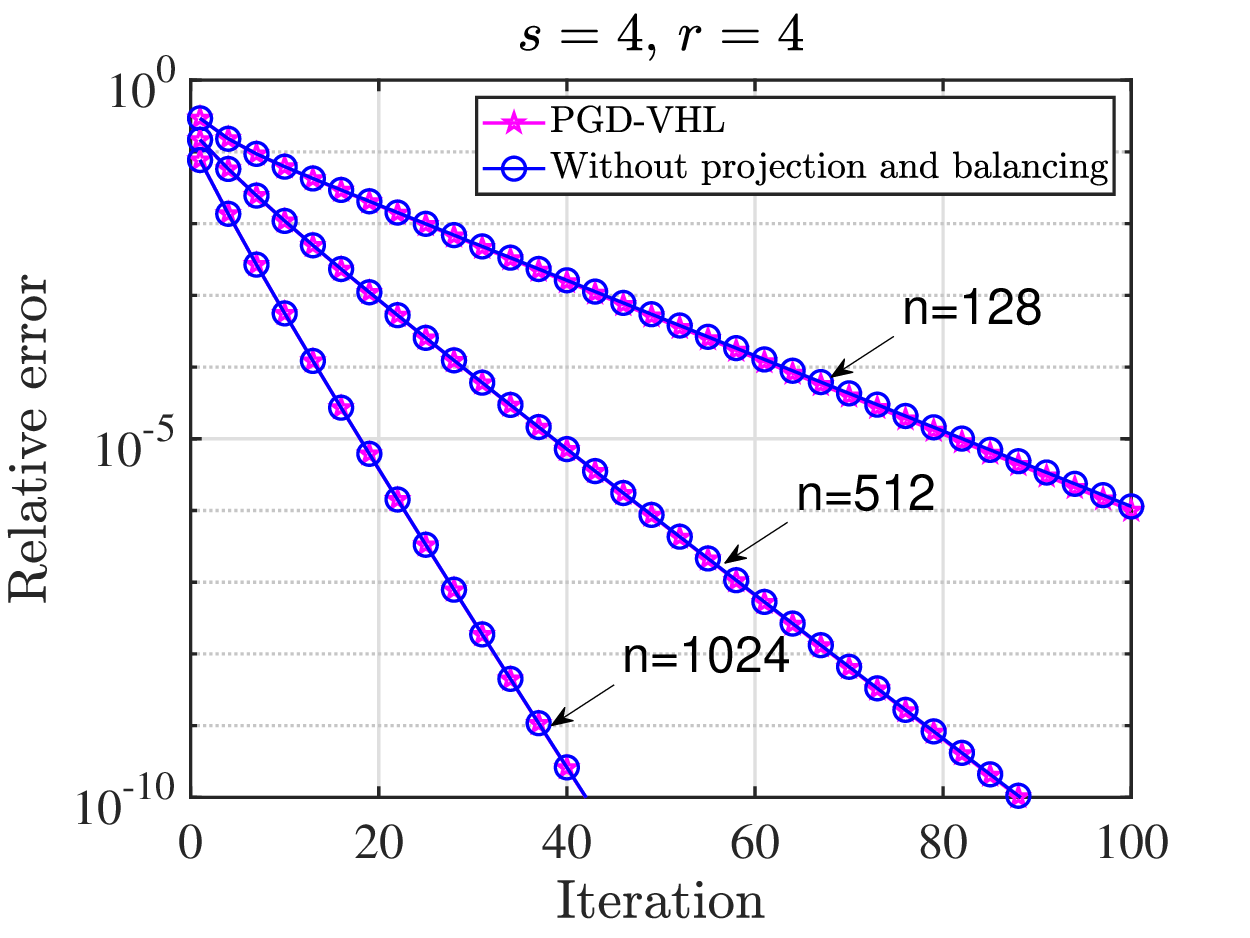} 
  		\caption{The convergence performance of PGD-VHL and vanilla gradient descent without projection and balancing, where $n$ is the length of the signal, $r$ is the number of point sources, and $s$ is the dimension of subspace the PSFs live in.}
\label{fig:intro_comp}
\end{figure}
 \begin{table*}[h!]	
		\begin{center}
			
			\caption{
   The gradient methods based on low-rank matrix factorization towards blind super-resolution.} 
			\begin{tabular}{c|c|c|c|c|c}
   \hline
				\textbf{Algorithms} & Sample complexity & Iteration complexity  & Step size & Balancing & Projection \\
	\hline			
				\hline
				PGD-VHL \cite{Mao2022}   &$O(s^2r^2\kappa^2\log^2(sn))$  &$O(s^2r^2\kappa^2\log(1/\varepsilon))$ &$O(\frac{\sigma_r(\mM_\star)}{s^2r^2\sigma_1^2(\mM_\star)})$ & $\checked$ & $\checked$ 
    \\
				VGD-VHL (Ours)  &$O(s^2r^2\kappa^3\log^2(sn))$ &$O(\kappa\log(1/\varepsilon))$ &$O(1/\sigma_1(\mM_\star))$ &$\bm{\times}$ &$\bm{\times}$ \\
				ScalGD-VHL (Ours)  &$O(s^2r^2\kappa^2\log^2(sn))$  &$O(\log(1/\varepsilon))$  &$O(1)$ &$\bm{\times}$&$\bm{\times}$ \\
    \hline
			\end{tabular}
			\label{tab:comp_algs}
		\end{center}	
	\end{table*}
However, the inclusion of two types of regularization, namely projection and the balancing penalty, proves to be unnecessary. Upon their removal, we compare the performance of this regularization-free approach with its regularized counterpart, PGD-VHL, as illustrated in Figure \ref{fig:intro_comp}. Remarkably, we find that the removal of regularization does not negatively impact performance as the two sequences converge almost identically. Moreover, removing regularization eliminates the need for tuning regularization parameters and reduces additional computation costs. These insights inspire us to consider a {new} and simpler unconstrained optimization problem without previous types of regularization:
\begin{align*}
\min_{\mF}~ f(\mF). \numberthis\label{obj function_regfree}
\end{align*} 
{Furthermore, given the burgeoning interest and successful applications of regularization-free methods across various domains \cite{Ma2021, Ma2019, Soltanolkotabi2023, Qiao2017, wu2024}, establishing a theoretical guarantee for regularization-free methods in blind supper-resolution can broaden the related applications in signal processing tasks.}
 
 In this paper, we propose two provable algorithms to solve the unconstrained optimization problem \eqref{obj function_regfree} towards blind super-resolution. We first propose a vanilla gradient descent via VHL (VGD-VHL).  
Next, we propose a scaled gradient descent variant named ScalGD-VHL to overcome the slow convergence caused by relatively small amplitudes of certain point sources. 

Our main contributions are {listed as follows:}
\begin{itemize}
    \item[1)] Two simpler but more 
efficient gradient methods based on low-rank matrix factorization and vectorized Hankel lift---VGD-VHL and ScalGD-VHL are proposed for blind super-resolution of point sources. 
In numerical results, it is shown that our algorithms converge faster and enjoy lower computational costs, with recovery performance comparable to that of prior arts. {In particular, ScalGD-VHL has the fastest convergence performance, with the lowest iteration complexity.}  

\item[2)] We establish {sharp} theoretical guarantees for our simpler methods, through {a novel less incoherence-demanding analysis} and {a convergence mechanism based on new tools}. By employing our novel and sharp analysis, we overcome the {challenges brought by the sub-optimal analysis in previous works \cite{Ma2019,Mao2022,Tong2021}.} As shown in Table \ref{tab:comp_algs}, our methods exhibit larger step size choices, lower iteration complexity, and comparable sample complexity compared to PGD-VHL \cite{Mao2022}. 

\item[3)] {Our novel less incoherence-demanding analysis 
provides sharper perturbation bounds than PGD-VHL, without the need for projection regularization. Also, this less incoherence-demanding analysis reveals deeper insights into blind super-resolution than PGD-VHL:

\emph{
    Blind super-resolution based on low-rank factorization achieves linear convergence without relying on the incoherence of its factors. 
    This problem is {more incoherence-demanding} than matrix sensing \cite{Tu2016,Ma2021}, but {less incoherence-demanding} than phase retrieval, blind deconvolution, and matrix completion \cite{Ma2019,Li2019,Tong2021,Candes2015}.} 
    
}
\end{itemize}

    





\subsection{Related work}
Recent super-resolution approaches focused on optimization-based methods, such as total-variation (TV) norm minimization \cite{Candes2013a}, atomic norm minimization (ANM) \cite{Tang2013,Bhaskar2013}, and nuclear norm minimization via enhanced Hankel matrix \cite{Chen2014}. The {blind super-resolution} problem reduces to super-resolution when the knowledge of PSFs is available. 

 Assuming the unknown PSFs live in a known low-dimensional subspace, the blind super-resolution problem can also be solved by similar optimization-based methods in super-resolution. 
 \cite{Chi2016, Yang2016,Li2020,Suliman2022} used the atomic minimization method (ANM) to solve blind super-resolution, exploiting the inherent low-dimensionality of point source signals. Inspired by the low-rank Hankel approach towards spectrally sparse signal recovery \cite{Chen2014}, a vectorized Hankel lift framework (VHL) was proposed in \cite{Chen2022}, which formulated blind super-resolution into a nuclear norm minimization problem. 
  However, the previous convex methods are computationally challenging for large-scale problems, and some fast nonconvex methods towards blind super-resolution were proposed. A fast ADMM method was developed in \cite{Ran2021} based on the atomic norm and the SDP characterization. Built on the VHL framework, \cite{Mao2022} developed a provable projected gradient descent method (PGD-VHL) based on low-rank factorization with the sample complexity as $O(s^2r^2\log^2(sn))$. Besides,  \cite{Zhu2023} proposed a fast iterative hard thresholding algorithm (FIHT-VHL) with the same sample complexity.   

In this paper, we focus on the fast nonconvex gradient methods based on low-rank factorization, which have received great interest in recent years \cite{Chi2019review}.  Numerous nonconvex gradient methods incorporate regularizations such as projection onto a constraint set and balancing terms to promote desired properties \cite{Tu2016,Zheng2016,Ge2017,Cai2018,Mao2022}. {Nonetheless, recent work has shown that these regularizations are not essential \cite{Ma2019,Chenj2020, Ma2021,Soltanolkotabi2023,Tong2021}. In what follows, we will introduce these regularization-free methods and compare with our work.}

 {\textbf{Balancing-free methods.}
 Balancing-free methods have been studied in matrix sensing \cite{Ma2021,Soltanolkotabi2023,Tong2021}, where the implicit balancing mechanism was explained. In \cite{Soltanolkotabi2023}, balancing-free gradient descent from small random initialization was studied. However, their analysis heavily depends on the Restricted Isometry Property (RIP), rendering it inapplicable to our problem that lacks this property. The balancing-free gradient methods employing spectral initialization were investigated in \cite{Ma2021,Tong2021}  and their distance metrics can be applied to our problem. Nevertheless, their analysis, in conjunction with the results of blind super-resolution in PGD-VHL  \cite{Mao2022}, leads to higher sample complexity $O(s^5r^6)$, smaller step size 
 $O(\frac{\sigma_r(\mM_\star)}{s^2r^2\sigma_1^2(\mM_\star)})$ for VGD-VHL, and slower convergence rate dependent on $\kappa$ for ScalGD-VHL. 
 In contrast, our novel analysis establishes sharper theoretical results for the proposed balancing-free methods. 
 
 }

 {\textbf{Projection-free methods.} Projection-free methods have appeared in various statistical estimation problems, 
  such as matrix sensing \cite{Tu2016,Ma2021}, phase retrieval, blind deconvolution \cite{Ma2019,Li2019}, and matrix completion \cite{Ma2019,Chenj2020}. 
  In phase retrieval, blind deconvolution \cite{Ma2019,Chen2019aGlobalpr,Li2019}, and matrix completion \cite{Ma2019,Chenj2020}, projection-free methods harness implicit regularization mechanisms to ensure the incoherence property, thereby eliminating the need for explicit projection steps. 
 In contrast, our methods eliminate the need for projection, because the blind super-resolution problem does not need the incoherence of factors to exhibit linear convergence. While gradient methods for matrix sensing in \cite{Tu2016,Ma2021} are projection-free without requirement on incoherence properties, our methods are distinct in their requirement for the incoherence property of the ground truth.  
 } 

When the target matrix is ill-conditioned, 
gradient methods based on low-rank factorization converge slowly. To overcome such issues, scaled gradient descent methods (ScaledGD) \cite{Tong2021,Tong2021a} were proposed to accelerate convergence. Similarly, the matrix constructed via VHL might be ill-conditioned in blind super-resolution, particularly when amplitudes of certain point sources are relatively small. Inspired by ScaledGD, we develop a scaled gradient descent via VHL named ScalGD-VHL to accelerate estimation. 
{However, the convergence analysis from \cite{Tong2021} cannot be generalized to blind super-resolution problem. Unlike matrix sensing in \cite{Tong2021}, blind super-resolution does not satisfy RIP. Meanwhile,
the key hammer Lemma~36, which is pivotal for the matrix completion problems in \cite{Tong2021}, does not hold for blind super-resolution. In addition, our problem does not require a projection step for the incoherence property in \cite{Tong2021}. In this work, we establish the sharp theoretical guarantee for ScalGD-VHL via a less incoherence-demanding analysis tailored for it and a convergence mechanism based on new tools.} 

\noindent \textbf{Notations}. 
We denote vectors with bold lowercase letters, matrices with bold uppercase letters, and operators with calligraphic letters. For matrix $\vZ$, we use $\vZ^T$, $\vZ^H$, $\bar{\vZ}$, $\norm{\vZ}$, and $\norm{\vZ}_F$ to denote its transpose, conjugate transpose, complex conjugate, spectral norm, and Frobenius norm, respectively. 
Define the inner product of two matrices $\vZ_1$ and $\vZ_2$ as $\la\vZ_1,\vZ_2\ra=\mathrm{trace}(\vZ_1^H\vZ_2)$. 
We denote $\otimes$ and $\odot$ as the Kronecker product and Hadamard product, respectively. 
The identity operator is denoted as $\I$. The adjoint of the operator $\A$ is denoted as $\A^*$. $\Real(\cdot)$ denotes the real part of a complex number. Let $[n]$ denote the set $\{0,1,...,n-1\}$. Denote $w_i$ as the {cardinality} of the set  $\mathcal{W}_i=
 \{(j,k) | j+k=i, 0\leq j \leq n_1-1, 0\leq k\leq n_2-1 \} $ for $i\in[n]$. We further define an operator $\D:\C^{s\times n}\rightarrow\C^{s\times n}$ as $\D(\mZ)=\begin{bmatrix}
 		\sqrt{w_0}\vz_0 & \cdots & \sqrt{w_{n-1}} \vz_{n-1}
 	\end{bmatrix}$ where $\vz_i$ denotes the $i$-th column of $\mZ$.  
  
\section{Problem formulation and Algorithms} \label{sec:pb_formul}
 \subsection{Problem formulation}
 Let $x(t)$ be a point source signal, which is a weighted sum of $r$ spikes from different locations
 \begin{align}
 	\label{eq: model form}
 	x(t) = \sum_{k=1}^r d_k\delta(t-\tau_k),
 \end{align}
 where $\delta(t-\tau_k)$ is the $k$-th spike at the continuous location of $\tau_k$, and $d_k\in\C$ is the amplitude of it. Denote the unknown point spread functions (PSFs) as $\{\bar{g}_k(t)\}_{k=1}^r$ which depends on the locations of the spikes, and we obtain the convolution between $x(t)$ and the PSFs as
 \begin{align}
 	\label{eq: sample in time domain}
 	\bar{y}(t) = \sum_{k=1}^r d_k\delta(t-\tau_k) * \bar{g}_k(t) = \sum_{k=1}^r d_k\cdot \bar{g}_k(t-\tau_k).
 \end{align} 
 {After taking the Fourier transform of \eqref{eq: sample in time domain} and sampling, we obtain the  measurements as:}
 \begin{align}
 	\label{observation model}
 	y_j = \sum_{k=1}^r d_k e^{-2\pi \imath \tau_k\cdot j} \vg_k(j),\quad j=0,\cdots n-1.
 \end{align}
Our task is to estimate the unknown PSFs $ \{\vg_k\}_{k=1}^r$, the amplitudes and locations $\{d_k, \tau_k\}_{k=1}^r$ simultaneously from \eqref{observation model}. 
 However, the previous 
 problem is ill-posed, as the number of unknowns in \eqref{observation model} is larger than the equation number $n$. As pointed out in \cite{Ahmed2014, Chi2016, Yang2016,Chen2022,Suliman2022}, this ill-posed issue is avoided by assuming that the PSFs $\{\vg_k\}_{k=1}^r$ lie in a known low-dimensional subspace, given as
 \begin{align}
 	\label{eq: low-dim of g}
 	\vg_k = \mB\vh_k,
 \end{align}
 where {$\mB\in \C^{n\times s}$ is a known low-rank matrix with $s\ll n$, whose columns represent the low-dimensional subspace to generate the PSFs $\{\vg_k\}_{k=1}^r$ with unknown coefficient vector $\vh_k\in\C^{s}$}. Let $\va_{\tau_k} = \lsb 1,e^{-2\pi \imath \tau_k},\cdots,e^{-2\pi \imath \tau_k\cdot(n-1)}\rsb^T$ and we define $\mX_\star = \sum_{k=1}^r d_k\vh_k\va_{\tau_k}^T$, then we reformulate \eqref{observation model} as 
 \begin{align}
 	\label{eq: lifting}
 	y_j = \la\vb_j\ve_j^T, \mX_\star \ra,\quad j=0,\cdots,n-1,
 \end{align}
 by combining the subspace assumption \eqref{eq: low-dim of g} and the lift trick \cite{Ahmed2014}, where
 $\vb_j\in\C^{s}$ is the $j$-th column vector of $\mB^T$, $\ve_j$ is the $j$-th standard basis of $\R^n$. Furthermore, we rewrite  \eqref{eq: lifting} into a compact form
 \begin{align}
 	\label{eq: compact form}
 	\vy = \A(\mX_\star),
 \end{align}
 where $\A:\C^{s\times n}\rightarrow \C^n$ is the linear operator and $\A^{*}(\vy) = \sum_{j=0}^{n-1}y_j\vb_j\ve_j^T$ is the corresponding adjoint linear operation. Let $\mD = \diag\left(\sqrt{w_0} ,\cdots , \sqrt{w_{n-1}} \right)$, {and one can derive that $\mD\A(\mX_\star) = \A\D(\mX_\star)$}. 
 Then we rewrite \eqref{eq: compact form} as 
 \begin{align}
 	\label{weighted measurements}
 	\mD\vy = \A\D(\mX_\star).
 \end{align}	
 After the matrix $\mX_\star$ is recovered, we apply spatial smoothing MUSIC \cite{Chen2022,Evans1982} 
 to estimate locations $\{\tau_k\}_{k=1}^r$, and solve a least square system \cite{Yang2016} to obtain amplitudes and coefficients $\{d_k,\vh_k\}_{k=1}^r$.  
 Denote $\H$ as the vectorized Hankel lift operator, 
 mapping a matrix $\mX\in\C^{s\times n}$ into an $sn_1\times n_2$ matrix, 

 \vspace{-0.3cm}
 {\begin{small} 
 \begin{align}
 	\label{vhl}
 	\H(\mX) = \begin{bmatrix}
 		\vx_0 & \vx_1 &\cdots & \vx_{n_2-1}\\
 		\vx_1 & \vx_2 &\cdots & \vx_{n_2}\\
 		\vdots& \vdots &\ddots &\vdots\\
 		\vx_{n_1-1}&\vx_{n_1}&\cdots &\vx_{n-1}\\
 	\end{bmatrix}\in\C^{sn_1\times n_2},
 \end{align}
\end{small}}

\noindent where $\vx_i\in\C^s$ is the $i$-th column of $\mX$ and $n_1+n_2 = n+1$. Let $\mM_\star=\H(\mX_\star)$ and it is a rank-$r$ matrix when $r \ll \min(sn_1, n_2)$ from \cite{Chen2022}. Then a rank constraint weighted least square problem is constructed in \cite{Mao2022} to estimate $\mX_\star$:
 \begin{align}
 	\label{pb:original obj}
 	\min_{\mX}~\frac{1}{2}\|{\mD\vy - \mathcal{A}\mathcal{D}(\mX)}\|^2 ~\text{ s.t. } \rank(\mathcal{H}(\mX)) = r.
 \end{align}
{Denote} $\G=\H\D^{-1}$, {and set} $\mM = \H(\mX) = \G\D(\mX)$. {Then one can check that $(\mathcal{I} - \G\G^*)(\mM) = \bzero$.} 
 Besides, we parameterize $\mM = \mL\mR^H$ to eliminate the rank constraint based on  Burer--Monteiro factorization \cite{Burer2003}.  Making the substitution that $\vy\leftarrow \mD\vy$  in \eqref{pb:original obj}, we reformulate \eqref{pb:original obj} into the following vectorized Hankel constraint optimization problem: 

 \begin{align}
 	\label{pb:original obj 2}
 	\min_{{\color{black}{\mL,\mR}}}~\frac{1}{2}\|{\vy - \A\G^*(\mL\mR^H)}\|_2^2 \text{~~ s.t. ~} (\I - \G\G^*)(\mL\mR^H) = \bzero.
 \end{align}
Finally, we apply a penalized version of \eqref{pb:original obj 2} to estimate $\mX_\star$: 
\begin{align*}
\min_{\mF} f(\mF)=\frac{1}{2} \|{\vy - \A\G^*(\mL\mR^H)}\|_2^2 {+} \frac{1}{2}\|{\left( \I - \G\G^* \right)(\mL\mR^H)}\|_F^2,\numberthis\label{pb:obj function}
\end{align*}
where $\mF=\begin{bmatrix}
 	\mL^H &\mR^H
 \end{bmatrix}^H\in\C^{(sn_1+n_2)\times r}$. {The weighting parameter in \eqref{pb:obj function} is fixed to assure $f(\mF)$ is an unbiased estimator of $f_0(\mF)$ in \eqref{eq:f0} such that $\E(f(\mF))=f_0(\mF)$.}
Unlike PGD-VHL \cite{Mao2022}, we derive a simpler optimization problem without balancing and projection. It is safe to remove them from the performance in Fig. \ref{fig:intro_comp}. In addition, such regularization is used for the ease of analysis of PGD-VHL \cite{Mao2022}, while both are unnecessary for the analysis of our algorithms. 
\subsection{Algorithms}\label{subsec:alg}
 We propose two gradient descent methods to solve the previous unconstrained problem \eqref{pb:obj function}. Both algorithms start from the spectral initialization $\T_r(\G\A^*(\vy))$ where $\T_r(\cdot)$ 
 performs top-$r$ SVD of a matrix and $\A^{*}$ is the adjoint of $\A$.  The first algorithm is vanilla gradient descent via vectorized Hankel lift, and we name it VGD-VHL, seeing Alg. \ref{alg:VGD}.
The updating rule of VGD-VHL is:
 \begin{align*}
&\mL^{k+1}=\mL^{k}-\eta\nabla_{\mL} f(\mF^{k}),\\
&\mR^{k+1}=\mR^{k}-\eta\nabla_{\mR} f(\mF^{k}).
\end{align*}
Under Wirtinger calculus, the gradient of  $f(\vF)$ is:
\begin{align*}
\nabla_{\mL} f(\mF)&{=}\big(\G\A^*(\A\G^*(\mL\mR^H)-\vy)+(\I-\G\G^*)(\mL\mR^H)\big)\mR,\\
\nabla_{\mR} f(\mF)&{=}\big(\G\A^*(\A\G^*(\mL\mR^H)-\vy)+(\I-\G\G^*)(\mL\mR^H)\big)^H\mL.
\end{align*}   
Inspired by \cite{Tong2021}, we propose a scaled gradient descent variant named ScalGD-VHL that accelerates convergence when the vectorized Hankel lifted matrix $\mM_\star=\H(\mX_\star)$ is ill-conditioned, seeing Alg. \ref{alg:ScaleGD}. ScalGD-VHL proceeds as follows:
  \begin{align*}
&\mL^{k+1}=\mL^{k}-\eta\nabla_{\mL} f(\mF^{k})\big((\mR^{k})^H\mR^{k}\big)^{-1},\\
&\mR^{k+1}=\mR^{k}-\eta\nabla_{\mR} f(\mF^{k})\big((\mL^{k})^H\mL^{k}\big)^{-1},
\end{align*}
which preconditions the gradient, enabling a better search direction and larger step size.
\begin{algorithm}[t]
\caption{{VGD-VHL:} Vanilla Gradient Descent via Vectorized Hankel Lift }
\label{alg:VGD}
\begin{algorithmic} 
\Statex \textbf{Initialization:} $\vM^{0}=\T_r \(\G\A^*(\vy)\)=\vU^0\vSS^0(\vV^0)^H$,\\
$\mL^0=\vU^0(\vSS^0)^{\frac{1}{2}}$,  $\mR^0=\vV^0(\vSS^0)^{\frac{1}{2}}$.
\For{$k=0,1,\cdots,K$}\\
\quad $\mL^{k+1}=\mL^{k}-\eta\nabla_{\mL} f(\mL^{k},\mR^{k})$\\
\quad$\mR^{k+1}=\mR^{k}-\eta\nabla_{\mR} f(\mL^{k},\mR^{k})$
\EndFor
\Statex \textbf{Output:} $(\mL^{K},\vR^K)$, $\mX^K = \D^{-1}\G^*(\mL^{K}(\mR^K)^{H})$.
\end{algorithmic}
\end{algorithm}

\begin{algorithm}[t]
\caption{{ScalGD-VHL:} Scaled Gradient Descent via Vectorized Hankel Lift}
\label{alg:ScaleGD}
\begin{algorithmic} 
\Statex \textbf{Initialization:} $\vM^{0}=\T_r \(\G\A^*(\vy)\)=\vU^0\vSS^0(\vV^0)^H$,\\
$\mL^0=\vU^0(\vSS^0)^{\frac{1}{2}}$,  $\mR^0=\vV^0(\vSS^0)^{\frac{1}{2}}$.
\For{$k=0,1,\cdots,K$}\\
\quad $\mL^{k+1}=\mL^{k}-\eta\nabla_{\mL} f(\mL^{k},\mR^{k})\big((\mR^{k})^H\mR^{k}\big)^{-1}$\\
\quad$\mR^{k+1}=\mR^{k}-\eta\nabla_{\mR} f(\mL^{k},\mR^{k})\big((\mL^{k})^H\mL^{k}\big)^{-1}$
\EndFor
\Statex \textbf{Output:} $(\mL^{K},\vR^K)$, $\mX^K = \D^{-1}\G^*(\mL^{K}(\mR^K)^{H})$.
\end{algorithmic}
\end{algorithm}

\section{Theoretical guarantees} \label{sec:theoretical-results}
{In this section, we first introduce basic preliminaries. Then we present the challenges to establishing sharp theoretical guarantees and give the theoretical results for our algorithms.} 
\subsection{Preliminaries}
We make the assumption that $\mM_\star = \H(\mX_\star)$ is $\mu_1$-incoherent: 
 \begin{assumption}
 	\label{assumption1}
 	Let $\mM_\star= \mU_\star\vSS_\star\mV_\star^H$ be the singular value decomposition of $\mM_\star$, where $\mU_\star\in\C^{sn_1\times r}, \vSS_\star\in\R^{r\times r}$ and $\mV_\star\in\C^{n_2\times r}$. Denote $\mU_\star^H = \begin{bmatrix}
 		(\mU_{\star}^0)^H &\cdots &(\mU_{\star}^{n_1-1})^H
 	\end{bmatrix}^H$, where the $j$-th block of $\mU_\star$ is $\mU_{\star}^j= \mU_\star(j s : (j+1) s - 1, :)$ for $j=0,\cdots n_1-1$. We say the matrix $\mM_\star$ is $\mu_1$-incoherent when $\mU_\star$ and $\mV_\star$ satisfy that
 	\begin{align*}
 		\max_{0\leq j\leq n_1-1} \|{\mU_\star^j}\|_F^2 \leq \frac{\mu_1 r}{n} \text{ and }\max_{0\leq k \leq n_2-1} \|{\ve_k^T\mV_\star}\|_2^2 \leq \frac{\mu_1 r}{n}
 	\end{align*}
 	for some positive constant $\mu_1 $.
 \end{assumption}
This incoherence property has also appeared in \cite{Cai2018,Candes2009,Chen2022,Chen2014}. It has been demonstrated in \cite{Chen2022} that when the minimum wrap-up distance between the locations of point sources is greater than approximately $2/n$,  the matrix $\vM_\star=\H(\mX_\star)$ is $\mu_1$-incoherent. 

 \begin{assumption} 
 	\label{assumption2}
 	The column vectors $\{\vb_i\}_{i=0}^{n-1}$ of $\mB^H$ are independently and identically generated from a distribution $F$, equipped with the following conditions:
 	\begin{align}
 		\label{eq: isotropy property}
 		\E[\vb_i\vb_i^H] &= \mI_s, \quad i=0,\cdots, n-1,\\
 		\label{eq:incoherence}
 		\max_{0\leq \ell\leq s-1}  | \vb_i[\ell]|^2 &\leq \mu_0, \quad i=0,\cdots, n-1.
 	\end{align}
 \end{assumption}
    The previous properties 
    are standard in blind super-resolution problem \cite{Chi2016,Yang2016,Chen2022,Suliman2022,Mao2022}. We also assume $\mu_0 s\geq 1$ as discussed in \cite{Chi2016,Yang2016} for ease of analysis, which can be ensured by choosing $\mu_0\geq1$.

    In Alg. \ref{alg:VGD} and Alg. \ref{alg:ScaleGD}, the error between the current estimate $\mX$ and the target data matrix $\mX_\star$ can be bounded by
    \begin{align*}
       \|\mX{-}\mX_\star\|_F^2=\|\D^{-1}\G^{*}\(\mL\mR^H{-}\mM_\star\)\|_F^2\leq\|\mL\mR^H{-}\mM_\star\|_F^2, \numberthis \label{eq:estimate_upbd}
    \end{align*}
as $\|\G^{*}\|\leq 1$, $\|\D^{-1}\|\leq 1$. 
\
Define 
    \begin{align*}
        f_0(\mF)=\frac{1}{2}\|\mL\mR^H-\mM_\star\|_F^2. \numberthis \label{eq:f0}
    \end{align*} 
Consequently, establishing the linear convergence of $f_0(\mF)$ is enough to make the estimate $\mX$ approach the target matrix $\mX_\star$. Note that we don't rely on establishing the convergence of a distance metric designed for factors matrices as in \cite{Cai2018,Mao2022}. 
    
    Besides, we rewrite that $f(\mF)=f_0(\mF)+f_1(\mF)$ and denote $f_1(\mF)$ as 
\begin{align*}
    f_1(\mF)=\frac{1}{2} \|{\vy - \A\G^*(\mL\mR^H)}\|^2- \frac{1}{2}\|\G\G^*(\mL\mR^H-\mM_\star)\|_F^2\numberthis\label{eq:f1}.
\end{align*}

To help analyze the convergence of ScalGD-VHL, we define a scaled norm inspired by \cite{Zhang2023}, and generalize it to the asymmetric matrix case. {Unlike VGD-VHL}, the convergence analysis for ScalGD-VHL relies on results under the scaled norm, not the Frobenius norm.  
\begin{definition}[{The scaled norm}]
  For $\mV\in\C^{(sn_1+n_2)\times r}$ and $\mV=\begin{bmatrix}
    \mV_L \\
    \mV_R
\end{bmatrix}$ where $\mV_L\in\C^{sn_1\times r}$ and $\mV_R\in\C^{n_2\times r}$, define
\begin{align*}
    \|\mV\|_{\mL,\mR}=\sqrt{\|\mV_L(\mR^{H}\mR)^{\frac{1}{2}}\|_F^2+\|\mV_R(\mL^{H}\mL)^{\frac{1}{2}}\|_F^2},
\end{align*}
and the corresponding dual norm:
\begin{align*}
    \|\mV\|_{\mL,\mR}^{*}=\sqrt{\|\mV_L(\mR^{H}\mR)^{-\frac{1}{2}}\|_F^2+\|\mV_R(\mL^{H}\mL)^{-\frac{1}{2}}\|_F^2}.
\end{align*}  
\end{definition}
\subsection{{Challenges} and main results}
{We first introduce the challenges to establish sharp theoretical guarantees for our methods in blind super-resolution. 

A straightforward combination of the analysis in balancing-free gradient methods \cite{Ma2021,Tong2021} and the results in PGD-VHL \cite{Mao2022} leads to worse theoretical results for VGD-VHL and ScalGD-VHL. For VGD-VHL, the previous analysis yields {higher sample complexity} as $O(s^5r^6)$ and {smaller step size as $O(\frac{\sigma_r(\mM_\star)}{s^2r^2\sigma_1^2(\mM_\star)})$}. The high sample complexity stems from its dependence on the step size, which is $O(1/\eta^2)$ according to the analysis in \cite{Ma2021}. 
Thus a small step size $\eta$ as in \cite{Mao2022} with $\eta = O(\frac{\sigma_r(\mM_\star)}{s^2r^2\sigma_1^2(\mM_\star)})$ directly leads to a high sample complexity. For ScalGD-VHL, the previous analysis yields {a slow convergence rate dependent on $\kappa$}. This is due to the interplay between the incoherence property of the factors and the preconditioner within ScalGD-VHL. This interaction dictates a step-size selection of $O(1/\kappa^2)$, yielding a slow convergence rate that correlates with $\kappa$.


We overcome the previous challenges via a convergence mechanism based on new tools and a novel less incoherence-demanding analysis.
The convergence mechanism is based on the Polyak-Łojasiewicz (PL) inequality \cite{Polyak1963,Lojasiewicz1963} and smoothness condition under the (scaled) Frobenius norm, seeing Lemma \ref{lem:pl-fro} and \ref{lem:pl-scale},  Lemma~\ref{lem:smoothness-fro} and \ref{lem:smoothness-scale} for details. This provides new perspectives for the convergence of balancing-free gradient methods in asymmetric matrix estimations. Besides, our less incoherence-demanding analysis removes the dependence on the incoherence of factors, which derives sharper perturbation bounds than \cite{Mao2022}, seeing proofs in Lemma~\ref{lem:upbd-perturbgd-fro} and \ref{lem:upbd-gd-scale} for our analysis. Last, we emphasize that the {sharp} perturbation bound $ \|\nabla f_1(\mF)\|_{\mL,\mR}^{*}\lesssim \delta\|\nabla f_0(\mF)\|_{\mL,\mR}^{*}$ under the (dual) scaled norm in Lemma~\ref{lem:upbd-gd-scale} is {firstly established} for blind super-resolution, which is new to the best of our knowledge.}
   
    Next, we provide the theoretical guarantees for VGD-VHL and ScalGD-VHL. {The proofs of the following theorems are deferred to Appendix \ref{apd:pf-VGD} and Appendix \ref{apd:pf-ScaleGD}. } 
\begin{theorem}[Exact recovery of VGD-VHL]\label{thm:VGD-VHL}
    With probability at least $1-c_1(sn)^{-c_2}$, the iterate in Alg. \ref{alg:VGD}
satisfies    
    \begin{align*}
        \|\mX^k-\mX_\star\|_F^2\leq \Big(1-\frac{\eta\sigma_r(\mM_\star)}{28}\Big)^{k}\sigma_r^2(\mM_\star)
    \end{align*}
    provided $\eta\leq\frac{1}{25\sigma_1(\mM_\star)}$ and $n\geq C_1\delta_0^{-2} \mu_0^2\mu_1s^2r^2\kappa^3\log^2(sn)$, where $\delta_0\leq\frac{2}{25}$, $C_1$ is a universal constant and $\kappa=\frac{\sigma_1(\mM_\star)}{\sigma_r(\mM_\star)}$.
\end{theorem}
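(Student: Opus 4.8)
The plan is to collapse the whole argument onto the clean loss $f_0$ of \eqref{eq:f0}. Since \eqref{eq:estimate_upbd} gives $\|\mX^k-\mX_\star\|_F^2\le\|\mL^k(\mR^k)^H-\mM_\star\|_F^2=2f_0(\mF^k)$, it suffices to prove the contraction $f_0(\mF^k)\le\tfrac12\big(1-\eta\sigma_r(\mM_\star)/28\big)^k\sigma_r^2(\mM_\star)$. I would run this as an induction on $k$ that simultaneously maintains the invariant that $\mF^k$ stays in a local basin of attraction, concretely $\|\mL^k(\mR^k)^H-\mM_\star\|_F\le\delta_0\sigma_r(\mM_\star)$ together with approximate balancedness of the two factors, so that the regularity lemmas remain applicable at every step.

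For the base case I would analyze the spectral initialization $\mM^0=\T_r(\G\A^*(\vy))$. Because $\G\A^*(\vy)=\G\A^*\A\G^*(\mM_\star)$ concentrates around $\mM_\star$, an operator-norm bound on $(\I-\G\A^*\A\G^*)(\mM_\star)$ — obtained from Assumption \ref{assumption2}, the ground-truth incoherence of Assumption \ref{assumption1}, and a matrix-Bernstein argument — places $\mM^0$, and hence the \emph{exactly balanced} factors $\mL^0=\mU^0(\vSS^0)^{1/2}$, $\mR^0=\mV^0(\vSS^0)^{1/2}$, inside the basin. This is the step that pins down the sample-complexity threshold $n\ge C_1\delta_0^{-2}\mu_0^2\mu_1 s^2r^2\kappa^3\log^2(sn)$; the factor $\kappa^3$ arises from converting the spectral-norm initialization error into a Frobenius basin radius of order $\delta_0\sigma_r(\mM_\star)$. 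For the inductive step I would invoke the descent-plus-PL mechanism on $f_0$. Using the smoothness of $f_0$ (Lemma \ref{lem:smoothness-fro}) with constant $l\asymp\sigma_1(\mM_\star)$ and substituting the VGD update $\mF^{k+1}-\mF^k=-\eta\nabla f(\mF^k)=-\eta\big(\nabla f_0(\mF^k)+\nabla f_1(\mF^k)\big)$ yields
\begin{align*}
f_0(\mF^{k+1})&\le f_0(\mF^k)-\eta\|\nabla f_0(\mF^k)\|_F^2-\eta\Real\langle\nabla f_0(\mF^k),\nabla f_1(\mF^k)\rangle\\
&\quad+\tfrac{l\eta^2}{2}\|\nabla f_0(\mF^k)+\nabla f_1(\mF^k)\|_F^2.
\end{align*}
The two terms carrying $\nabla f_1$ are absorbed by the sharp perturbation bound $\|\nabla f_1(\mF^k)\|_F\lesssim\delta_0\|\nabla f_0(\mF^k)\|_F$ of Lemma \ref{lem:upbd-perturbgd-fro} (plus Cauchy–Schwarz on the cross term). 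With $\eta\le 1/(25\sigma_1(\mM_\star))$ and $\delta_0\le 2/25$, the coefficient multiplying $-\|\nabla f_0(\mF^k)\|_F^2$ stays bounded below by a positive constant, leaving $f_0(\mF^{k+1})\le f_0(\mF^k)-c\,\eta\|\nabla f_0(\mF^k)\|_F^2$. The PL inequality (Lemma \ref{lem:pl-fro}), whose constant is $\asymp\sigma_r(\mM_\star)$ inside the basin, then converts this into the geometric contraction $f_0(\mF^{k+1})\le\big(1-\eta\sigma_r(\mM_\star)/28\big)f_0(\mF^k)$ after optimizing the numerical constants, and since the product distance decreases the basin invariant is preserved, closing the induction.

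The main obstacle is the perturbation bound $\|\nabla f_1(\mF)\|_F\lesssim\delta_0\|\nabla f_0(\mF)\|_F$ in tandem with maintaining the basin invariant. Unlike phase-retrieval, blind-deconvolution, or matrix-completion style proofs, I want to avoid \emph{any} incoherence hypothesis on the iterate factors $\mL^k,\mR^k$ — this is exactly the less incoherence-demanding point that lets us discard the projection step — so the fluctuation of the random operator $\A\G^*$ must be controlled using the incoherence of the ground truth $\mM_\star$ alone, via a decomposition of $\nabla f_1$ whose pieces are each bounded by the concentration of $\I-\G\A^*\A\G^*$ restricted to the relevant low-dimensional (tangent-type) subspaces. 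A secondary difficulty, inherent to being balancing-free, is that I cannot assume $\mL^k,\mR^k$ remain balanced by fiat: I must establish implicit balancing, i.e. that the vanilla gradient update keeps $\|(\mL^k)^H\mL^k-(\mR^k)^H\mR^k\|$ small starting from the exactly balanced initialization, so that the PL constant $\asymp\sigma_r(\mM_\star)$ stays valid throughout and the induction does not degrade.
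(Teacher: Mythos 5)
Your proposal follows essentially the same route as the paper's proof: reduce to the surrogate loss $f_0$ via \eqref{eq:estimate_upbd}, run an induction that simultaneously maintains the Frobenius basin invariant and the approximate balancedness of the factors (the paper's Lemma~\ref{lem:approx-balance} plus a telescoping bound on $\eta^2\|\nabla f(\mF^k)\|_F^2$), and close each step by combining local smoothness (Lemma~\ref{lem:smoothness-fro}), the perturbation bound $\|\nabla f_1\|_F\lesssim\delta_0\|\nabla f_0\|_F$ (Lemma~\ref{lem:upbd-perturbgd-fro}), and the PL inequality (Lemma~\ref{lem:pl-fro}), with the spectral initialization supplying the base case. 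The only minor imprecision is the accounting of the $\kappa^3$ factor: in the paper it arises from the $\kappa^2$ in the initialization lemma combined with the basin radius being shrunk to $\tfrac{\delta_0}{20\sqrt{\kappa\mu_0 s}}\sigma_r(\mM_\star)$ (the extra $\sqrt{\kappa}$ is needed so the perturbation bound transfers from the scaled norm to the Frobenius norm), rather than purely from the spectral-to-Frobenius conversion, but this does not change the argument.
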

\begin{theorem}[Exact recovery of ScalGD-VHL]\label{thm:ScalGD-VHL}
   With probability at least $1-c_1(sn)^{-c_2}$, the iterate in Alg. \ref{alg:ScaleGD}
satisfies    
    \begin{align*}
        \|\mX^k-\mX_\star\|_F^2\leq \Big(1-\frac{\eta}{25}\Big)^{k}\sigma_r^2(\mM_\star)
    \end{align*}
    provided $\eta\leq\frac{1}{4}$ and $n\geq C_2\delta_0^{-2} \mu_0^2\mu_1s^2r^2\kappa^2\log^2(sn)$, where $\delta_0\leq\frac{1}{10}$, $C_2$ is a universal constant.
\end{theorem}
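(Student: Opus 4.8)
The plan is to reduce everything to the linear decay of $f_0(\mF^k)=\frac12\|\mL^k(\mR^k)^H-\mM_\star\|_F^2$, since by \eqref{eq:estimate_upbd} we have $\|\mX^k-\mX_\star\|_F^2\le 2f_0(\mF^k)$; it then suffices to prove $f_0(\mF^{k})\le(1-\eta/25)^k f_0(\mF^0)$ together with $f_0(\mF^0)\le\tfrac12\sigma_r^2(\mM_\star)$. I would argue by induction over an invariant basin of attraction defined by $\|\mL^k(\mR^k)^H-\mM_\star\|_F\le\delta_0\sigma_r(\mM_\star)$. The base case is handled by the spectral initialization: using the stated sample complexity $n\ge C_2\delta_0^{-2}\mu_0^2\mu_1 s^2r^2\kappa^2\log^2(sn)$, a concentration bound on $\G\A^*(\vy)-\mM_\star=(\G\A^*\A\G^*-\G\G^*)(\mM_\star)$ together with a standard Weyl/Wedin-type best rank-$r$ perturbation argument for $\T_r(\G\A^*(\vy))$ places $\mF^0$ inside the basin and yields $f_0(\mF^0)\le\tfrac12\delta_0^2\sigma_r^2(\mM_\star)\le\tfrac12\sigma_r^2(\mM_\star)$.

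For the inductive step the contraction is driven by the convergence mechanism under the scaled norm. Writing the ScalGD-VHL update as $\mF^{k+1}-\mF^k=-\eta\,\mathcal P_k(\nabla f(\mF^k))$, where $\mathcal P_k$ denotes right-multiplication of the two blocks by $((\mR^k)^H\mR^k)^{-1}$ and $((\mL^k)^H\mL^k)^{-1}$, the smoothness of $f_0$ under the scaled norm (Lemma~\ref{lem:smoothness-scale}) gives $f_0(\mF^{k+1})\le f_0(\mF^k)+\Real\la\nabla f_0(\mF^k),\mF^{k+1}-\mF^k\ra+\frac{l_s}{2}\|\mF^{k+1}-\mF^k\|_{\mL^k,\mR^k}^2$. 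The crucial identity is that the scaled update pairs with $\nabla f_0$ to reproduce the dual scaled norm, $\Real\la\nabla f_0,\mathcal P_k(\nabla f_0)\ra=(\|\nabla f_0(\mF^k)\|_{\mL^k,\mR^k}^{*})^2$, while the residual is governed by the perturbation $\nabla f_1=\nabla f-\nabla f_0$, whose blocks are $(\G\A^*\A\G^*-\G\G^*)(\mL\mR^H-\mM_\star)$ multiplied by $\mR$ (resp.\ its conjugate by $\mL$). A Cauchy--Schwarz inequality in the scaled inner product, combined with the sharp bound $\|\nabla f_1(\mF^k)\|_{\mL^k,\mR^k}^{*}\lesssim\delta_0\|\nabla f_0(\mF^k)\|_{\mL^k,\mR^k}^{*}$ of Lemma~\ref{lem:upbd-gd-scale}, controls both the cross term and the step size $\|\mF^{k+1}-\mF^k\|_{\mL^k,\mR^k}^2=\eta^2(\|\nabla f(\mF^k)\|_{\mL^k,\mR^k}^{*})^2$.

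Collecting these estimates and choosing $\eta\le\frac14$ and $\delta_0\le\frac1{10}$ so the net coefficient stays positive, one obtains a one-step bound $f_0(\mF^{k+1})\le f_0(\mF^k)-c\,\eta\,(\|\nabla f_0(\mF^k)\|_{\mL^k,\mR^k}^{*})^2$ for an absolute constant $c$. Here I would invoke the PL inequality under the scaled norm (Lemma~\ref{lem:pl-scale}), $\tfrac12(\|\nabla f_0(\mF^k)\|_{\mL^k,\mR^k}^{*})^2\ge\tau_s f_0(\mF^k)$ with a \emph{universal} constant $\tau_s$ --- this is precisely where the scaled geometry removes the dependence on $\kappa$ --- to conclude $f_0(\mF^{k+1})\le(1-\eta/25)f_0(\mF^k)$ after tracking the absolute constants. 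Since $f_0$ strictly decreases, $\|\mL^{k+1}(\mR^{k+1})^H-\mM_\star\|_F\le\delta_0\sigma_r(\mM_\star)$ is preserved, closing the induction; the probability $1-c_1(sn)^{-c_2}$ is inherited from the single concentration event used uniformly for initialization and for the per-step perturbation bounds.

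The main obstacle is establishing the sharp scaled-norm perturbation bound of Lemma~\ref{lem:upbd-gd-scale}, namely $\|\nabla f_1(\mF)\|_{\mL,\mR}^{*}\lesssim\delta_0\|\nabla f_0(\mF)\|_{\mL,\mR}^{*}$ uniformly over the basin. In Frobenius norm one would simply control $(\G\A^*\A\G^*-\G\G^*)(\mL\mR^H-\mM_\star)$ by a restricted-concentration estimate, but here the preconditioners $(\mR^H\mR)^{-\frac12}$ and $(\mL^H\mL)^{-\frac12}$ intertwine with this operator, and a naive $\sigma_1/\sigma_r$ bound would reintroduce $\kappa$ and inflate the sample complexity --- exactly the route that combining \cite{Tong2021,Ma2021} with \cite{Mao2022} would take. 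The key is to exploit the incoherence of the \emph{ground truth} $\mM_\star$ (Assumption~\ref{assumption1}) rather than of the iterate factors: after optimally aligning $\mL,\mR$ with $\mU_\star\vSS_\star^{1/2},\mV_\star\vSS_\star^{1/2}$ (so that the gauge-invariant scaled-norm analysis applies and the factor spectra, hence the preconditioners, are well controlled within the basin), one decomposes $\mL\mR^H-\mM_\star$ along the tangent space at $\mM_\star$ and absorbs the preconditioners into this alignment. I expect this less-incoherence-demanding accounting, valid without any projection step, to be the technically delicate part; once it is in place, the smoothness and PL machinery deliver the stated $\kappa$-independent linear rate routinely.
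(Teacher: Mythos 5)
Your proposal follows essentially the same route as the paper: reduce to the linear decay of $f_0(\mF^k)=\frac12\|\mL^k(\mR^k)^H-\mM_\star\|_F^2$ via \eqref{eq:estimate_upbd}, induct over a basin of attraction entered by spectral initialization (Lemma~\ref{lem:spec-init}), and drive the one-step contraction by combining the scaled-norm smoothness (Lemma~\ref{lem:smoothness-scale}), the perturbation bound $\|\nabla f_1\|_{\mL,\mR}^{*}\lesssim\delta_0\|\nabla f_0\|_{\mL,\mR}^{*}$ (Lemma~\ref{lem:upbd-gd-scale}), and the $\kappa$-independent PL inequality in the dual scaled norm (Lemma~\ref{lem:pl-scale}), exactly as in Lemma~\ref{lem:convergence-scale} and Appendix~\ref{apd:pf-ScaleGD}. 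The only cosmetic discrepancy is that the paper's basin radius carries the extra factor $\frac{1}{20\sqrt{\mu_0 s}}$ needed for Lemma~\ref{lem:upbd-gd-scale} to apply, which your sketch omits but which does not change the argument.
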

\begin{remark}[Step-size strategy]
We prove a larger step-size strategy $\eta=O(1/\sigma_1(\mM_\star))$ for VGD-VHL and $\eta=O(1)$ for ScalGD-VHL, which is independent of $s$ and $r$. This is a sharp improvement compared to the conservative step-size strategy $O(\frac{\sigma_r(\mM_\star)}{s^2r^2\sigma_1^2(\mM_\star)})$ in PGD-VHL \cite{Mao2022}. Larger step-size choices allow for more flexibility and efficiency in the optimization process, and the theoretical support for this provides a solid foundation for the practical implementation. 
\end{remark}

\begin{remark}[Iteration complexity] \label{remark:itercom_acc}
    To {achieve} the $\varepsilon$ accuracy  $\|\mX^k-\mX_\star\|_F\leq\varepsilon\sigma_r(\mM_\star)$, the iteration complexity is $O(\kappa\log(1/\varepsilon))$ for VGD-VHL and $O(\log(1/\varepsilon))$ for ScalGD-VHL, which are lower than $O(s^2r^2\kappa^2\log(1/\varepsilon))$ in PGD-VHL. In numerical simulations \ref{Sim:sub-convergence}, VGD-VHL and PGD-VHL need almost the same iterations to {achieve} the same accuracy. This implies the theoretical iteration complexity in PGD-VHL may not be sufficiently sharp, while we provide a sharper one for VGD-VHL. ScalGD-VHL requires the fewest iterations to reach the same accuracy and its iteration complexity is independent of $\kappa$. 
\end{remark}
\begin{table*}[h!]	
{
		\begin{center}			
			\caption{
   The comparisons on the demand of incoherence for different low-rank matrix recovery problems.} 
   {
     \begin{tabular}{c|c|c}
   \hline
   \textbf{Problem} &\textbf{Incoherence}  & \textbf{Measurements}
   \\ \hline \hline			
{Matrix sensing} &{None}   &{ $y_i=\langle\mA_i,\mM_\star\rangle$, for $i\in[m]$} 
\\ 
{Matrix completion } & ground truth $\mM_\star$ and 
iterates $\{\mL^k, \mR^k\}$&{$Y_{i,j}=\langle\ve_{i}\ve_{j}^T,\mM_\star\rangle$ for $(i,j)$ in the set $\Omega$} 
   \\  
  {Phase retrieval and blind deconvolution} &  ground truth $\mM_\star$ and 
  iterates $\{\bm{h}^k,\vx^k\}$ &{$y_{i}=\langle\vb_i\va_i^H,\mM_\star\rangle
   $ for $i\in [m]$}
    \\
    {Blind super-resolution via VHL (this work)}
    &  ground truth $\mM_\star$ & {$y_{i}=\frac{1}{w_i}\langle\H(\vb_i\ve_i^T),\mM_\star\rangle$ for $i\in [n]$}
      \\ 
   \hline 

			\end{tabular}
   }
			\label{tab:comp-pbincoh}
   \end{center}
   }
\end{table*}
 
{\begin{remark}[Initialization]
   The balanced initialization in Alg. \ref{alg:VGD} and Lemma~\ref{lem:spec-init} is a necessary condition for VGD-VHL to ensure fast convergence, as indicated by our analysis in Appendix \ref{apd:pf-VGD} and preliminary simulations. On the other hand, ScalGD-VHL demonstrates resilience to unbalanced initialization, thereby ensuring a fast convergence rate.  This stems from the fact that the preconditioner within ScalGD-VHL implicitly maps unbalanced factors to their balanced counterparts, which is its implicit balancing mechanism as indicated in our analysis and \cite{Tong2021}.
\end{remark}
}

{
\section{Discussion}
}
{
First, this section is dedicated to a detailed discussion on how the incoherence demands of the blind super-resolution problem compare with those of other low-rank matrix recovery problems \cite{davenport2016overview}, based on the low-rank factorization framework. Next, we investigate whether our analysis, which requires less incoherence, can be applied to other problems.

\vspace{5pt}
\subsection{Comparisons on the demand for incoherence }

We compare the demand of incoherence and the measurements for our problem with other low-rank matrix recovery problems via low-rank factorization, such as matrix sensing \cite{Tu2016,Ma2021}, matrix completion\cite{Tong2021,Ma2019,Sun2016}, phase retrieval \cite{Chen2019aGlobalpr,Candes2015}, and blind deconvolution  \cite{Li2019, Ma2019} in Table \ref{tab:comp-pbincoh}.

 Our problem is more incoherence-demanding than matrix sensing \cite{Tu2016,Ma2021} as it does not satisfy RIP and requires the incoherence property of the ground truth $\mM_\star$. For phase retrieval \cite{Chen2019aGlobalpr,Candes2015} and blind deconvolution  \cite{Ma2019,Li2019}, they need the incoherence to guarantee small $\|\mB\vh^k\|_{\infty}$ throughout the iteration, which results from the bilinear measurements  $y_{i}=\langle\vb_i\va_i^H,\mM_\star\rangle$ for $i\in[m]$. 
Unlike them, our problem avoids the requirement for incoherent iterates. The reason is that the measurements of our problem after Hankel lifting are $y_{i}=\frac{1}{w_i}\langle\H(\vb_i\ve_i^T),\mM_\star\rangle$ for $i\in[n]$, which are not bilinear and sample off-diagonally across two dimensions of the matrix. Matrix completion \cite{Tong2021,Ma2019,Sun2016} requires the incoherence of factors to ensure small $\|\mL^k\|_{2,\infty}$ and $\|\mR^k\|_{2,\infty}$ throughout the iteration due to the element-wise measurements $Y_{i,j}=\langle\ve_{i}\ve_{j}^T,\mM_\star\rangle$ for $(i,j)$ in the set $\Omega$. In contrast, our problem does not require this type of incoherence for iterates, as the measurements in our case are less structured and exhibit more randomness compared to the element-wise sampling in matrix completion.
 
 

\subsection{Extension to other problems}
Our less incoherence-demanding analysis can be extended to a new type of low-rank matrix sensing that doesn't satisfy RIP. Low-rank matrix sensing problem is to estimate the  rank-$r$ matrix $\mM_\star\in \C^{n_1\times n_2}$ from
\begin{align*}
    y_i=\langle\mA_i,\mM_\star\rangle,~~i=0,\cdots,m-1,
\end{align*}
where $\mA_i\in\C^{n_1\times n_2}$ is the $i$-th sensing matrix. Rewrite the previous measurements as $\vy=\tilde{\A}(\mM_\star)$ where $\tilde{\A}(\mM)=\{\langle\mA_i,\mM\rangle\}_{i\in[m]}$. Based on the low-rank factorization, one can formulate the following optimization problem: 
\begin{align*}
       \min_{\mL\in\C^{n_1\times r},\mR\in\C^{n_2\times r}}~~\frac{1}{2}\|\vy-\tilde{\A}(\mL\mR^H)\|_F^2, 
   \end{align*}
  and apply the gradient descent method to solve the problem. 
  When $\tilde{\A}$ satisfies the following conditions \eqref{RIPless}, it is straightforward to provide the linear convergence guarantees of the gradient method by extending our analysis: First, we can establish a similar sharp perturbation bound as in Lemma~\ref{lem:upbd-perturbgd-fro} equipped with conditions \eqref{RIPless}, and then following the routes in Appendix \ref{apd:pf-VGD}, the linear convergence result can be derived. We omit the proof for brevity.

\begin{theorem}  
Provided the operator $\tilde{\A}$ satisfies
\begin{align*}
    \|\tilde{\A}\|\leq t,~~\|\P_T(\tilde{\A}^{*}\tilde{\A}-\I)\P_T\|\leq \varepsilon,\numberthis\label{RIPless}
\end{align*}
where $T$ denotes the tangent space of $\mM_\star$. For vanilla gradient descent as in Alg. \ref{alg:VGD}, if the balanced initialization satisfies $\|\mL^0(\mR^0)^H-\mM_\star\|_F\leq \varepsilon\sigma_r(\mM_\star)/t$, then the iterates exhibit linear convergence as  
\begin{align*}
    \|\mL^k(\mR^k)^H-\mM_\star\|_F^2\leq \(1-c/\kappa\)^k\sigma_r^2(\mM_\star),
\end{align*}
where $c$ and $\varepsilon$ are constants. 
\end{theorem}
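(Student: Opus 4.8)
The plan is to mirror the VGD-VHL argument of Appendix~\ref{apd:pf-VGD}, replacing the VHL-specific operator $\A\G^*$ by the generic sensing operator $\tilde{\A}$ and replacing its two structural properties with the hypotheses \eqref{RIPless}. Throughout, write $\mE=\mL\mR^H-\mM_\star$ and split $f=f_0+f_1$ with $f_0(\mF)=\tfrac12\|\mE\|_F^2$ and $f_1(\mF)=\tfrac12\|\tilde{\A}(\mL\mR^H)-\vy\|_F^2-\tfrac12\|\mE\|_F^2$. Using $\vy=\tilde{\A}(\mM_\star)$ one gets $\nabla_\mL f_1(\mF)=(\tilde{\A}^*\tilde{\A}-\I)(\mE)\mR$ and $\nabla_\mR f_1(\mF)=((\tilde{\A}^*\tilde{\A}-\I)(\mE))^H\mL$, while $\nabla_\mL f_0(\mF)=\mE\mR$, $\nabla_\mR f_0(\mF)=\mE^H\mL$. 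Since $f_0$ has exactly the form treated earlier, its PL inequality (Lemma~\ref{lem:pl-fro}) and smoothness (Lemma~\ref{lem:smoothness-fro}) transfer verbatim on the local region $\|\mE\|_F\lesssim\sigma_r(\mM_\star)$, supplying $\|\nabla f_0(\mF)\|_F^2\ge c_1\sigma_r(\mM_\star)f_0(\mF)$ and the usual quadratic upper bound on a gradient step.

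The first genuine step is the perturbation bound analogous to Lemma~\ref{lem:upbd-perturbgd-fro}, namely $\|\nabla f_1(\mF)\|_F\lesssim\varepsilon\,\|\nabla f_0(\mF)\|_F$ on the basin. I would obtain it through the splitting $\mE=\P_T\mE+\P_{T^\perp}\mE$, where $T$ is the tangent space at $\mM_\star$. Because $\mL\mR^H$ and $\mM_\star$ both have rank at most $r$, the off-tangent part is second order, $\|\P_{T^\perp}\mE\|_F\lesssim\|\mE\|_F^2/\sigma_r(\mM_\star)$; on the tangent part the hypothesis $\|\P_T(\tilde{\A}^*\tilde{\A}-\I)\P_T\|\le\varepsilon$ supplies the sharp factor $\varepsilon$, whereas the gross bound $\|\tilde{\A}^*\tilde{\A}\|\le t^2$ coming from $\|\tilde{\A}\|\le t$ is applied only against the small $\P_{T^\perp}\mE$ piece. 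The right multiplication by $\mR$ (and $\mL$) is controlled by noting that $\|\mR\|=O(\sqrt{\sigma_1(\mM_\star)})$ in the basin and that $\mathrm{range}(\mR)$ is $O(\varepsilon)$-close to $\mathrm{range}(\mV_\star)$, so the factor only probes tangent directions up to higher-order error; the initialization-scale bound $\|\mE\|_F\le\varepsilon\sigma_r(\mM_\star)/t$ then forces every cross term to be $O(\varepsilon)\|\nabla f_0(\mF)\|_F$.

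Given the perturbation bound, the contraction follows the template of Appendix~\ref{apd:pf-VGD}. Writing $\mF_+=\mF-\eta\nabla f(\mF)$ and applying smoothness of $f_0$ gives $f_0(\mF_+)\le f_0(\mF)-\eta\,\Real\langle\nabla f_0(\mF),\nabla f(\mF)\rangle+\tfrac{\eta^2 l}{2}\|\nabla f(\mF)\|_F^2$. Substituting $\nabla f=\nabla f_0+\nabla f_1$ with $\|\nabla f_1(\mF)\|_F\le c\varepsilon\|\nabla f_0(\mF)\|_F$ yields $\Real\langle\nabla f_0,\nabla f\rangle\ge(1-c\varepsilon)\|\nabla f_0\|_F^2$ and $\|\nabla f\|_F^2\le(1+c\varepsilon)^2\|\nabla f_0\|_F^2$, so for $\eta=O(1/\sigma_1(\mM_\star))$ and $\varepsilon$ a sufficiently small constant one gets $f_0(\mF_+)\le(1-c_2\eta\sigma_r(\mM_\star))f_0(\mF)$, i.e.\ the rate $1-c/\kappa$. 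An induction closes the argument: the contraction of $f_0$ keeps $\|\mE^k\|_F$ inside the basin and, together with the balanced initialization required for VGD (see the Initialization remark), preserves the approximate alignment of $\mathrm{range}(\mL^k),\mathrm{range}(\mR^k)$ with the tangent directions used above; finally $\|\mX^k-\mX_\star\|_F\le\|\mE^k\|_F$ and $f_0(\mF^k)=\tfrac12\|\mE^k\|_F^2$ give the stated bound.

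I expect the perturbation bound to be the crux: the naive estimate $\|\nabla f_1(\mF)\|_F\le(t^2+1)\|\mE\|_F\|\mR\|$ is far too lossy, and extracting the sharp factor $\varepsilon$ requires carefully isolating the tangent-space action of $\tilde{\A}^*\tilde{\A}-\I$ and proving that both the $T^\perp$ residual and the $\mathrm{range}(\mR)$ misalignment are genuinely higher order. Matching the constants so that the needed smallness of $\|\mE^0\|_F$ is exactly $\varepsilon\sigma_r(\mM_\star)/t$ is the delicate bookkeeping, but no new mechanism beyond Appendix~\ref{apd:pf-VGD} is needed, which is why the proof is omitted for brevity.
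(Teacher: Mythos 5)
Your proposal matches the paper's own treatment, which is only a sketch: the authors state that one establishes a perturbation bound analogous to Lemma~\ref{lem:upbd-perturbgd-fro} from the two conditions in \eqref{RIPless} and then repeats the route of Appendix~\ref{apd:pf-VGD}, exactly the plan you lay out. Your splitting $\mE=\P_T\mE+\P_{T^\perp}\mE$ is interchangeable with the paper's $\bPhi+\bPsi$ decomposition (since $\bPhi\in T$ and $\bPsi$ is second order), and your identification of the perturbation bound as the crux, with the $\varepsilon$ bound acting on tangent directions and the gross bound $t$ absorbed by the second-order pieces, is precisely how the paper's Appendix~\ref{apd:upbd-grad-scale} argument works.
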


To the best of our knowledge, this type of matrix sensing problem {without RIP} is {novel} to the current literature. 
In particular, blind super-resolution in our work can be seen as {a special case} of this problem. One can obtain this directly via redefining $\tilde{\A}=\A\G^{*}$ and checking the conditions in \eqref{RIPless}. In addition, some variants of blind super-resolution such as simultaneously blind demixing and super-resolution of point sources in \cite{Wang2024a} also satisfy the conditions in \eqref{RIPless} and belong to this type of problem. 
It is interesting to study whether other practical problems beyond blind super-resolution can be seen as a special case of this problem. Meanwhile, it is also interesting to study what statistical properties that $\tilde{\A}$ should satisfy to obtain the conditions in \eqref{RIPless}.
}

\section{Numerical Simulations} \label{sec:numerical}
\begin{figure*}[!t]
		\centering
  \includegraphics[width=0.20\linewidth]{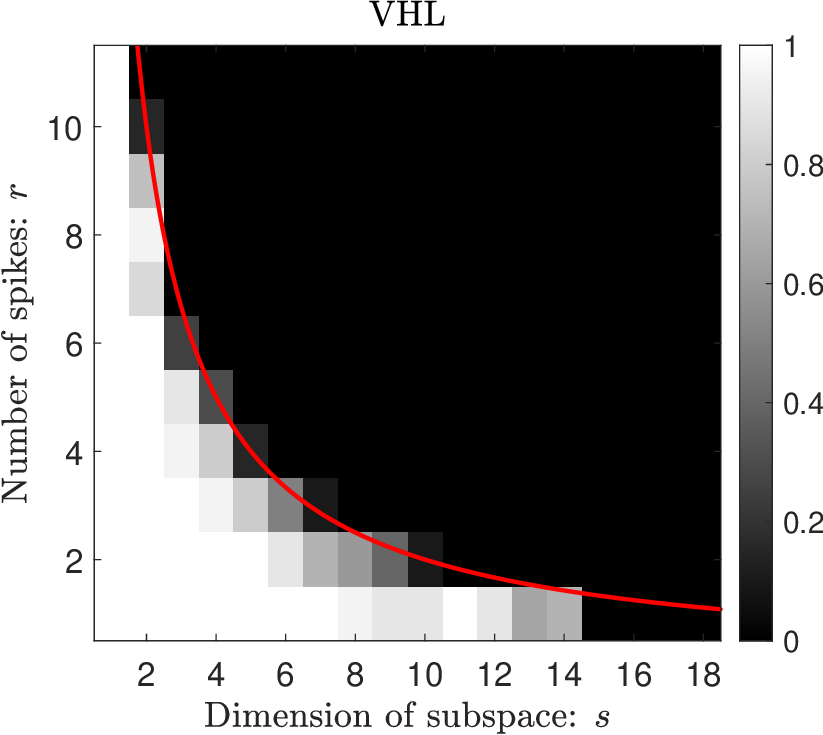}  \hfill
		\includegraphics[width=0.20\linewidth]{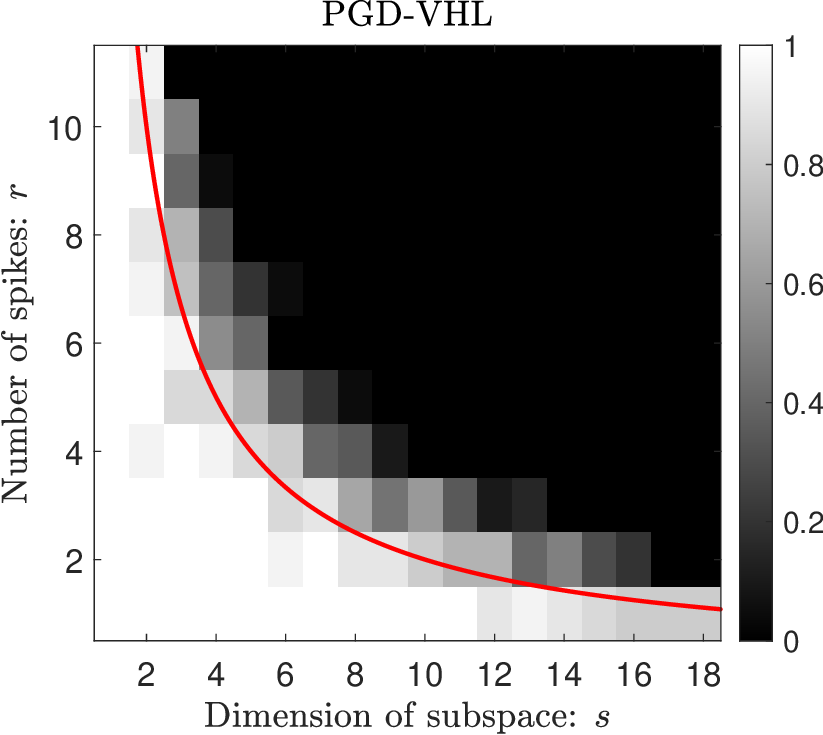}  \hfill
  \includegraphics[width=0.20\linewidth]{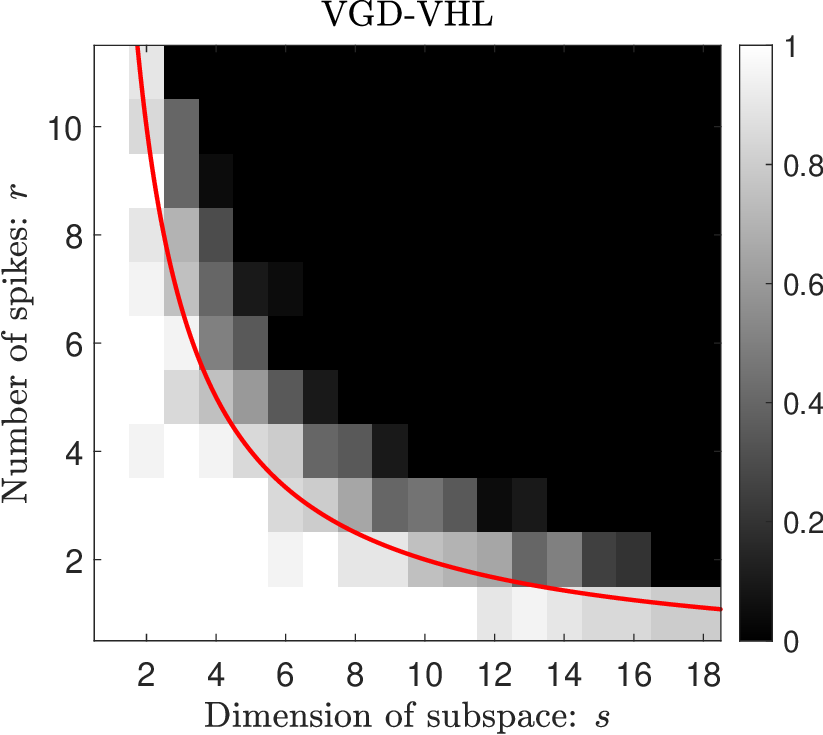}  \hfill
  \includegraphics[width=0.20\linewidth]{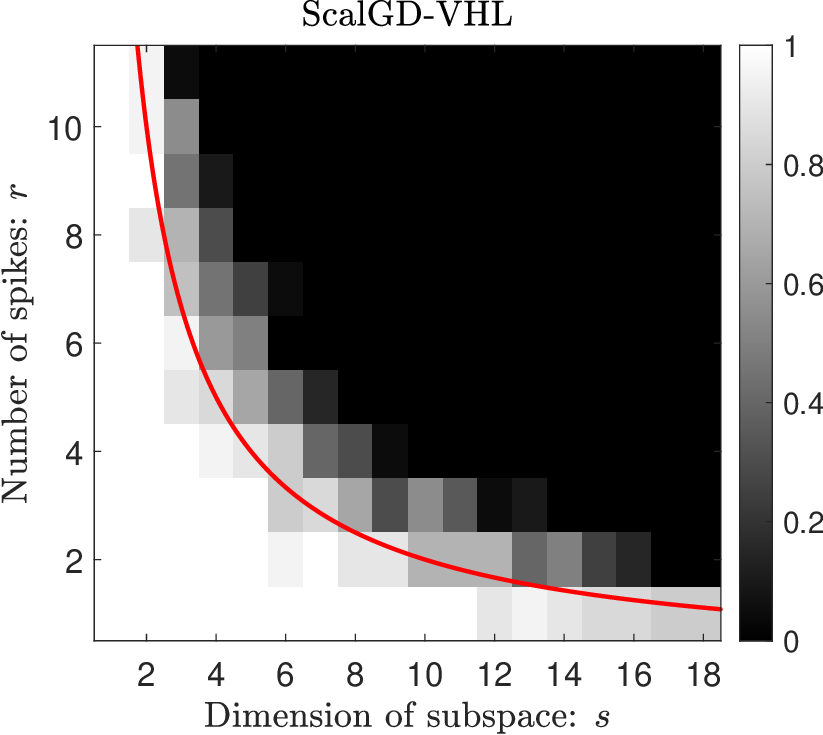} \hfill
		\caption{The comparisons of VHL, PGD-VHL, VGD-VHL, and ScalGD-VHL in terms of phase transitions. The red curve plots $sr=20$.}
		\label{fig:phasetrans_withoutsep}
  \end{figure*}
  
Extensive numerical simulations are carried out to demonstrate the performance of our algorithms \footnote{Our code is available at \url{https://github.com/Jinshengg/SimplerGDs-VHL}.}. We first test the phase transition performance of VGD-VHL and ScalGD-VHL in comparison to prior arts.
Then the convergence performance of our algorithms 
versus iterations is presented in \ref{Sim:sub-convergence}. 
Besides, we provide runtime comparisons 
 for our algorithms and PGD-VHL under different 
 settings in \ref{subsec:sim-runtime}. 
 Lastly, we demonstrate the performance of our algorithms towards a real-world case, the joint delay-Doppler estimation problem in \ref{subsec:real2D}. {All simulations are run in MATLAB R2019b on a 64-bit Windows machine with 
 Intel CPU i9-10850K at 3.60 GHz and 16GB RAM.} 
\subsection{Phase transitions} \label{subsec:phasetran}
In this subsection, we present the phase transition performance of convex method VHL \cite{Chen2022}, PGD-VHL \cite{Mao2022}, VGD-VHL, and ScalGD-VHL. 
The target data matrix 
is generated by $\mX_\star = \sum_{k=1}^r d_k\vh_k\va_{\tau_k}^T$, where 
 $\{\vh_k\}_{k=1}^r$ are generated from standard Gaussian distribution with normalization, $\{d_k\}_{k=1}^r$ are chosen as $d_k=(1+10^{c_k})e^{-i\phi_k}$ with $c_k$ uniformly taken from $[0,1]$ and $\phi_k$ uniformly taken from $[0,2\pi)$, and the locations of point sources $\{\tau_k\}_{k=1}^r$ are randomly chosen from $[0,1)$. The columns of the low-dimensional subspace $\mB$ are chosen randomly from the DFT matrix. 
 
The step-size is set as $\eta=0.4/\|\mM^0\|$ for PGD-VHL and VGD-VHL, and $\eta=0.4$ for ScalGD-VHL. 
The termination condition for three 
gradient methods is $\|\mX^{k+1}-\mX^k\|_F/\|\mX^k\|_F\leq10^{-7}$ or the maximum number of iterations is reached. The convex method VHL is implemented using CVX. We set $n=64$ as fixed, with varying $s$ and $r$ for phase transition testing. 
 We run 30 random trials for each setting and record the success rate. A test is declared to be successful if $\|\mX^k-\mX_\star\|/\|\mX_\star\|_F\leq10^{-3}$. In Fig. \ref{fig:phasetrans_withoutsep}, we observe that all nonconvex methods outperform the convex method VHL \cite{Chen2022}. In addition, our simpler gradient methods, VGD-VHL and ScalGD-VHL achieve recovery performance comparable to that of PGD-VHL, while eliminating two types of regularization: projection and balancing. 

\subsection{Convergence performance}\label{Sim:sub-convergence}
\begin{figure}[!t]
		\centering
	  \includegraphics[width=0.70\linewidth]{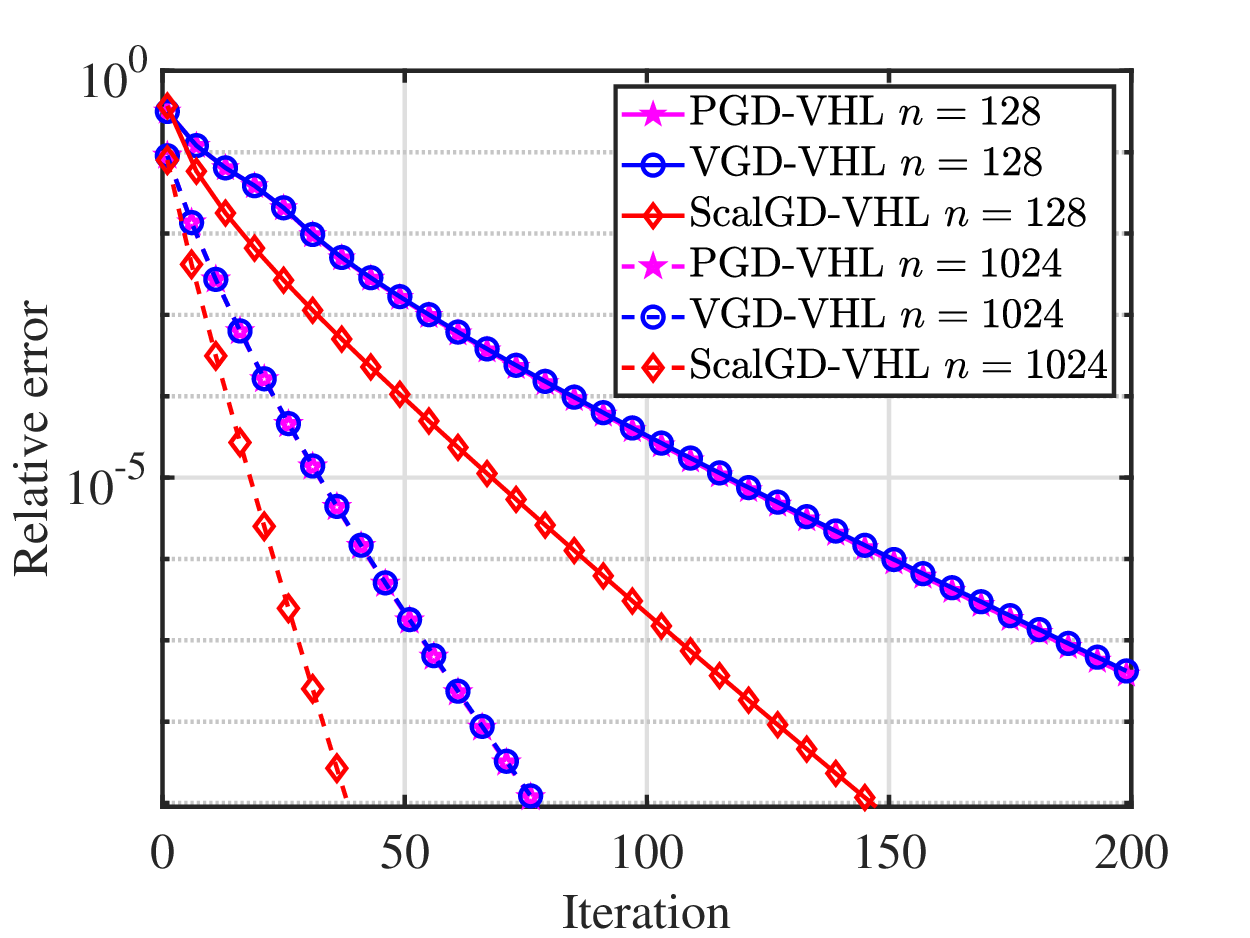}	
  \caption{The relative errors of PGD-VHL, VGD-VHL, and ScalGD-VHL versus the iteration number.}
		\label{fig:iter_err}
\end{figure}

The convergence performance of PGD-VHL, VGD-VHL, and ScalGD-VHL versus iterations is presented in this subsection. 
The relative error in our experiments refers to $\|\mX^k-\mX_\star\|_F/\|\mX_\star\|_F$. 

In the first experiment, we test the convergence performance of the algorithms for the length of the signal $n=128$ and $n=1024$ respectively, with fixed $s=r=4$. 
We run 20 random trials for each case and record the average relative error versus iterations. 
Other arguments are the same as in \ref{subsec:phasetran}. From Fig. \ref{fig:iter_err}, it can be observed that ScalGD-VHL exhibits the fastest convergence, requiring fewer iterations than VGD-VHL and PGD-VHL to attach the same accuracy.  
 Although both PGD-VHL and VGD-VHL exhibit similar convergence performance in terms of iterations, the theoretical iteration complexity of VGD-VHL is lower than that of PGD-VHL. This suggests that the theoretical guarantees provided by PGD-VHL \cite{Mao2022} may not be as sharp as the ones we provide for our algorithms. 

In the second experiment, we investigate the convergence performance of VGD-VHL and ScalGD-VHL when the vectorized  Hankel matrix $\mM_\star=\H(\mX_\star)$ is ill-conditioned. The target data matrix 
is generated by $\mX_\star = \sum_{k=1}^r d_k\vh_k\va_{\tau_k}^T$ where the locations of point sources $\{\tau_k\}_{k=1}^r$ are randomly sampled from $\{1/n,2/n,...,1\}$ and the amplitudes $\{d_k\}_{k=1}^r$ are set in three cases: all ones, linearly distributed from $[1/5,1]$ and from $[1/15,1]$. 
Consequently, the lifted matrix $\mM_\star$ has the condition number as $\kappa=1,5,15$ respectively, which can be verified from the Vandermonde decomposition of $\mM_\star=\H(\mX_\star)$ as shown in \cite{Chen2022}. We run VGD-VHL and ScalGD-VHL 20 times for each case and record the average relative error. In Fig. \ref{fig:kappa_convergence}, we observe that the convergence performance of ScalGD-VHL is independent of the condition number $\kappa$, while VGD-VHL converges slower as $\kappa$ increases. 
\begin{figure}[!t]
		\centering
	  \includegraphics[width=0.7\linewidth]{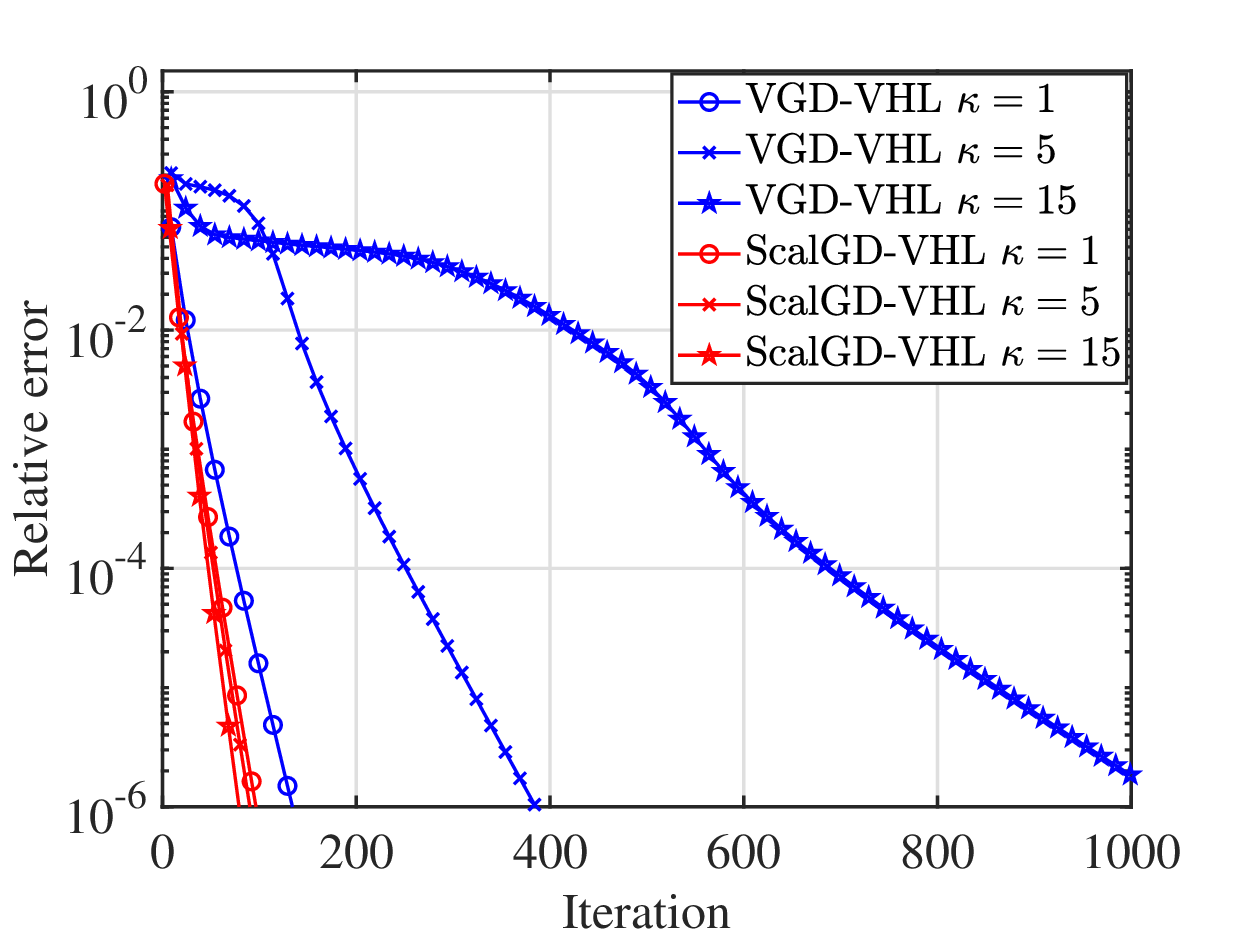}	
  \caption{The comparisons of VGD-VHL and ScalGD-VHL under different condition numbers of $\vM_\star$ as  $\kappa=1, 5, 15$.} 

  \label{fig:kappa_convergence}
\end{figure}

\begin{figure}
    \centering
    \includegraphics[width=0.67\linewidth]{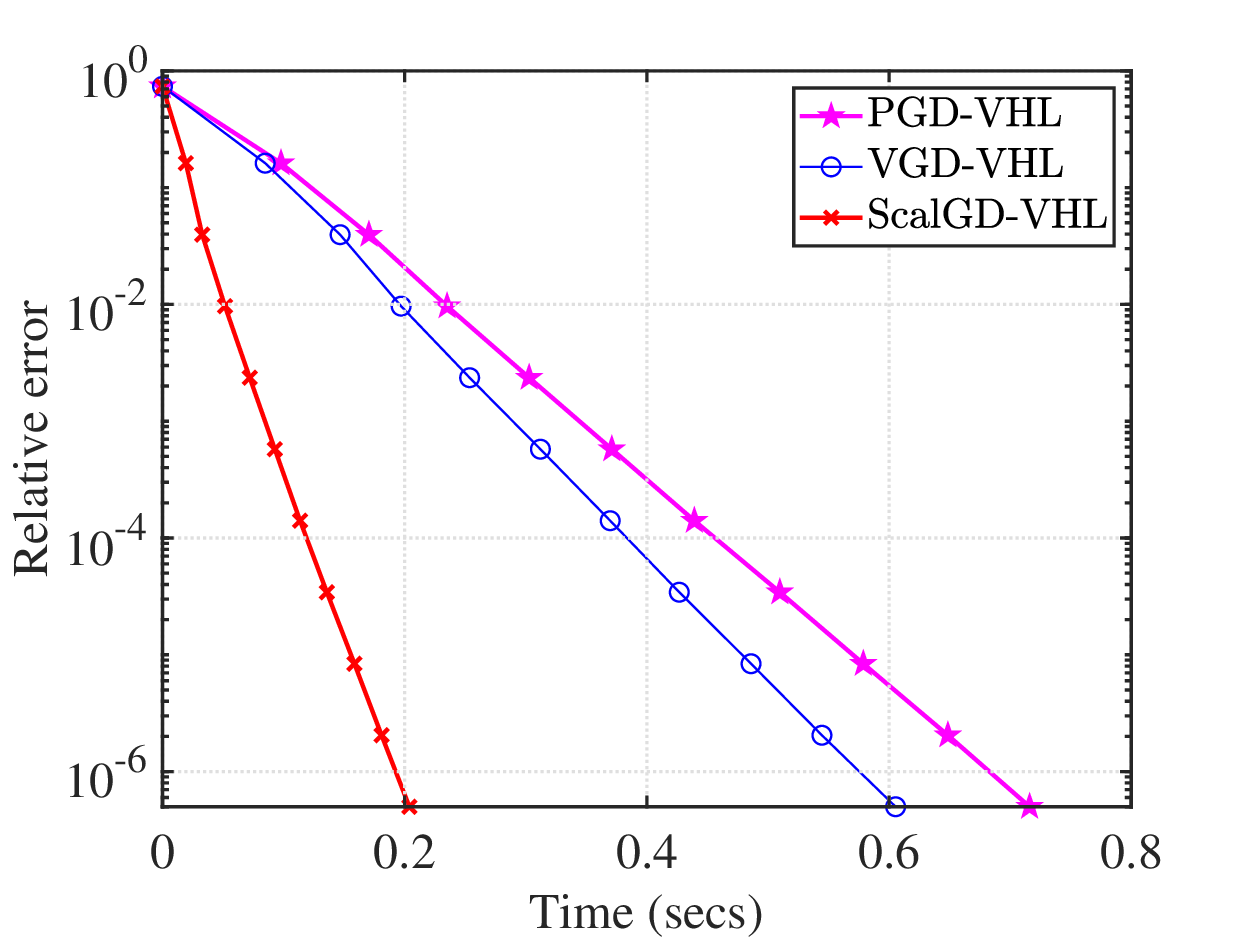}
    \caption{The average running time required for PGD-VHL, VGD-VHL, and ScalGD-VHL towards different recovery accuracies.}
    \label{fig:time_err}
\end{figure}

\subsection{ Running time comparisons}
\label{subsec:sim-runtime}
\begin{table*}[h!]	
		\begin{center}
			
			\caption{The average running time (in seconds) for PGD-VHL, VGD-VHL, and ScalGD-VHL to reach a fixed accuracy $10^{-7}$. }
			\begin{tabular}{c|c|c|c|c|c|c}
   \hline
				\textbf{$n$} & \multicolumn{3}{c|}{$1024$} &  \multicolumn{3}{c}{$2048$}    \\
	\hline			
				$(s,r)$ &{$(4,48)$} &{$(4,38)$} &{$(6,38)$} &{$(4,48)$} &{$(4,38)$} &{$(6,38)$} 
    \\ 
    \hline
				\hline
				PGD-VHL   &$24.8\pm8.3$ & $16.1\pm5.3$  & $25.8\pm9.0$  & $35.2\pm12.3$ & $29.1\pm13.1$  & $36.5\pm13.3$  \\
				VGD-VHL    &$21.7\pm7.3$  & $14.3\pm4.8$  & $22.7\pm8.1$  & $32.4\pm 11.3$  & $27.0\pm 12.3$  & $33.3\pm12.2$ \\
				ScalGD-VHL    &$\bm{3.3\pm 0.3}$  & $\bm{2.4\pm0.2}$ &  $\bm{5.0\pm0.4}$  & $\bm{5.3\pm0.1}$  & $\bm{4.3 \pm0.2}$  & $\bm{6.5\pm0.4}$ \\
    \hline
			\end{tabular}
			\label{tab:comput}
		\end{center}		
	\end{table*}
\begin{figure}[!t]
		\centering
	  \includegraphics[width=0.67\linewidth]{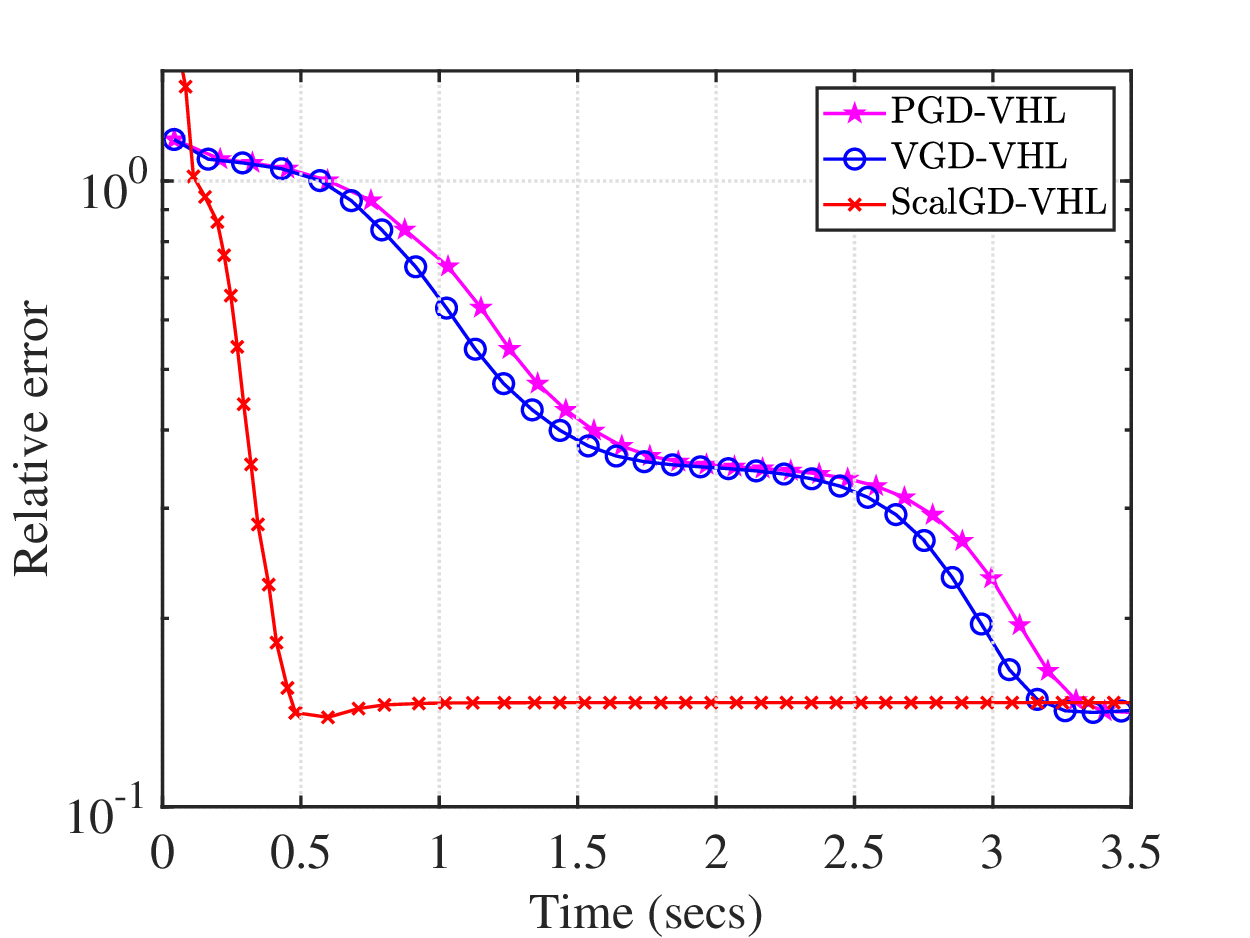}	
  
 {\caption{The recovery errors of PGD-VHL, VGD-VHL, and ScalGD-VHL in the signal domain.}
		\label{fig:converge_realcase_signal}}
\end{figure}
\begin{figure}[!t]
		\centering
	  \includegraphics[width=0.67\linewidth]{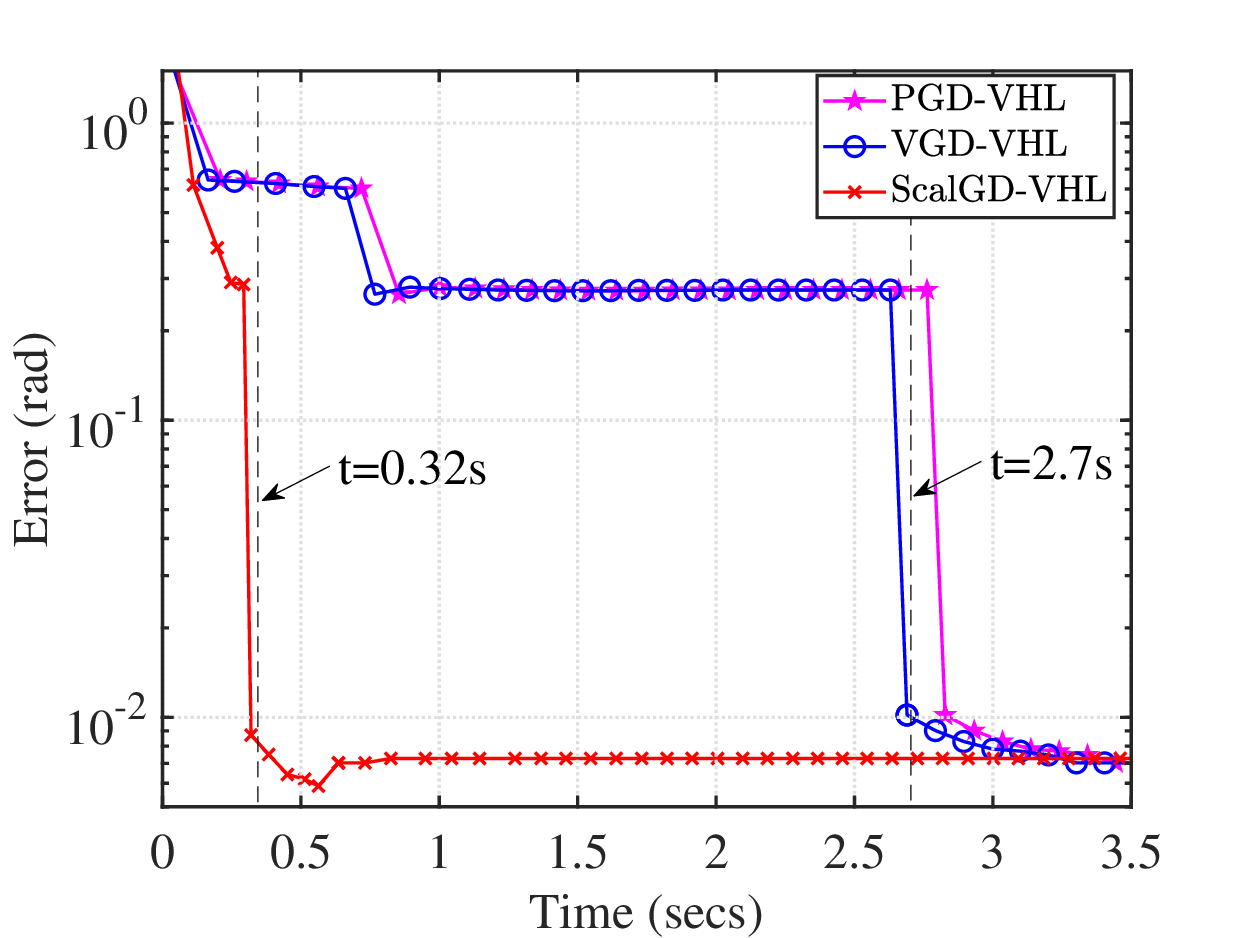}	
  
  {\caption{The recovery errors of PGD-VHL, VGD-VHL, and ScalGD-VHL in the delay-Doppler domain. }
		\label{fig:converge_realcase_freq}}
\end{figure}
\begin{figure}[!t]
		\centering	
 \subfigure[{Maximum runtime $t=0.32s$}]{
  \includegraphics[width=0.32\linewidth]{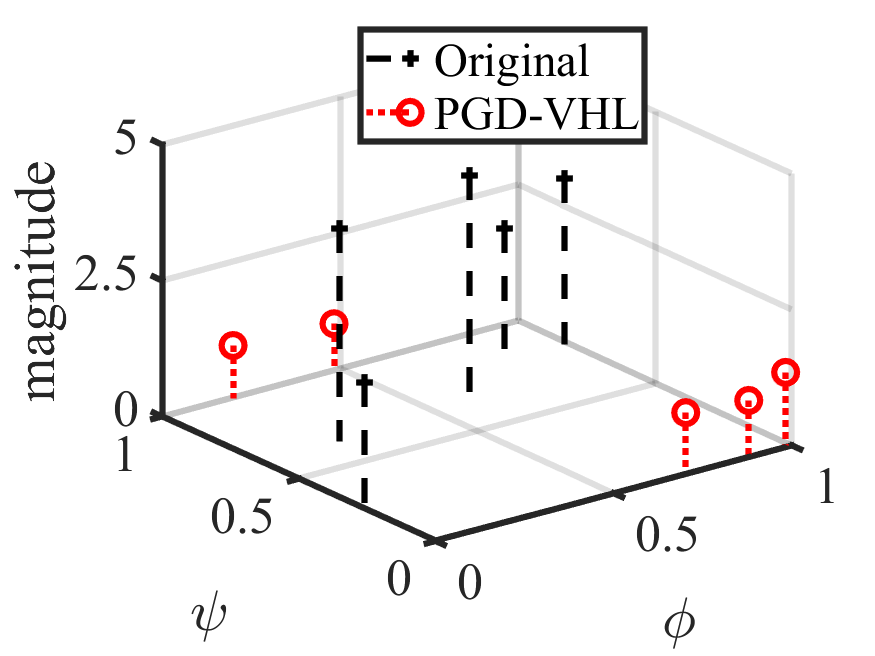} 
 \hfill
  \includegraphics[width=0.32\linewidth]{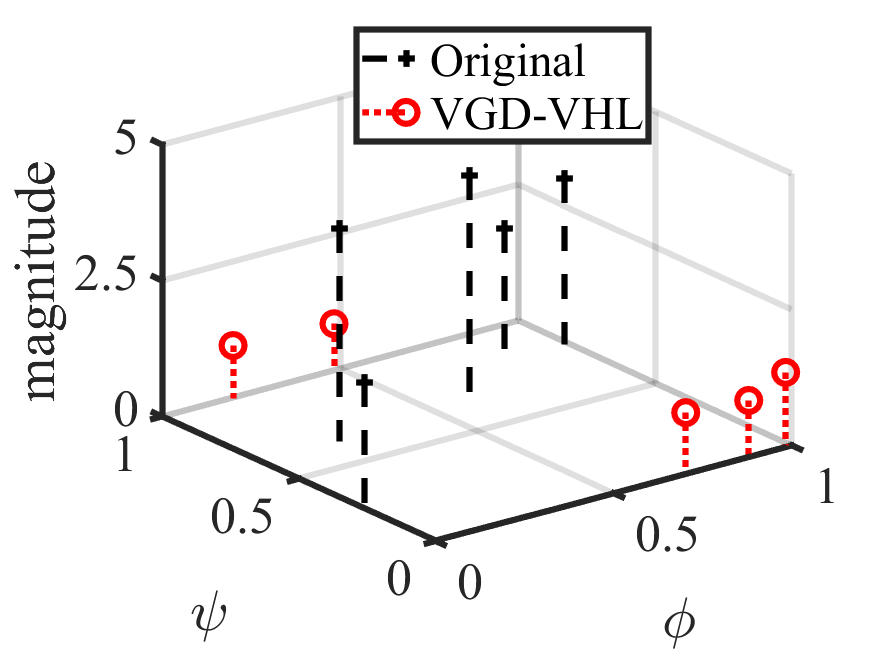} 
  \hfill
   \includegraphics[width=0.32\linewidth]{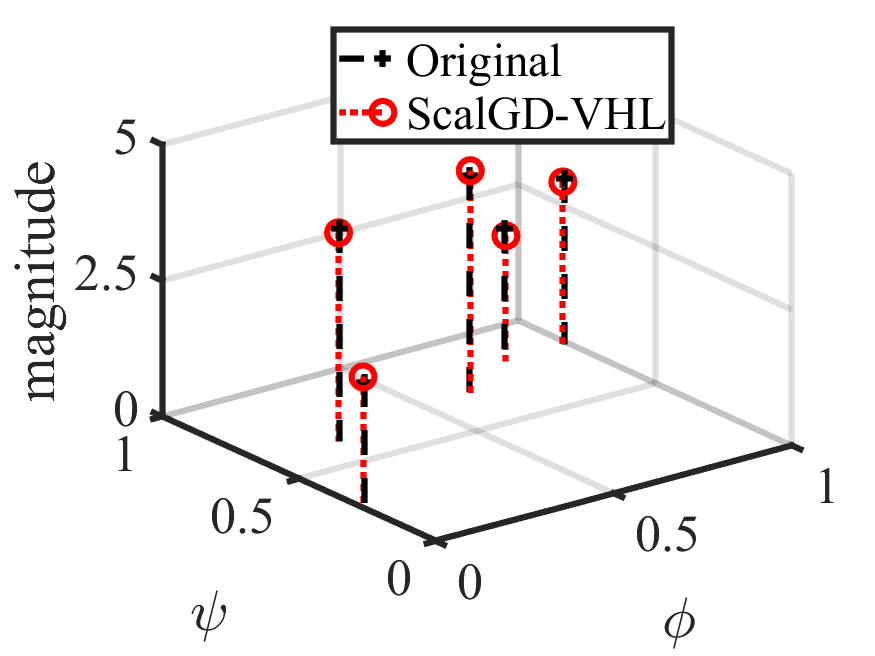}	
  }

  \centering	
  \subfigure[{Maximum runtime $t=2.7s$}]{
 \includegraphics[width=0.32\linewidth]{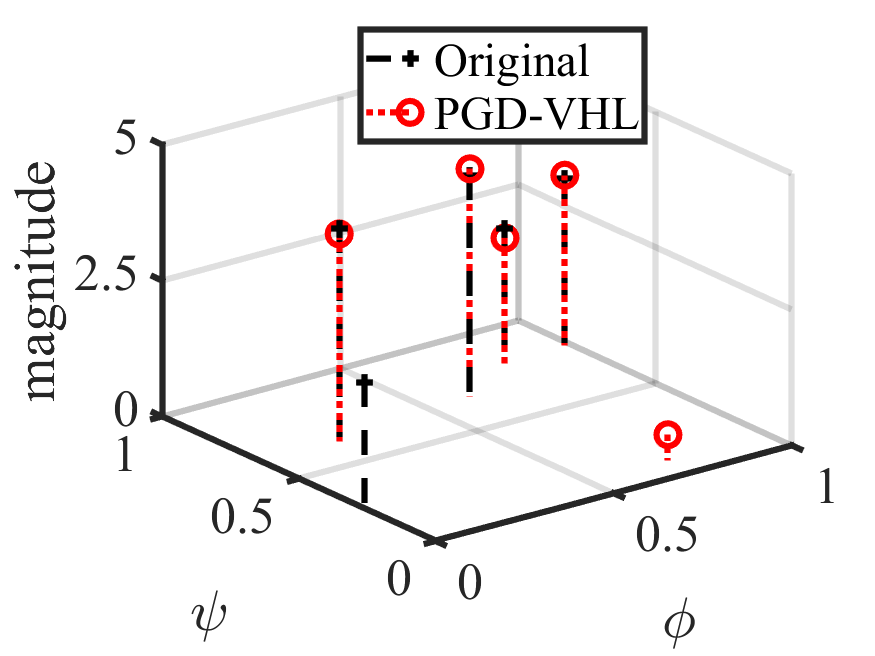} 
\hfill
  \includegraphics[width=0.32\linewidth]{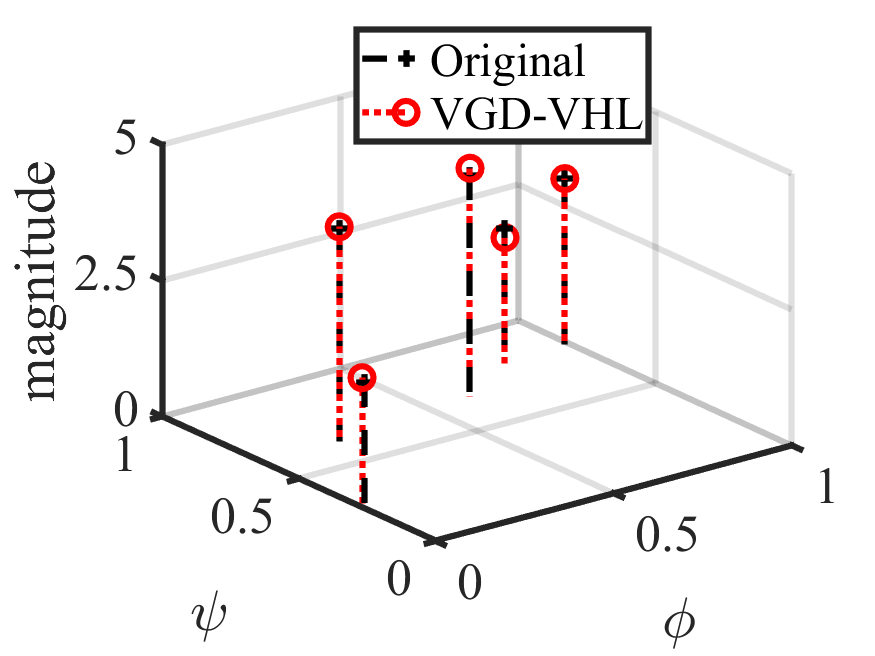} 
 \hfill
  \includegraphics[width=0.32\linewidth]{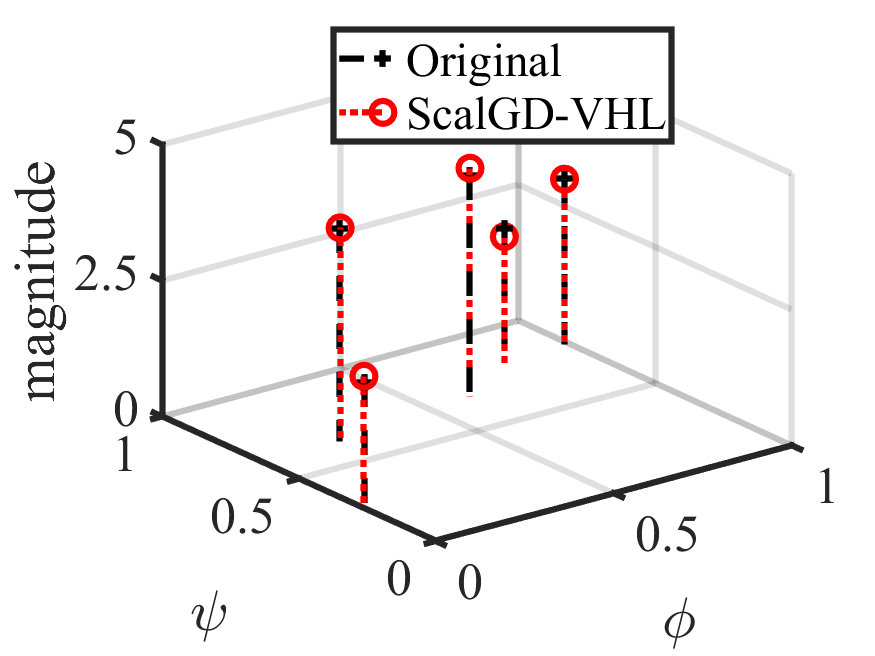}	
  } 
  \centering	
    {\caption{The recovery in the delay-Doppler domain for PGD-VHL, VGD-VHL, and ScalGD-VHL.} 
		\label{fig:2d_super_resolution}}
\end{figure}

In this subsection, we study the computational costs of our algorithms. The arguments are the same as in 
\ref{subsec:phasetran}.  Each experiment is repeated for 20 trials. First, we record the average running time required for PGD-VHL, VGD-VHL, and ScalGD-VHL to attain different recovery accuracies. We set $n=128$, $s=4$, and $r=8$ in this experiment. From Fig. \ref{fig:time_err}, ScalGD-VHL exhibits the lowest computation cost to reach the same recovery accuracy among the three algorithms. VGD-VHL converges faster than PGD-VHL in terms of runtime because VGD-VHL eliminates the computation of balancing terms and the projection. 

Secondly, we summarise three algorithms' average runtime required to reach a fixed accuracy as $\|\mX^k-\mX_\star\|_F/\|\mX_\star\|_F=10^{-7}$  
in Table \ref{tab:comput}. We consider two different lengths of the signal: $n=1024$ and $n=2048$ respectively, with three $(s,r)$ settings as $(4,48),(4,38)$ and $(6,38)$. From Table \ref{tab:comput}, we conclude that ScalGD-VHL exhibits the lowest computational costs and standard deviations compared to the other algorithms. VGD-VHL performs slightly better than PGD-VHL in terms of runtime and standard deviations.

  \subsection{Real 2D case} \label{subsec:real2D}
  We consider the joint delay-Doppler estimation problem. From \cite{Zheng2017}, the received signals are: 
  \begin{align*}
      \mY(m,n)=\sum_{k=1}^rd_ke^{-2\pi i(n\cdot\Delta f\tau_k+m\cdot Tf_k)}\mG(m,n),\numberthis\label{eq:2d-delay-doppler}
  \end{align*}
  where $(m,n)\in[M]\times[N]$, $M$ is the number of symbol blocks and $N$ is the number of orthogonal subcarriers. In the equation above, $\{d_k\}_{k=1}^r$ denote the channel coefficients, $\{\tau_k,f_k\}_{k=1}^r$ are the delays and Doppler frequencies, $\Delta f$ is the frequency spacing between adjacent subcarriers, $T$ is the symbol duration, and $\mG(m,n)$ refers to the symbol modulated on the $n$-th subcarrier and in the $m$-th block. Without loss of generality, we denote $\phi_k=\Delta f\tau_k\in[0,1)$ and $\psi_k=Tf_k\in[0,1)$. 
  
  Let $\vg=\mathrm{vec}(\mG)$, $\vy=\mathrm{vec}(\mY)$. \eqref{eq:2d-delay-doppler} can be rewritten as: $
      \vy =\(\sum_{k=1}^rd_k\va_{\phi_k}\otimes\va_{\psi_k}\)\odot\vg$. As noted in \cite{Mao2022}, $\vg$ can be approximately represented as $\vg=\mB\vh$. \eqref{eq:2d-delay-doppler} can be further reformulated  as 
 \begin{align*}
 	y_j = \la\vb_j\ve_j^T, \mX_\star \ra,\quad j=0,\cdots,MN-1,
 \end{align*}
 where $\mX_\star=\sum_{k=1}^r d_k\vh(\va_{\phi_k}\otimes\va_{\psi_k})^T$. We apply our approach to this 2-D blind super-resolution problem and set $M=17$, $N=13$, $(s,r)=(4,5)$ in experiments. 
 The row $\vb_k$ of $\mB$ is independently generated by $\vb_k=[1~e^{2\pi if_k} ... e^{2\pi i(s-1)f_k}]$ where $f_k$ is taken uniformly from $[0,1]$. Besides, 
 the measurements $\vy$ are corrupted by the random noise $\ve=\sigma_e\cdot\|\vy\|_2\cdot\vw/\|\vw\|_2$. Here, the elements of $\vw$ are i.i.d standard Gaussian variables and the noise level is $\sigma_e=0.15$. Other arguments are the same as in \ref{subsec:phasetran}. 
 
 Three {gradient} methods are applied to the real case, which are PGD-VHL, VGD-VHL, and ScalGD-VHL.  The delay-Doppler parameters $\{\phi_k,\psi_k\}_{k=1}^r$ can be retrieved through the 2D MUSIC \cite{Zheng2017} after obtaining the signal matrix $\mX$. The channel coefficients $\{d_k\}_{k=1}^r$ are estimated as in \cite{Chen2022}. {We compare the recovery errors of these methods as a function of time in the signal domain and delay-Doppler domain, respectively.  The recovery error in the signal domain is defined as $\|\mX-\mX_\star\|_F/\|\mX_\star\|_F$. The recovery error in the  delay-Doppler domain is defined as $\sqrt{\sum_{k=1}^r ((\hat{\phi}_k-\phi_k)^2+ 
   (\hat{\psi}_k-\psi_k)^2)}
$, where $(\hat{\phi}_k,\hat{\psi}_k)$ are the estimated delay-Doppler parameters and  $(\phi_k,\psi_k)$ are the true parameters. 
 As shown in Fig. \ref{fig:converge_realcase_signal} and Fig. \ref{fig:converge_realcase_freq}, ScalGD-VHL converges fastest, successfully recovering the delay-Doppler parameters at $t=0.32s$, 
 and VGD-VHL could successfully recover the delay-Doppler parameters at $t=2.7s$. Thus we select the maximum runtime as {$t=0.32s$} and $t=2.7s$ to present our advantages over PGD-VHL.} 

{We show the recovery of these methods in the delay-Doppler domain visually in Fig. \ref{fig:2d_super_resolution}. The results show that only ScalGD-VHL can successfully recover the true delay-Doppler parameters at $t=0.32s$. When the maximum runtime is $t=2.7s$,} both VGD-VHL and ScalGD-VHL can recover the parameters almost exactly, while PGD-VHL misses a target.

\section{Conclusions} \label{sec:conclusions}
Towards the problem of blind super-resolution of point sources, we propose a simpler unconstrained optimization problem without a balancing term and incoherence constraint via vectorized Hankel lift. We develop two simpler and provable gradient methods based on low-rank factorization, named VGD-VHL and ScalGD-VHL.~
Our methods enjoy lower theoretical iteration complexity compared to the prior method PGD-VHL. 
Additionally, ScalGD-VHL converges fastest whose iteration complexity is independent of $\kappa$. Numerical results verify that our methods exhibit superior computational efficiency and achieve comparable recovery performance to previous approaches. 

\appendices  
\section{Proof of Theorem \ref{thm:VGD-VHL}} \label{apd:pf-VGD}
{We show the linear convergence mechanism after the gradient updating in Appendix \ref{subsec:conver_VGD}.  
Then we finish the proof of Theorem \ref{thm:VGD-VHL} in Appendix \ref{subsec:formpf_VGD}. 
} First of all, we characterize the region of the local basin of attraction for VGD-VHL as:
\begin{align*}
    \B_{\mathrm{v}}(\delta_1,\delta_2)=&\{ \mF=   
    \begin{bmatrix}
        \mL^H\mR^H
    \end{bmatrix}^H|
    \|\mL\mR^H-\mM_\star\|_F\leq\delta_1\sigma_r(\mM_\star),\\
    &\|\mL^H\mL-\mR^H\mR\|_F\leq\delta_2\sigma_r(\mM_\star)\}. \numberthis \label{eq:basin_vgd}
\end{align*}

\subsection{{Linear convergence mechanism}} \label{subsec:conver_VGD}
{
 In what follows, the implicit balancing mechanism for VGD-VHL is shown in Lemma~\ref{lem:approx-balance}. To establish the linear convergence, 
   it is important to demonstrate PL inequality and the smoothness condition for $f_0(\mF)$, seeing Lemma~\ref{lem:pl-fro}  and Lemma~\ref{lem:smoothness-fro}.  
However, the gradient updating direction is $\nabla f(\mF)=\nabla f_0(\mF)+\nabla f_1(\mF)$, not $\nabla f_0(\mF)$. Fortunately, we can obtain the intuition that $\nabla f(\mF)\approx\nabla f_0(\mF)$ by establishing $\|\nabla f_1(\mF)\|_F\leq\delta \|\nabla f_0(\mF)\|_F$ in Lemma~\ref{lem:upbd-perturbgd-fro} via a less incoherence-demanding analysis, where $\delta$ is a small constant. Then the linear convergence result can be established in Lemma~\ref{lem:convergence-vanilla}.}

We introduce the approximate balancing of factors after gradient updating, ensuring the iterates lie in the local region. 
 
 \begin{lemma}[Approximate balancing]\label{lem:approx-balance}
     For the gradient updating $\mF_{+}=\mF-\eta\nabla f(\mF)$ {where $\mF_{+}=\begin{bmatrix}
 	\mL_{+}^H~\mR_{+}^H
 \end{bmatrix}^H$}, one has
     \begin{align*}
         \|\mL_{+}^H\mL_{+}-\mR_{+}^H\mR_{+}\|_F\leq\|\mL^H\mL-\mR^H\mR\|_F+\eta^2\|\nabla f(\mF)\|_F^2. 
     \end{align*}
 \end{lemma}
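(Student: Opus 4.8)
The plan is to exploit the shared structure of the two Wirtinger gradients, which makes the balancing discrepancy change only at second order in $\eta$. First I would abbreviate the common factor by writing $\mE = \G\A^*(\A\G^*(\mL\mR^H)-\vy)+(\I-\G\G^*)(\mL\mR^H)$, so that the gradient expressions collapse to $\nabla_{\mL}f(\mF)=\mE\mR$ and $\nabla_{\mR}f(\mF)=\mE^H\mL$. Writing $\mG_L:=\nabla_{\mL}f(\mF)$ and $\mG_R:=\nabla_{\mR}f(\mF)$ for brevity, the updates are $\mL_{+}=\mL-\eta\mE\mR$ and $\mR_{+}=\mR-\eta\mE^H\mL$, and I would expand both Gram matrices directly:
\begin{align*}
\mL_{+}^H\mL_{+}&=\mL^H\mL-\eta\big(\mL^H\mE\mR+\mR^H\mE^H\mL\big)+\eta^2\,\mG_L^H\mG_L,\\
\mR_{+}^H\mR_{+}&=\mR^H\mR-\eta\big(\mR^H\mE^H\mL+\mL^H\mE\mR\big)+\eta^2\,\mG_R^H\mG_R.
\end{align*}

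The crucial observation—and really the only content of the lemma—is that the two first-order cross terms are \emph{identical} and hence cancel upon subtraction. The cross term in $\mL_{+}^H\mL_{+}$ is $\mL^H\mE\mR+\mR^H\mE^H\mL$, which is exactly the cross term in $\mR_{+}^H\mR_{+}$; this is an immediate consequence of both gradients being assembled from the \emph{same} matrix $\mE$ (one multiplied by $\mR$ on the right, the other by $\mL$ after conjugate transposition). Subtracting yields the exact identity
\begin{align*}
\mL_{+}^H\mL_{+}-\mR_{+}^H\mR_{+}=\big(\mL^H\mL-\mR^H\mR\big)+\eta^2\big(\mG_L^H\mG_L-\mG_R^H\mG_R\big),
\end{align*}
so the balancing gap is perturbed purely at order $\eta^2$.

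Finally I would pass to Frobenius norms. The triangle inequality isolates $\|\mL^H\mL-\mR^H\mR\|_F$, and for the remainder I would use the elementary sub-multiplicative bound $\|\mG^H\mG\|_F\le\|\mG\|\,\|\mG\|_F\le\|\mG\|_F^2$ valid for any matrix $\mG$, giving
\begin{align*}
\|\mG_L^H\mG_L-\mG_R^H\mG_R\|_F\le\|\mG_L\|_F^2+\|\mG_R\|_F^2=\|\nabla f(\mF)\|_F^2,
\end{align*}
where the last equality holds because $\nabla f(\mF)=\begin{bmatrix}\mG_L^H&\mG_R^H\end{bmatrix}^H$. Combining the two estimates gives the claim. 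I expect no genuine obstacle: the whole argument rests on the first-order cancellation, which is purely algebraic and needs no incoherence, no randomness, and no spectral control of $\mE$; the only remaining ingredient is the trivial second-order inequality above.
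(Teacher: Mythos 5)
Your proof is correct and follows essentially the same route as the paper: both identify the common factor $\mE$ so that $\nabla_{\mL}f=\mE\mR$, $\nabla_{\mR}f=\mE^H\mL$, observe that the first-order cross terms $\mL^H\mE\mR+\mR^H\mE^H\mL$ cancel in the difference of Gram matrices, and bound the remaining $\eta^2$ term by $\|\mE\mR\|_F^2+\|\mE^H\mL\|_F^2=\|\nabla f(\mF)\|_F^2$. No issues.
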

 \begin{proof}        
We set $\mE=(\G(\A^{*}\A-\I)\G^{*}+\I)(\mL\mR^H-\mM_\star)$, and it is obvious that $\nabla_{\mL} f(\mF)=\mE\mR$ and $\nabla_{\mR} f(\mF)=\mE^H\mL$.
    From $\mL_{+}=\mL-\eta\nabla_{\mL} f(\mF)=\mL-\eta\mE\mR$ and $\mR_{+}=\mR-\eta\nabla_{\mR} f(\mF)=\mR-\eta\mE^H\mL$, one can obtain:
    \begin{align*}
        &\|\mL_{+}^H\mL_{+}-\mR_{+}^H\mR_{+}\|_F\\
        &{=}\|\mL^H\mL-\mR^H\mR+\eta^2(\mE\mR)^H(\mE\mR)-\eta^2(\mE^H\mL)^H(\mE^H\mL)\|_F\\
        &{\leq} \|\mL^H\mL-\mR^H\mR\|_F+\eta^2(\|\mE\mR\|_F^2+\|\mE^H\mL\|_F^2)
       \\&{=} \|\mL^H\mL-\mR^H\mR\|_F+\eta^2\|\nabla f(\mF)\|_F^2,
    \end{align*}
where some cross-terms are canceled in the second line.
 \end{proof}
 When  $\|\mL^H\mL-\mR^H\mR\|_F$ and $\eta^2\|\nabla f(\mF)\|_F^2$ are small during the trajectories, $ \|\mL_{+}^H\mL_{+}-\mR_{+}^H\mR_{+}\|_F$ is upper bounded, which means it is kept approximately balanced. 
 
 Then we introduce the PL inequality {in the local region} and 
 denote $f_0^\star$ as the minimum value of $f_0(\mF)$, which is zero. 

\begin{lemma}[PL inequality]\label{lem:pl-fro}
Provided $\mF\in\B_{\mathrm{v}}(\frac{1}{64},\frac{1}{20}),$ one has 
    \begin{align*}
   	\norm{\nabla f_0(\mF)}_F^2\geq \frac{\sigma_r(\mM_\star)}{7}\left(f_0(\mF)-f_0^\star\right).
     \end{align*}
\end{lemma}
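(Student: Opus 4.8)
The plan is to reduce the statement to a purely linear-algebraic inequality about the factors and then exploit the geometry of the local region $\B_{\mathrm{v}}(\tfrac{1}{64},\tfrac{1}{20})$. Writing $\vDelta = \mL\mR^H - \mM_\star$, a direct Wirtinger computation gives $\nabla_{\mL}f_0(\mF)=\vDelta\mR$ and $\nabla_{\mR}f_0(\mF)=\vDelta^H\mL$, so that $\|\nabla f_0(\mF)\|_F^2 = \|\vDelta\mR\|_F^2 + \|\mL^H\vDelta\|_F^2$ while $f_0(\mF)-f_0^\star = \tfrac12\|\vDelta\|_F^2$. Hence it suffices to prove $\|\vDelta\mR\|_F^2 + \|\mL^H\vDelta\|_F^2 \ge \tfrac{\sigma_r(\mM_\star)}{14}\|\vDelta\|_F^2$.

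First I would pass from the factors to the orthogonal projectors $\mP_{\mL}$ and $\mP_{\mR}$ onto $\mathrm{col}(\mL)$ and $\mathrm{col}(\mR)$. Using thin SVDs of $\mL$ and $\mR$ one obtains $\|\mL^H\vDelta\|_F^2 \ge \sigma_r^2(\mL)\|\mP_{\mL}\vDelta\|_F^2$ and $\|\vDelta\mR\|_F^2 \ge \sigma_r^2(\mR)\|\vDelta\mP_{\mR}\|_F^2$, so the left-hand side is at least $v^2\big(\|\mP_{\mL}\vDelta\|_F^2 + \|\vDelta\mP_{\mR}\|_F^2\big)$ with $v^2 = \sigma_r^2(\mL)\wedge\sigma_r^2(\mR)$. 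The singular-value floor $v^2 \gtrsim \sigma_r(\mM_\star)$ comes from the two defining inequalities of the local region: the closeness bound $\|\vDelta\|_F\le \tfrac{1}{64}\sigma_r(\mM_\star)$ gives $\sigma_r(\mL\mR^H)\ge (1-\tfrac{1}{64})\sigma_r(\mM_\star)$ by Weyl, and the approximate balancing $\|\mL^H\mL-\mR^H\mR\|_F\le\tfrac{1}{20}\sigma_r(\mM_\star)$ transfers this to the individual factors through a standard perturbation bound, yielding $\sigma_r^2(\mL),\sigma_r^2(\mR)\ge c_0\,\sigma_r(\mM_\star)$ with $c_0$ close to one.

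It then remains to show the captured energy $\|\mP_{\mL}\vDelta\|_F^2+\|\vDelta\mP_{\mR}\|_F^2$ dominates $\|\vDelta\|_F^2$. The key identity is that the doubly-orthogonal residual equals $\mP_{\mL}^{\perp}\vDelta\,\mP_{\mR}^{\perp} = -\,\mP_{\mL}^{\perp}\mM_\star\mP_{\mR}^{\perp}$, because $\mP_{\mL}^{\perp}\mL = \bzero$ and $\mR^H\mP_{\mR}^{\perp}=\bzero$. Decomposing $\|\vDelta\|_F^2$ orthogonally along $\mP_{\mL},\mP_{\mL}^{\perp}$ and $\mP_{\mR},\mP_{\mR}^{\perp}$ then shows $\|\mP_{\mL}\vDelta\|_F^2+\|\vDelta\mP_{\mR}\|_F^2 \ge \|\vDelta\|_F^2 - \|\mP_{\mL}^{\perp}\mM_\star\mP_{\mR}^{\perp}\|_F^2$.

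The main obstacle is to show that this last term is negligible, namely $\|\mP_{\mL}^{\perp}\mM_\star\mP_{\mR}^{\perp}\|_F \lesssim \tfrac{\sigma_1(\mM_\star)}{\sigma_r^2(\mM_\star)}\|\vDelta\|_F^2$. Writing $\mM_\star=\mU_\star\vSS_\star\mV_\star^H$, I would bound the term by $\sigma_1(\mM_\star)\,\|\mP_{\mL}^{\perp}\mU_\star\|\,\|\mP_{\mR}^{\perp}\mV_\star\|$, and then control each principal-angle factor by $O(\|\vDelta\|_F/\sigma_r(\mM_\star))$ via a Davis--Kahan-type $\sin\Theta$ estimate, so that their product is genuinely second order in $\|\vDelta\|_F$. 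Since this leaves $\|\mP_{\mL}\vDelta\|_F^2+\|\vDelta\mP_{\mR}\|_F^2 \ge (1-\epsilon)\|\vDelta\|_F^2$ with $\epsilon$ tiny on the local region, combining with the floor $v^2 \ge c_0\sigma_r(\mM_\star)$ yields $\|\nabla f_0(\mF)\|_F^2 \ge c_0(1-\epsilon)\,\sigma_r(\mM_\star)\|\vDelta\|_F^2$. The comfortable slack in the constants, with $c_0(1-\epsilon)$ near $0.9$ and hence far above $\tfrac{1}{14}$, then delivers the claimed factor $\tfrac{\sigma_r(\mM_\star)}{7}$ against $f_0(\mF)-f_0^\star=\tfrac12\|\vDelta\|_F^2$.
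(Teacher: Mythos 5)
Your route is genuinely different from the paper's. The paper deduces this lemma in two lines from its scaled-norm machinery: it writes $\norm{\nabla f_0(\mF)}_F^2\geq(\sigma_r^2(\mL)\wedge\sigma_r^2(\mR))\,(\norm{\nabla f_0(\mF)}_{\mL,\mR}^*)^2$, invokes the well-conditionedness bound $\sigma_r^2(\mL)\wedge\sigma_r^2(\mR)\geq\frac{4}{5}\sigma_r(\mM_\star)$ from Lemma~\ref{lem:wellcond}, and then the scaled PL inequality $(\norm{\nabla f_0(\mF)}_{\mL,\mR}^*)^2\geq\frac{9}{50}(f_0(\mF)-f_0^\star)$ of Lemma~\ref{lem:pl-scale}, whose proof rests on the Procrustes alignment $\mQ^\natural$, the decomposition $\mL\mR^H-\mM_\star=\bPhi+\bPsi$, and a dual test-direction argument. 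Your projector decomposition $\|\mP_{\mL}\vDelta\|_F^2+\|\vDelta\mP_{\mR}\|_F^2\geq\|\vDelta\|_F^2-\|\mP_{\mL}^{\perp}\vDelta\mP_{\mR}^{\perp}\|_F^2$ combined with the identity $\mP_{\mL}^{\perp}\vDelta\mP_{\mR}^{\perp}=-\mP_{\mL}^{\perp}\mM_\star\mP_{\mR}^{\perp}$ is more elementary, avoids the alignment matrix entirely, and (once the residual is controlled) would give a prefactor near $\frac{8}{5}\sigma_r(\mM_\star)$ instead of $\frac{1}{7}\sigma_r(\mM_\star)$, so it is arguably sharper.

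There is, however, one step that does not close as written. Bounding $\|\mP_{\mL}^{\perp}\mM_\star\mP_{\mR}^{\perp}\|_F\leq\sigma_1(\mM_\star)\,\|\mP_{\mL}^{\perp}\mU_\star\|\,\|\mP_{\mR}^{\perp}\mV_\star\|$ and using the (correct) angle estimates $\|\mP_{\mL}^{\perp}\mU_\star\|,\|\mP_{\mR}^{\perp}\mV_\star\|\leq\|\vDelta\|_F/\sigma_r(\mM_\star)$ yields $\|\mP_{\mL}^{\perp}\mM_\star\mP_{\mR}^{\perp}\|_F\leq\kappa\|\vDelta\|_F^2/\sigma_r(\mM_\star)\leq\frac{\kappa}{64}\|\vDelta\|_F$, which is \emph{not} a tiny fraction of $\|\vDelta\|_F$ uniformly in the condition number: for $\kappa\gtrsim 64$ your bound $(1-\epsilon)\|\vDelta\|_F^2$ becomes vacuous, while the lemma carries no restriction on $\kappa$. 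The repair is to keep $\vSS_\star$ attached to one side rather than pulling out $\sigma_1(\mM_\star)$: write $\|\mP_{\mL}^{\perp}\mM_\star\mP_{\mR}^{\perp}\|_F\leq\|\mP_{\mL}^{\perp}\mU_\star\|\cdot\|\mM_\star\mP_{\mR}^{\perp}\|_F$, note $\|\mM_\star\mP_{\mR}^{\perp}\|_F=\|(\mM_\star-\mL\mR^H)\mP_{\mR}^{\perp}\|_F\leq\|\vDelta\|_F$ and $\|\mP_{\mL}^{\perp}\mU_\star\|=\|\mP_{\mL}^{\perp}\vDelta\mV_\star\vSS_\star^{-1}\|\leq\|\vDelta\|_F/\sigma_r(\mM_\star)\leq\frac{1}{64}$, so that $\|\mP_{\mL}^{\perp}\mM_\star\mP_{\mR}^{\perp}\|_F\leq\frac{1}{64}\|\vDelta\|_F$ with no $\kappa$. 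With that $\kappa$-free bound, $\epsilon\leq\frac{1}{4096}$, and your argument is correct and delivers the stated constant with ample slack.
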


\begin{proof}
See Appendix~\ref{apd:pl-fro}.
\end{proof}

 { The gradient updating direction is not $\nabla f_0(\mF)$, but we can obtain the intuition that $\nabla f(\mF)\approx\nabla f_0(\mF)$ by establishing $\|\nabla f_1(\mF)\|_F\leq\delta \|\nabla f_0(\mF)\|_F$ in Lemma~\ref{lem:upbd-perturbgd-fro}, via our less incoherence-demanding analysis.}

\begin{lemma}[Upper bound of $\|\nabla f_1(\mF)\|_F$]\label{lem:upbd-perturbgd-fro}
 Provided $n\gtrsim O(\delta_0 \mu_0\mu_1sr\log(sn))$
 ~and $\mF\in\B_{\mathrm{v}}(\frac{\delta_0}{20\sqrt{\kappa \mu_0 s}},\frac{1}{20})$ where $\delta_0\leq\frac{1}{4},$ one has
\begin{align*}
    \|\nabla f_1(\mF)\|_F\leq&\frac{25}{4}\delta_0 \norm{\nabla f_0(\mF)}_F,
\end{align*}
    with probability at least $1-(sn)^{-c}$.
\end{lemma}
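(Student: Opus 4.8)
\emph{Closed form for $\nabla f_1$.} The plan is to first obtain a clean expression for $\nabla f_1$, then reduce the claim to a restricted-isometry-free bound on a single perturbation operator, and finally convert that bound into one phrased against $\norm{\nabla f_0(\mF)}_F$. Writing $\vDelta := \mL\mR^H - \mM_\star$ and using $\vy = \A\G^{*}(\mM_\star)$ together with $(\I-\G\G^{*})(\mM_\star)=\bzero$, the full gradient computed in the proof of Lemma~\ref{lem:approx-balance} is $\nabla_{\mL} f(\mF)=\mE\mR$ with $\mE=(\G(\A^{*}\A-\I)\G^{*}+\I)(\vDelta)$, whereas $\nabla_{\mL} f_0(\mF)=\vDelta\mR$. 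Subtracting gives $\nabla_{\mL} f_1(\mF)=\mathcal{K}(\vDelta)\mR$ and $\nabla_{\mR} f_1(\mF)=\mathcal{K}(\vDelta)^H\mL$, where $\mathcal{K}:=\G(\A^{*}\A-\I)\G^{*}$. Hence $\norm{\nabla f_1(\mF)}_F^2=\norm{\mathcal{K}(\vDelta)\mR}_F^2+\norm{\mathcal{K}(\vDelta)^H\mL}_F^2$, and it suffices to bound $\norm{\mathcal{K}(\vDelta)\mR}_F$ and $\norm{\mathcal{K}(\vDelta)^H\mL}_F$ by $O(\delta_0\sqrt{\sigma_r(\mM_\star)})\norm{\vDelta}_F$.

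\emph{RIP-free operator control.} The crux is that $\mathcal{K}$ has large operator norm on the full space, so we cannot simply pull $\mR$ out and apply a restricted isometry to the low-rank $\vDelta$ --- especially because the iterate factors $\mL,\mR$ are \emph{not} assumed incoherent. The remedy, and the source of the less incoherence-demanding improvement over \cite{Mao2022}, is to charge everything to the ground-truth incoherence only. Let $T$ be the tangent space of $\mM_\star$ at $\mU_\star\vSS_\star\mV_\star^H$ and split $\vDelta=\P_T(\vDelta)+\P_{T^{\perp}}(\vDelta)$. Under Assumptions~\ref{assumption1}--\ref{assumption2}, matrix Bernstein concentration (as in \cite{Chen2022,Mao2022}) controls $\mathcal{K}$ restricted to $T$: under the stated sample-complexity hypothesis one obtains tangent-space bounds of the form $\norm{\P_T\mathcal{K}\P_T}\le\delta_0$ and $\norm{\P_T\mathcal{K}\P_{T^{\perp}}}\lesssim\delta_0$, precisely because $T$ is spanned by the incoherent $\mU_\star,\mV_\star$.

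\emph{Handling the two components.} For the in-tangent part $\P_T(\vDelta)$ the above bounds apply directly, yielding $O(\delta_0)\norm{\vDelta}_F$ after pairing with $\mR$ and using $\norm{\mR}\approx\sqrt{\sigma_1(\mM_\star)}$. The delicate piece is $\mathcal{K}(\P_{T^{\perp}}(\vDelta))\mR$: here $\P_{T^{\perp}}(\vDelta)=(\I-\mU_\star\mU_\star^H)\mL\mR^H(\I-\mV_\star\mV_\star^H)$ is low rank but carries the (possibly incoherent) iterate factors, so the concentration tools do not apply to it. I expect this to be the main obstacle. The plan is to exploit the standard second-order smallness $\norm{\P_{T^{\perp}}(\vDelta)}_F\le\norm{\vDelta}_F^2/\sigma_r(\mM_\star)$ valid on the basin, and to keep $\mathcal{K}(\P_{T^{\perp}}(\vDelta))$ paired with $\mR$ so that only ground-truth quantities and a coarse bound on $\mathcal{K}$ restricted to the rank-$\le 2r$ Hankel-structured range are charged. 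The smallness $\norm{\vDelta}_F\le\frac{\delta_0}{20\sqrt{\kappa\mu_0 s}}\sigma_r(\mM_\star)$ guaranteed by $\B_{\mathrm{v}}$ is exactly what absorbs the resulting $\sqrt{\kappa\mu_0 s}$ factor and keeps this term at order $\delta_0\sqrt{\sigma_r(\mM_\star)}\norm{\vDelta}_F$.

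\emph{Conversion to $\nabla f_0$.} Collecting the two pieces gives $\norm{\nabla f_1(\mF)}_F\le C\delta_0\sqrt{\sigma_r(\mM_\star)}\norm{\vDelta}_F$ for an explicit constant $C$. To finish I would invoke the lower bound underlying the PL inequality, namely $\norm{\nabla f_0(\mF)}_F^2=\norm{\vDelta\mR}_F^2+\norm{\vDelta^H\mL}_F^2\gtrsim\sigma_r(\mM_\star)\norm{\vDelta}_F^2$ on $\B_{\mathrm{v}}$ (cf. Lemma~\ref{lem:pl-fro}), where approximate balancedness from Lemma~\ref{lem:approx-balance} controls $\sigma_r(\mL),\sigma_r(\mR)\approx\sqrt{\sigma_r(\mM_\star)}$. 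Dividing the two displays and tracking constants yields $\norm{\nabla f_1(\mF)}_F\le\frac{25}{4}\delta_0\norm{\nabla f_0(\mF)}_F$, which is the claim; the probability $1-(sn)^{-c}$ is inherited from the single concentration event used to bound $\mathcal{K}$ on $T$.
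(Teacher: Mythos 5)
Your overall architecture matches the paper's: you correctly identify $\nabla_{\mL}f_1(\mF)=\G(\A^{*}\A-\I)\G^{*}(\vDelta)\mR$, the fact that only ground-truth incoherence is available, the second-order smallness of the off-tangent part, and the final conversion via the PL-type lower bound on $\|\nabla f_0(\mF)\|_F$. (For the record, the paper does not work directly in the Frobenius norm: it first bounds $\|\nabla f_1(\mF)\|_F\le(\sigma_1(\mL)\vee\sigma_1(\mR))\,\|\nabla f_1(\mF)\|_{\mL,\mR}^{*}$ and then reuses the scaled-dual-norm bound \eqref{eq:upbdfv-gd-scale} from the ScalGD analysis with $\delta=\delta_0/\sqrt{\kappa}$, the $\sqrt\kappa$ exactly cancelling the $\sigma_1$ versus $\sigma_r$ mismatch; and it decomposes $\vDelta=\bPhi+\bPsi$ with $\bPhi\in T$ rather than using $\P_T+\P_{T^\perp}$, but these are cosmetic differences.)

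The genuine gap is in your treatment of the tangential component. You assert that the tangent-space concentration "applies directly" to $\mathcal{K}(\P_T\vDelta)\mR$ "after pairing with $\mR$ and using $\|\mR\|\approx\sqrt{\sigma_1(\mM_\star)}$", and you invoke a bound $\|\P_T\mathcal{K}\P_{T^\perp}\|\lesssim\delta_0$. Neither step is valid as stated: the only concentration available (Lemma~\ref{lem:tancon_inq}) is $\|\P_T\G(\A^{*}\A-\I)\G^{*}\P_T\|\le\varepsilon$, with projections on \emph{both} sides; without the outer $\P_T$ the operator is only bounded by $\|\P_T\mathcal{K}\|\le\sqrt{\mu_0 s(1+\varepsilon)}+1$, and no bound on $\|\P_T\mathcal{K}\P_{T^\perp}\|$ of order $\delta_0$ is established (or used) in the paper. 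Since $\mathcal{K}(\P_T\vDelta)$ does not lie in $T$, pulling out $\|\mR\|$ as an operator norm destroys the right-hand tangent structure and leaves you needing $\|\mathcal{K}\P_T\|\lesssim\delta_0$, which is false. The missing idea is the additional split of the right factor, $\mR=\mR_\star+\vDeltar$: since $\mZ\mV_\star=\P_T(\mZ)\mV_\star$ for any $\mZ$, the $\mR_\star$-component reinstates the outer projection and lets you use $\|\P_T\mathcal{K}\P_T\|\le\varepsilon$, while the $\vDeltar$-component is $O(\delta_0/\sqrt{\kappa\mu_0 s})$-small on the basin and absorbs the crude $\sqrt{\mu_0 s}$ bound on $\mathcal{K}$. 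This is exactly the four-term decomposition $J_1,\dots,J_4$ in Appendix~\ref{apd:upbd-grad-scale}, obtained by pairing $\mE$ against $\Tilde{\mV}(\mR^H\mR)^{-\frac{1}{2}}(\mR_\star+\vDeltar)^H$, and it is the step that makes the "less incoherence-demanding" claim actually go through; without it your bound on the in-tangent term does not close.
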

\begin{proof}
   See Appendix~\ref{apd:upbd-grad-fro}.
\end{proof}

 Lemma~\ref{lem:upbd-perturbgd-fro} tells that the perturbation from the sensing operator 
 in blind super-resolution is sharply small and {thus 
 $\mF_{+}\approx \mF-\eta\nabla f_0(\mF)$.} {The linear convergence towards $f_0(\mF)$ can be derived as in \cite{karimi2016}, by PL inequality Lemma~\ref{lem:pl-fro} and the smoothness condition Lemma~\ref{lem:smoothness-fro}  in what follows.} 

 \begin{lemma}[Local smoothness
 ]\label{lem:smoothness-fro}
Let $\mF_{+}=\mF-\eta\nabla f(\mF)$ for $\eta\leq 1/\sigma_1(\mM_\star)$.  
When $\mF\in\B_{\mathrm{v}}(\frac{1}{20},\frac{1}{20})$ and $
    \|\nabla f_1(\mF)\|_F\leq \varepsilon \norm{\nabla f_0(\mF)}_F$, one has 
   \begin{align*}
		f_0(\mF_+)\leq f_0(\mF)+\Real\ip{\nabla f_0(\mF)}{\mF_+-\mF}+\frac{l}{2}\norm{\mF_{+}-\mF}_F^2,
	\end{align*} 
 where $l=(4+\frac{\varepsilon^2}{80})\sigma_1(\mM_\star)$. 
\end{lemma}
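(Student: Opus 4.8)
The plan is to prove the inequality by a direct second-order expansion of $f_0$ around $\mF$ and then control the resulting remainder using the geometry of the local region $\B_{\mathrm v}(\frac{1}{20},\frac{1}{20})$ together with the step-size and perturbation hypotheses. I write $\vDelta=\mF_+-\mF=-\eta\nabla f(\mF)$ with blocks $\vDeltal=\mL_+-\mL$ and $\vDeltar=\mR_+-\mR$, and set $\vDeltaa=\mL\vDeltar^H+\vDeltal\mR^H$ (the first-order part) and $\vDeltab=\vDeltal\vDeltar^H$ (the second-order part), so that $\mL_+\mR_+^H-\mM_\star=(\mL\mR^H-\mM_\star)+\vDeltaa+\vDeltab$. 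Expanding $f_0(\mF_+)=\frac12\|\mL_+\mR_+^H-\mM_\star\|_F^2$ and using the exact gradient formulas $\nabla_{\mL} f_0(\mF)=(\mL\mR^H-\mM_\star)\mR$ and $\nabla_{\mR} f_0(\mF)=(\mL\mR^H-\mM_\star)^H\mL$, one verifies the algebraic identity $\Real\ip{\mL\mR^H-\mM_\star}{\vDeltaa}=\Real\ip{\nabla f_0(\mF)}{\vDelta}$, which matches the first-order term in the claimed bound exactly. Hence the statement reduces to showing that the remainder
\begin{align*}
\mathcal{R}:=\Real\ip{\mL\mR^H-\mM_\star}{\vDeltab}+\tfrac12\|\vDeltaa+\vDeltab\|_F^2
\end{align*}
satisfies $\mathcal{R}\le \frac{l}{2}\|\vDelta\|_F^2$ with $l=(4+\frac{\varepsilon^2}{80})\sigma_1(\mM_\star)$.

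Second, I would establish a spectral bound on the factors inside the region, which is the structural engine of the argument. Since $\mF\in\B_{\mathrm v}(\frac{1}{20},\frac{1}{20})$ gives $\|\mL\mR^H-\mM_\star\|_F\le\frac{1}{20}\sigma_r(\mM_\star)$, Weyl's inequality yields $\sigma_1(\mL\mR^H)\le\sigma_1(\mM_\star)+\frac{1}{20}\sigma_r(\mM_\star)$; combined with the approximate balancedness $\|\mL^H\mL-\mR^H\mR\|_F\le\frac{1}{20}\sigma_r(\mM_\star)$ this gives $\|\mL\|^2,\|\mR\|^2\le \sigma_1(\mM_\star)+O(\sigma_r(\mM_\star))$, hence $\|\mL\|^2+\|\mR\|^2\le 2\sigma_1(\mM_\star)$ up to lower-order corrections. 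This factor bound produces the leading constant: writing $\tfrac12\|\vDeltaa+\vDeltab\|_F^2\le\|\vDeltaa\|_F^2+\|\vDeltab\|_F^2$ and using the Cauchy--Schwarz estimate $\|\vDeltaa\|_F^2\le(\|\mL\|^2+\|\mR\|^2)(\|\vDeltal\|_F^2+\|\vDeltar\|_F^2)=(\|\mL\|^2+\|\mR\|^2)\|\vDelta\|_F^2$, the first-order-squared term contributes at most $2\sigma_1(\mM_\star)\|\vDelta\|_F^2$, i.e. the $\frac{4}{2}\sigma_1(\mM_\star)$ piece of $l$ (the factor-two loss from splitting $\frac12\|\vDeltaa+\vDeltab\|_F^2$ being what turns the bare $2\sigma_1$ into $4\sigma_1$).

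Third, I would absorb the genuinely higher-order pieces $\Real\ip{\mL\mR^H-\mM_\star}{\vDeltab}$ and $\|\vDeltab\|_F^2$ into the small correction term. The point is that $\|\vDelta\|_F=\eta\|\nabla f(\mF)\|_F$ is itself tiny inside the region: the step size $\eta\le 1/\sigma_1(\mM_\star)$, the perturbation hypothesis $\|\nabla f_1(\mF)\|_F\le\varepsilon\|\nabla f_0(\mF)\|_F$ (so $\|\nabla f(\mF)\|_F\le(1+\varepsilon)\|\nabla f_0(\mF)\|_F$), and $\|\nabla f_0(\mF)\|_F^2\le(\|\mL\|^2+\|\mR\|^2)\|\mL\mR^H-\mM_\star\|_F^2$ together bound $\|\vDelta\|_F^2\lesssim \sigma_1(\mM_\star)^{-1}\sigma_r(\mM_\star)^2\le \sigma_1(\mM_\star)$ with a tiny prefactor. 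Using $\|\vDeltab\|_F\le\|\vDeltal\|_F\|\vDeltar\|_F\le\tfrac12\|\vDelta\|_F^2$ makes both $\Real\ip{\mL\mR^H-\mM_\star}{\vDeltab}$ and $\|\vDeltab\|_F^2$ quartic in $\vDelta$, so each is bounded by a small multiple of $\sigma_1(\mM_\star)\|\vDelta\|_F^2$; tracking how the $\nabla f_1$ component of the update enters these second-order terms is where the $\varepsilon$-dependence of the correction is produced.

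The main obstacle is the tight constant bookkeeping needed to land on exactly $l=(4+\frac{\varepsilon^2}{80})\sigma_1(\mM_\star)$: one must carefully account for (i) the $O(\sigma_r(\mM_\star))$ slack in $\|\mL\|^2+\|\mR\|^2\le 2\sigma_1(\mM_\star)$ arising from the balancing and proximity radii of $\B_{\mathrm v}$, and (ii) the precise $\varepsilon$-dependence of the quartic residual through $\|\vDelta\|_F$, so that the entire residual is provably subsumed by the stated correction rather than by a coarser estimate that would inflate the leading coefficient. Establishing the factor-norm bound from the two defining inequalities of $\B_{\mathrm v}$ is the key structural step; the remaining estimates are routine applications of Cauchy--Schwarz and submultiplicativity of the Frobenius norm, and no probabilistic input is needed here since $\mathcal{R}$ is purely deterministic given the hypotheses.
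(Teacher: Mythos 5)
Your overall route is the same as the paper's: expand $f_0(\mF_+)=\frac{1}{2}\|(\mL\mR^H-\mM_\star)+\vDeltaa+\vDeltab\|_F^2$, identify the term linear in $\vDeltaa$ with $\Real\ip{\nabla f_0(\mF)}{\mF_+-\mF}$, and absorb the remainder using the factor bounds valid on $\B_{\mathrm{v}}(\frac{1}{20},\frac{1}{20})$ together with the smallness of $\|\vDelta\|_F=\eta\|\nabla f(\mF)\|_F$. The gap is in the dominant-term bookkeeping, and it is not merely cosmetic. You split $\frac{1}{2}\|\vDeltaa+\vDeltab\|_F^2\le\|\vDeltaa\|_F^2+\|\vDeltab\|_F^2$ and bound $\|\vDeltaa\|_F^2\le(\|\mL\|^2+\|\mR\|^2)\|\vDelta\|_F^2$, asserting $\|\mL\|^2+\|\mR\|^2\le 2\sigma_1(\mM_\star)$ up to lower-order corrections. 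But on $\B_{\mathrm{v}}(\frac{1}{20},\frac{1}{20})$ the provable bound (Lemma~\ref{lem:wellcond}) is $\|\mL\|^2\vee\|\mR\|^2\le\frac{5}{4}\sigma_1(\mM_\star)$, hence $\|\mL\|^2+\|\mR\|^2\le\frac{5}{2}\sigma_1(\mM_\star)$; even via the sharpest Weyl estimate the excess over $2\sigma_1(\mM_\star)$ contains a cross term of order $\sqrt{\sigma_1(\mM_\star)\sigma_r(\mM_\star)}$ per factor, which equals $0.2\,\sigma_1(\mM_\star)$ when $\kappa=1$ and is therefore not lower order. Your chain thus puts a coefficient of at least roughly $\frac{5}{2}\sigma_1(\mM_\star)$ in front of $\|\vDelta\|_F^2$ in $\mathcal{R}$, i.e.\ $l\approx 5\sigma_1(\mM_\star)$, overshooting the stated $l=(4+\frac{\varepsilon^2}{80})\sigma_1(\mM_\star)$; all remaining terms are nonnegative, so nothing compensates, and the enlarged constant would also strain the downstream requirement $l\eta\le\frac{2}{5}$ with $\eta\le\frac{1}{25\sigma_1(\mM_\star)}$ in Lemma~\ref{lem:convergence-vanilla}.

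The repair is a less lossy treatment of the square. The paper keeps the prefactor $\frac{1}{2}$ and applies $(a+b+c)^2\le 3(a^2+b^2+c^2)$ to the three summands $\mL\mV_R^H$, $\mV_L\mR^H$, $\mV_L\mV_R^H$, so the dominant contribution is $\frac{3}{2}(\|\mL\|^2\vee\|\mR\|^2)\|\vDelta\|_F^2\le\frac{15}{8}\sigma_1(\mM_\star)\|\vDelta\|_F^2$, strictly below $2\sigma_1(\mM_\star)\|\vDelta\|_F^2$, leaving room for the cross term $\Real\ip{\mL\mR^H-\mM_\star}{\vDeltab}\le\frac{1}{40}\sigma_1(\mM_\star)\|\vDelta\|_F^2$ and the quartic pieces, where $\varepsilon$ enters through $\|\vDelta\|_F^2\le\frac{1+\varepsilon^2}{80}\sigma_1(\mM_\star)$. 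Alternatively, a Young-type splitting $\frac{1}{2}\|\vDeltaa+\vDeltab\|_F^2\le\frac{1+\tau}{2}\|\vDeltaa\|_F^2+\frac{1+1/\tau}{2}\|\vDeltab\|_F^2$ with a small $\tau$ would also work, since $\|\vDeltab\|_F\le\frac{1}{2}\|\vDelta\|_F^2$ makes the second term quartic in $\vDelta$. The rest of your outline (the first-order identity, the bound on $\|\vDelta\|_F$ via the step size and the hypothesis on $\nabla f_1$, and the absorption of the quartic terms) matches the paper's argument and is sound.
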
 
\begin{proof}
    See Appendix \ref{apd:smooth-fro}. 
\end{proof}

{Finally, we show the linear convergence of $f_0(\mF_{+})-f_0^\star$.}
\begin{lemma}[Linear convergence of VGD-VHL]\label{lem:convergence-vanilla}
Let $\mF_{+}=\mF-\eta\nabla f(\mF)$ for $\eta\leq \frac{1}{25\sigma_1(\mM_\star)}$. When $\mF\in\B_\mathrm{v}(\frac{\delta_0}{20\sqrt{\kappa\mu_0 s}},\frac{1}{20})$ and $n\gtrsim O(\delta_0 \mu_0\mu_1sr\log(sn))$ for $\delta_0\leq\frac{2}{25}$, one has  
\begin{align*}
   f_0(\mF_{+})-f_0^\star\leq(1-{\eta\sigma_r(\mM_\star)}/{28})(f_0(\mF)-f_0^\star),
\end{align*}
    with probability at least $1-(sn)^{-c}$. 
\end{lemma}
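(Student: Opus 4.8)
The plan is to chain together the three structural lemmas already established---the local smoothness (Lemma~\ref{lem:smoothness-fro}), the perturbation bound (Lemma~\ref{lem:upbd-perturbgd-fro}), and the PL inequality (Lemma~\ref{lem:pl-fro})---into a one-step contraction for $f_0$. First I would verify that the hypotheses of all three lemmas hold under the standing assumption $\mF\in\B_{\mathrm{v}}(\frac{\delta_0}{20\sqrt{\kappa\mu_0 s}},\frac{1}{20})$ with $\delta_0\leq\frac{2}{25}$. Since $\mu_0 s\geq 1$ and $\kappa\geq 1$, the radius satisfies $\frac{\delta_0}{20\sqrt{\kappa\mu_0 s}}\leq\frac{\delta_0}{20}\leq\frac{1}{250}$, which is below both $\frac{1}{64}$ and $\frac{1}{20}$. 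Because a smaller first radius only shrinks the basin $\B_{\mathrm{v}}(\cdot,\frac{1}{20})$, the point $\mF$ lies simultaneously in the basins required by Lemma~\ref{lem:pl-fro}, Lemma~\ref{lem:smoothness-fro}, and Lemma~\ref{lem:upbd-perturbgd-fro}. In particular, the radius here matches exactly the hypothesis of Lemma~\ref{lem:upbd-perturbgd-fro}, which therefore supplies the perturbation constant $\varepsilon=\frac{25}{4}\delta_0$ to feed into the smoothness descent inequality.

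Next I would start from the descent inequality of Lemma~\ref{lem:smoothness-fro} and substitute the actual update $\mF_+-\mF=-\eta\nabla f(\mF)=-\eta(\nabla f_0(\mF)+\nabla f_1(\mF))$. Expanding the inner product yields the main term $-\eta\|\nabla f_0(\mF)\|_F^2$ together with a cross term $-\eta\,\Real\ip{\nabla f_0(\mF)}{\nabla f_1(\mF)}$, the latter controlled by Cauchy--Schwarz and Lemma~\ref{lem:upbd-perturbgd-fro} by $\frac{25}{4}\delta_0\|\nabla f_0(\mF)\|_F^2$. The quadratic term $\frac{l}{2}\|\mF_+-\mF\|_F^2=\frac{l}{2}\eta^2\|\nabla f_0(\mF)+\nabla f_1(\mF)\|_F^2$ is bounded, again via Lemma~\ref{lem:upbd-perturbgd-fro} and the triangle inequality, by $\frac{l}{2}\eta^2(1+\frac{25}{4}\delta_0)^2\|\nabla f_0(\mF)\|_F^2$. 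Collecting terms produces
\begin{align*}
f_0(\mF_+)-f_0^\star\leq (f_0(\mF)-f_0^\star)-\eta\Big[1-\tfrac{25}{4}\delta_0-\tfrac{l\eta}{2}\big(1+\tfrac{25}{4}\delta_0\big)^2\Big]\|\nabla f_0(\mF)\|_F^2.
\end{align*}

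Then I would show the bracketed coefficient is at least $\frac14$. With $\delta_0\leq\frac{2}{25}$ one gets $\frac{25}{4}\delta_0\leq\frac12$ and $(1+\frac{25}{4}\delta_0)^2\leq\frac94$; moreover $\varepsilon=\frac{25}{4}\delta_0\leq\frac12$ forces the smoothness constant $l=(4+\frac{\varepsilon^2}{80})\sigma_1(\mM_\star)\leq(4+\frac{1}{320})\sigma_1(\mM_\star)$, so the step-size choice $\eta\leq\frac{1}{25\sigma_1(\mM_\star)}$ gives $l\eta\leq\frac{4+1/320}{25}<0.161$ and hence $\frac{l\eta}{2}(1+\frac{25}{4}\delta_0)^2<0.181$. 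Thus the bracket exceeds $1-\frac12-0.181>\frac14$. Substituting this lower bound and invoking the PL inequality $\|\nabla f_0(\mF)\|_F^2\geq\frac{\sigma_r(\mM_\star)}{7}(f_0(\mF)-f_0^\star)$ from Lemma~\ref{lem:pl-fro} converts the gradient term into a multiple of the objective gap and yields
\begin{align*}
f_0(\mF_+)-f_0^\star\leq\Big(1-\tfrac{\eta\sigma_r(\mM_\star)}{28}\Big)(f_0(\mF)-f_0^\star),
\end{align*}
since $\frac14\cdot\frac17=\frac{1}{28}$. The probability bound $1-(sn)^{-c}$ is inherited verbatim from Lemma~\ref{lem:upbd-perturbgd-fro}, the sole randomized ingredient.

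I do not expect a genuine obstacle: this is the standard smoothness-plus-PL descent template, and every nontrivial estimate has been isolated into a preceding lemma. The only care required is bookkeeping of the numerical constants so that the contraction factor lands at exactly $\frac{1}{28}$; the tightest slack arises in checking $\frac{l\eta}{2}(1+\frac{25}{4}\delta_0)^2\leq\frac14$, which is precisely what couples the smallness of $\delta_0$ (so that both the cross term and the inflated smoothness constant stay small) with the smallness of $\eta\sigma_1(\mM_\star)$ mandated in the statement.
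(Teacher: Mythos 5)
Your proposal is correct and follows essentially the same route as the paper's proof: verify the basin hypotheses, feed the perturbation bound $\|\nabla f_1(\mF)\|_F\leq\frac{25}{4}\delta_0\|\nabla f_0(\mF)\|_F\leq\frac12\|\nabla f_0(\mF)\|_F$ into the local smoothness inequality, show the resulting descent coefficient is at least $\frac{\eta}{4}$, and close with the PL constant $\frac{\sigma_r(\mM_\star)}{7}$ to land on $\frac{1}{28}$. The only cosmetic difference is that you bound the cross and quadratic terms via Cauchy--Schwarz and the triangle inequality, whereas the paper uses $\Real\langle \va,\vb\rangle\geq-\frac14\|\va\|^2-\|\vb\|^2$ and $(a+b)^2\leq 2a^2+2b^2$; both yield the same $\frac14$ margin.
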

 \begin{proof}
 {
Firstly, we check the conditions and constant in Lemma~\ref{lem:smoothness-fro} and apply it. From Lemma~\ref{lem:upbd-perturbgd-fro}, one can obtain $\|\nabla f_1(\mF)\|_{F}\leq \frac{25}{4}\delta_0 \norm{\nabla f_0(\mF)}_{F}\leq\frac{1}{2}\norm{\nabla f_0(\mF)}_{F}$. }

Then $\varepsilon=\frac{1}{2}$ and $l\leq\frac{9}{2}\sigma_1(\mM_\star)$ in Lemma~\ref{lem:smoothness-fro}, thus
\begin{align*}
      &f_0(\mF_{+})-f_0(\mF)\\
      &\leq\Real\langle\nabla f_0(\mF),-\eta\nabla f(\mF)\rangle{+}\frac{9}{4}\eta^2\sigma_1(\mM_\star)\norm{\nabla f(\mF)}_F^2
      \\&{\overset{(a)}{\leq}}-\eta\|\nabla f_0(\mF)\|_F^2-\eta\Real\langle\nabla f_0(\mF),\nabla f_1(\mF)\rangle
      \\&\quad+\frac{9}{2}\eta^2\sigma_1(\mM_\star)(\|\nabla f_0(\mF)\|_F^2+\|f_1(\mF)\|_F^2)
      \\&{\overset{(b)}{\leq}}-\eta(\frac{3}{4}-\frac{1}{5})\|\nabla f_0(\mF)\|_F^2+\eta(1+\frac{1}{5})\|\nabla f_1(\mF)\|_F^2
      \\&{\overset{(c)}{\leq}}-\frac{\eta}{4}\|\nabla f_0(\mF)\|_F^2,
  \end{align*}
  where {we split $\nabla f(\mF)$ as $\nabla f_0(\mF)+\nabla f_1(\mF)$ and invoke $(a+b)^2\leq2a^2+2b^2$ in (a)}, and {$\Real\langle \va,\vb\rangle=\|\frac{1}{2}\va+\vb\|_2^2-\frac{1}{4}\|\va\|_2^2-\|\vb\|_2^2\geq -\frac{1}{4}\|\va\|_2^2-\|\vb\|_2^2$, $9\eta\sigma_1(\mM_\star)/2\leq 1/5$ in (b)}, and {(c) results from Lemma~\ref{lem:upbd-perturbgd-fro}}.  

 
 By $\norm{\nabla f_0(\mF)}_F^2\geq \frac{\sigma_r(\mM_\star)}{7}\left(f_0(\mF)-f_0^\star\right)$ in Lemma~\ref{lem:pl-fro}, we can derive that
  \begin{align*}
      f_0(\mF_{+})-f_0^\star\leq(1-{\eta\sigma_r(\mM_\star)}/{28})(f_0(\mF)-f_0^\star).
\end{align*}
 \end{proof}
 \subsection{{Proof of Theorem \ref{thm:VGD-VHL}}} \label{subsec:formpf_VGD}
 
 {We first introduce the initialization condition:
\begin{lemma}[Initialization]\label{lem:spec-init}
    	Given the SVD of $\T_r(\G\A^{*}(\vy))$ being $\vU^0\vSS^0(\vV^0)^H$, the initialization is set as $\mL^0=\vU^0(\vSS^0)^{\frac{1}{2}}$,  $\mR^0=\vV^0(\vSS^0)^{\frac{1}{2}}$.   Under Assumption \ref{assumption1}, Assumption \ref{assumption2} and $n\geq C\varepsilon^{-2}\kappa^2\mu_0^2\mu_1 s^2r^2\log^2(sn)$, we have
     \begin{align*}
         \|\mL^0(\mR^0)^H-\mM_\star\|_F\leq\varepsilon\sigma_r(\mM_\star)/\sqrt{\mu_0 s}
     \end{align*}
     with probability at least $1-(sn)^{-c_1}$ for  
     constants $c_1,C>0$.
\end{lemma}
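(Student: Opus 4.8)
The plan is to recognize that the claimed bound is just a statement about the quality of the spectral estimator, since by construction $\mL^0(\mR^0)^H=\vU^0\vSS^0(\vV^0)^H=\T_r\big(\G\A^*(\vy)\big)$. First I would reduce the measurement term to a clean random operator acting on $\mM_\star$. Because $\H^*\H=\D^2$, the operator $\G=\H\D^{-1}$ is an isometry, i.e. $\G^*\G=\I$, so $\vy=\A\D(\mX_\star)=\A\G^*(\mM_\star)$ and therefore
\begin{align*}
\G\A^*(\vy)=\G\A^*\A\G^*(\mM_\star).
\end{align*}
The whole argument then splits into (i) a spectral-norm concentration bound on $\G\A^*\A\G^*(\mM_\star)-\mM_\star$, and (ii) passing from this bound, through the top-$r$ truncation $\T_r$, to the Frobenius-norm bound in the statement.

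For step (i) I would expand the random operator as a sum of independent rank-structured matrices. Writing $\A^*\A(\mX)=\sum_{j}\langle\vb_j\ve_j^T,\mX\rangle\,\vb_j\ve_j^T$ and using the isotropy condition $\E[\vb_j\vb_j^H]=\mI_s$ from Assumption \ref{assumption2}, one checks that $\E[\A^*\A]=\I$, hence $\E\big[\G\A^*\A\G^*(\mM_\star)\big]=\G\G^*(\mM_\star)=\mM_\star$. Thus $\G\A^*(\vy)-\mM_\star=\sum_j\mS_j$ is a sum of independent, zero-mean random matrices, and I would apply a matrix Bernstein inequality. The key deterministic input is the identity $\G^*(\mM_\star)=\D(\mX_\star)$, so that the $j$-th column of $\G^*(\mM_\star)$ is $\sqrt{w_j}$ times the $j$-th column of $\mX_\star$; the incoherence of $\mM_\star$ (Assumption \ref{assumption1}) controls these column norms, while the bound $|\vb_j[\ell]|^2\le\mu_0$ (Assumption \ref{assumption2}) and the Hankel structure of $\G$ control $\|\G(\vb_j\ve_j^T)\|$. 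Carrying out the per-summand operator-norm bound and the variance proxy, I expect an estimate of the form
\begin{align*}
\big\|\G\A^*(\vy)-\mM_\star\big\|\ \lesssim\ \sqrt{\frac{\mu_0\mu_1 s r\log(sn)}{n}}\,\sigma_1(\mM_\star)
\end{align*}
with probability at least $1-(sn)^{-c_1}$, where I fold the lower-order (max-term) contribution into the same scaling under the stated sample size.

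For step (ii) I would use that both $\T_r(\G\A^*(\vy))$ and $\mM_\star$ have rank at most $r$. By Eckart--Young and Weyl's inequality, $\|\T_r(\mZ)-\mM_\star\|\le 2\|\mZ-\mM_\star\|$, and since the difference has rank at most $2r$, $\|\cdot\|_F\le\sqrt{2r}\,\|\cdot\|$. Combining with the spectral bound yields
\begin{align*}
\big\|\mL^0(\mR^0)^H-\mM_\star\big\|_F\ \lesssim\ \sqrt{r}\cdot\sqrt{\frac{\mu_0\mu_1 s r\log(sn)}{n}}\,\sigma_1(\mM_\star),
\end{align*}
and imposing $n\ge C\varepsilon^{-2}\kappa^2\mu_0^2\mu_1 s^2 r^2\log^2(sn)$ forces the right-hand side below $\varepsilon\sigma_r(\mM_\star)/\sqrt{\mu_0 s}$, since $\sigma_1(\mM_\star)=\kappa\,\sigma_r(\mM_\star)$; the extra $\mu_0 s$ in the denominator is exactly absorbed by the $\mu_0^2\mu_1 s^2$ factor in the sample complexity.

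The main obstacle is the concentration step (i): the Hankel lift $\G$ couples entries across anti-diagonals, so the summands $\mS_j$ are not simple rank-one perturbations and both the uniform bound $L=\max_j\|\mS_j\|$ and the matrix variance $v=\big\|\sum_j\E[\mS_j\mS_j^H]\big\|\vee\big\|\sum_j\E[\mS_j^H\mS_j]\big\|$ must be estimated while carefully tracking the interaction between the weights $w_i$, the incoherence parameters $\mu_0,\mu_1$, and the column structure of $\G^*(\mM_\star)=\D(\mX_\star)$. Getting the $\mu_0\mu_1 s r$ dependence sharp here (rather than a looser power of $s,r$) is what ultimately determines the stated sample complexity; the truncation and norm-conversion in step (ii) are routine once the spectral bound is in hand.
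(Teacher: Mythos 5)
Your proposal is correct and follows essentially the same route as the paper: reduce to a spectral-norm bound on $\T_r(\G\A^*(\vy))-\mM_\star$, convert to the Frobenius norm via the rank-$2r$ factor $\sqrt{2r}$, and absorb everything into the stated sample complexity using $\sigma_1(\mM_\star)=\kappa\sigma_r(\mM_\star)$. The only difference is that the paper's proof is a two-line citation of the spectral concentration bound from \cite[Lemma~VI.3]{Mao2022}, whereas you sketch how that bound itself would be established via matrix Bernstein applied to $\G\A^*\A\G^*(\mM_\star)-\mM_\star$ --- which is indeed how the cited lemma is proved, so your outline unpacks the black box rather than departing from it.
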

\begin{proof}
   By \cite[Lemma~\uppercase\expandafter{\romannumeral6}.3]{Mao2022}, with probability at least $1-(sn)^{-c_1}$, one has
     \begin{align*}
          \|\mL^0(\mR^0)^H-\mM_\star\|_F\leq \sqrt{2r}\|\mL^0(\mR^0)^H-\mM_\star\|\leq\frac{\varepsilon\sigma_r(\mM_\star)}{\sqrt{\mu_0 s}},
     \end{align*}
      provided $n\geq C\varepsilon^{-2}\kappa^2\mu_0^2\mu_1 s^2r^2\log^2(sn)$.
\end{proof}}
By \eqref{eq:estimate_upbd}, one has
$$\|\mX^k-\mX_\star\|_F^2 \leq\|\mL^k(\mR^k)^H-\mM_\star\|_F^2=2(f_0(\mF^k)-f_0^\star).$$
The key is to show the linear convergence of $f_0(\mF^k)-f_0^\star$, with the help of the balancing property of factors.
 We make the following induction hypotheses: \begin{align*}
     f_0(\mF^{k})-f_0^\star\leq\frac{1}{2}\Big(1-\frac{\eta\sigma_r(\mM_\star)}{28}\Big)^{k}(\frac{\delta_0\sigma_r(\mM_\star)}{20\sqrt{\kappa\mu_0 s}})^2,\numberthis\label{eq:induction-a}
 \end{align*}
as well as the approximate balancing of factors $\mL^k$ and $\mR^k$:
\begin{align*}
    \|(\mL^{k})^H\mL^k{-}(\mR^{k})^H\mR^k\|_F{\leq}\big(1-(1-\frac{\eta\sigma_r(\mM_\star)}{28})^{k}\big)\frac{\sigma_r(\mM_\star)}{20},\numberthis\label{eq:induction-b}
\end{align*}
we establish \eqref{eq:induction-a} and \eqref{eq:induction-b} via an inductive manner.

{{For step $k=0$}:} setting $\varepsilon=\frac{\delta_0}{20\sqrt{\kappa}}$ in Lemma~\ref{lem:spec-init}, we have 
 $$f_0(\mF^{0})-f_0^\star=\frac{1}{2}\|\mL^0(\mR^0)^H-\mM_\star\|_F^2\leq\frac{1}{2}(\frac{\delta_0\sigma_r(\mM_\star)}{20\sqrt{\kappa\mu_0 s}})^2,$$ and \eqref{eq:induction-a} is verified. {Besides, \eqref{eq:induction-b} is satisfied as $\|(\mL^0)^H\mL^0{-}(\mR^0)^H\mR^0\|_F{=}0$, which is balanced.}


{From the $k$-th to the $(k+1)$-th step:}
    {from \eqref{eq:induction-a} and \eqref{eq:induction-b}, we conclude that $\mF^k\in\B_\mathrm{v}(\frac{\delta_0}{20\sqrt{\kappa\mu_0 s}},\frac{1}{20})$. Then from Lemma~\ref{lem:convergence-vanilla}, we have}
\begin{align*}
   f_0(\mF^{k+1})-f_0^\star\leq(1-\frac{\eta\sigma_r(\mM_\star)}{28})(f_0(\mF^k)-f_0^\star),
\end{align*}
  thus \eqref{eq:induction-a} is established for the $(k+1)$-th step.
  
  Towards approximately balancing \eqref{eq:induction-b} for the $(k+1)$-th step, we set $c(k)=(1-\frac{\eta\sigma_r(\mM_\star)}{28})^k$ to simplify notations. {Combining Lemma~\ref{lem:approx-balance}, it suffices to show 
\begin{align*}
     \eta^2\|\nabla f(\mF^k)\|_F^2{\leq} \(c(k){-}c(k+1)\)\frac{\sigma_r(\mM_\star)}{20},
\end{align*}
provided $\|(\mL^{k})^H\mL^k{-}(\mR^{k})^H\mR^k\|_F\leq\(1-c(k)\)\frac{\sigma_r(\mM_\star)}{20}$.}

By $\nabla f(\mF^k)=\nabla f_0(\mF^k)+\nabla f_1(\mF^k)$, we establish
{
\begin{align*}
    \eta^2\|\nabla f(\mF^k)\|_F^2&\leq 2\eta^2(\|\nabla f_0(\mF^k)\|_F^2+\|\nabla f_1(\mF^k)\|_F^2)
    \\&\leq 3\eta^2\|\nabla f_0(\mF^k)\|_F^2
    \\&\leq 6\eta^2(\|\mL^k\|^2\vee\|\mR^k\|^2)\|\mL^k(\mR^k)^H-\mM_\star\|_F^2
    \\&\leq\frac{\eta c(k)\sigma_r^2(\mM_\star)}{560}{=}\(c(k){-}c(k+1)\)\frac{\sigma_r(\mM_\star)}{20},
\end{align*}
where the last inequality results from the facts $\|\mL^k\|^2\vee\|\mR^k\|^2\leq\frac{5}{4}\sigma_1(\mM_\star)$ from Lemma~\ref{lem:wellcond} when $\mF^k\in\B_{\mathrm{v}}(\frac{1}{20},\frac{1}{20})$, $\eta\leq\frac{1}{25\sigma_1(\mM_\star
)}$, and \eqref{eq:induction-a}.} Therefore, we finish the induction. 

\section{Proof of Theorem \ref{thm:ScalGD-VHL}} \label{apd:pf-ScaleGD}
{We show the linear convergence mechanism after the gradient updating in \ref{subsec:linconver_scal}.  
Then we finish the proof of Theorem \ref{thm:ScalGD-VHL} in \ref{subsec:formpf_scal}.} 
We first characterize the local basin of attraction towards ScalGD-VHL as:
\begin{align*}
    \B_{\mathrm{s}}(\delta){=}&\{ \mF{=}   
    \begin{bmatrix}
        \mL^H\mR^H
    \end{bmatrix}^H|
    \|\mL\mR^H{-}\mM_\star\|_F{\leq}\delta\sigma_r(\mM_\star)\}.\numberthis \label{eq:local_region_scal}
\end{align*}

\subsection{{Linear convergence mechanism}} \label{subsec:linconver_scal}
{
 In what follows,
   it is important to demonstrate PL inequality and the smoothness condition for $f_0(\mF)$ under the scaled norm, seeing Lemma~\ref{lem:pl-scale}  and Lemma~\ref{lem:smoothness-scale}. Also, as explained in Appendix \ref{apd:pf-VGD}, 
we need to show $\nabla f(\mF)\approx\nabla f_0(\mF)$ by establishing $\|\nabla f_1(\mF)\|_{\mL,\mR}^{*}\leq \delta \norm{\nabla f_0(\mF)}_{\mL,\mR}^{*}$ in Lemma~\ref{lem:upbd-gd-scale} via our tailored less incoherence-demanding analysis, where $\delta$ is a small constant. Then the linear convergence with the help of the scaled norm is shown in Lemma~\ref{lem:convergence-vanilla}. 
}

We introduce the PL inequality under the scaled norm in the local region. 
\begin{lemma}[PL inequality in the {dual} scaled norm]\label{lem:pl-scale}
When $\mF\in\B_\mathrm{s}(\frac{1}{64}),$ one has 
    \begin{align*}
   	(\norm{\nabla f_0(\mF)}_{\mL,\mR}^*)^2\geq \frac{9}{50}\left(f_0(\mF)-f_0^\star\right).
     \end{align*}
\end{lemma}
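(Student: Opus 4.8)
The plan is to reduce the inequality to a purely linear-algebraic statement about the block decomposition of $\vDelta := \mL\mR^H - \mM_\star$ relative to the column and row spaces of $\mL$ and $\mR$. First I would compute the Wirtinger gradients $\nabla_\mL f_0(\mF) = \vDelta\mR$ and $\nabla_\mR f_0(\mF) = \vDelta^H\mL$ and substitute them into the definition of the dual scaled norm. Writing $\bm{P}_\mL := \mL(\mL^H\mL)^{-1}\mL^H$ and $\bm{P}_\mR := \mR(\mR^H\mR)^{-1}\mR^H$ for the orthogonal projectors onto the column spaces of $\mL$ and $\mR$, the identity $\|\vDelta\mR(\mR^H\mR)^{-1/2}\|_F^2 = \trace(\vDelta^H\vDelta\bm{P}_\mR) = \|\vDelta\bm{P}_\mR\|_F^2$ (and its left analogue) turns the dual-norm term into $(\|\nabla f_0(\mF)\|_{\mL,\mR}^*)^2 = \|\vDelta\bm{P}_\mR\|_F^2 + \|\bm{P}_\mL\vDelta\|_F^2$.

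Next I would introduce the four mutually Frobenius-orthogonal blocks $A = \bm{P}_\mL\vDelta\bm{P}_\mR$, $B = \bm{P}_\mL\vDelta(\mI-\bm{P}_\mR)$, $C = (\mI-\bm{P}_\mL)\vDelta\bm{P}_\mR$, and $D = (\mI-\bm{P}_\mL)\vDelta(\mI-\bm{P}_\mR)$. Orthogonality gives $2(f_0(\mF)-f_0^\star) = \|\vDelta\|_F^2 = \|A\|_F^2+\|B\|_F^2+\|C\|_F^2+\|D\|_F^2$, while $\|\vDelta\bm{P}_\mR\|_F^2 = \|A\|_F^2+\|C\|_F^2$ and $\|\bm{P}_\mL\vDelta\|_F^2 = \|A\|_F^2+\|B\|_F^2$, so that $(\|\nabla f_0(\mF)\|_{\mL,\mR}^*)^2 = 2\|A\|_F^2+\|B\|_F^2+\|C\|_F^2$. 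The only block absent from the gradient norm is $D$, so the whole inequality reduces to showing that $\|D\|_F$ is negligible compared with the remaining mass.

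The main obstacle — and really the only nontrivial step — is bounding the doubly-orthogonal block $D$. Since $\bm{P}_\mL\mL = \mL$ one has $(\mI-\bm{P}_\mL)\mL\mR^H = \bzero$, hence $(\mI-\bm{P}_\mL)\mM_\star = -(\mI-\bm{P}_\mL)\vDelta$ and likewise $\mM_\star(\mI-\bm{P}_\mR) = -\vDelta(\mI-\bm{P}_\mR)$; consequently $D = -(\mI-\bm{P}_\mL)\mM_\star(\mI-\bm{P}_\mR)$, which is manifestly second order in $\vDelta$. Exploiting that $\mM_\star$ has rank $r$, I would insert $\mM_\star = \mM_\star\mM_\star^\dagger\mM_\star$ and use $\|\mM_\star^\dagger\| = 1/\sigma_r(\mM_\star)$ together with $\|XYZ\|_F \le \|X\|_F\|Y\|\|Z\|$ to obtain $\|D\|_F \le \|(\mI-\bm{P}_\mL)\mM_\star\|_F\,\|\mM_\star^\dagger\|\,\|\mM_\star(\mI-\bm{P}_\mR)\| \le \|\vDelta\|_F^2/\sigma_r(\mM_\star)$.

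Finally, invoking the hypothesis $\mF\in\B_\mathrm{s}(\tfrac{1}{64})$, i.e. $\|\vDelta\|_F \le \sigma_r(\mM_\star)/64$, gives $\|D\|_F^2 \le \|\vDelta\|_F^2/64^2$, so that $\|A\|_F^2+\|B\|_F^2+\|C\|_F^2 = \|\vDelta\|_F^2 - \|D\|_F^2 \ge (1-64^{-2})\|\vDelta\|_F^2$. Therefore $(\|\nabla f_0(\mF)\|_{\mL,\mR}^*)^2 \ge \|A\|_F^2+\|B\|_F^2+\|C\|_F^2 \ge (1-64^{-2})\|\vDelta\|_F^2 \ge \tfrac{9}{100}\|\vDelta\|_F^2 = \tfrac{9}{50}(f_0(\mF)-f_0^\star)$, which is the claim; note the stated constant $9/50$ carries substantial slack. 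The entire argument is pure linear algebra and uses neither the incoherence assumptions nor the sensing operator, consistent with the paper's less-incoherence-demanding viewpoint.
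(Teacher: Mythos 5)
Your proof is correct, and it takes a genuinely different route from the paper's. The paper lower-bounds the dual scaled norm variationally: after aligning the factorization via $\mQ^\natural$ it tests against the direction $\bigl(\vDeltal(\mR^H\mR)^{1/2},\vDeltar(\mL^H\mL)^{1/2}\bigr)$, reduces the claim to a lower bound on $\Real\langle\mL\mR^H-\mM_\star,\vDeltal\mR^H+\mL\vDeltar^H\rangle$ and an upper bound on $\|[\vDeltal^H~\vDeltar^H]^H\|_{\mL,\mR}$, and controls both through the first/second-order split $\bPhi+\bPsi$ and the perturbation facts of Lemma~13 (which themselves rest on Lemma~12); this yields the constant $9/100$ in front of $\|\mL\mR^H-\mM_\star\|_F^2$. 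You instead evaluate the dual norm exactly as $\|\vDelta\bm{P}_{\mR}\|_F^2+\|\bm{P}_{\mL}\vDelta\|_F^2$ with $\bm{P}_{\mL},\bm{P}_{\mR}$ the column-space projectors (legitimate, since in $\B_{\mathrm{s}}(1/64)$ Weyl's inequality forces $\mL,\mR$ to have full column rank), split $\vDelta$ into four mutually orthogonal blocks, and observe that the gradient norm misses only the doubly-orthogonal block $D=-(\mI-\bm{P}_{\mL})\mM_\star(\mI-\bm{P}_{\mR})$, which is second order and is killed by the rank-$r$ identity $\mM_\star=\mM_\star\mM_\star^\dagger\mM_\star$. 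Your argument needs no alignment matrix, no $\bPhi/\bPsi$ decomposition, and none of the auxiliary lemmas, and it delivers the much sharper constant $1-64^{-2}$ in place of $9/100$ (so the stated $\tfrac{9}{50}$ follows with ample room); what the paper's route buys in exchange is that the intermediate quantities $\langle\mE,\vDeltal\mR^H+\mL\vDeltar^H\rangle$ and $\|[\vDeltal^H~\vDeltar^H]^H\|_{\mL,\mR}$ are reused elsewhere in its analysis (e.g.\ in bounding $\|\nabla f_1\|_{\mL,\mR}^*$), so the two bounds are developed once and shared.
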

\begin{proof}
See Appendix~\ref{apd:pl-scale}.
\end{proof}
It is necessary to show $\|\nabla f_1(\mF)\|_F\leq\delta \|\nabla f_0(\mF)\|_F$ for VGD-VHL as pointed out in Appendix \ref{apd:pf-VGD}. Similarly, we need to show $\|\nabla f_1(\mF)\|_{\mL,\mR}^{*}\leq \delta \norm{\nabla f_0(\mF)}_{\mL,\mR}^{*}$ via a tailored less incoherence-demanding analysis for 
ScalGD-VHL, which is defined in the dual scaled norm. 
\begin{lemma}[Upper bound of $\|\nabla f_1(\mF)\|_{\mL,\mR}^{*}$]\label{lem:upbd-gd-scale}When $n\gtrsim O(\delta \mu_0\mu_1sr\log(sn))$ and $\mF\in\B_\mathrm{s}(\frac{\delta}{20\sqrt{\mu_0 s}})$ for $\delta\leq\frac{1}{4}:$  
\begin{align*}
    \|\nabla f_1(\mF)\|_{\mL,\mR}^{*}\leq 5\delta \norm{\nabla f_0(\mF)}_{\mL,\mR}^{*}
\end{align*}
   holds with probability at least $1-(sn)^{-c}$. 
\end{lemma}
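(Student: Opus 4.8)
The plan is to mirror the Frobenius-norm argument of Lemma~\ref{lem:upbd-perturbgd-fro}, but to carry it out in the dual scaled norm, where a convenient scale-invariance emerges. First I would identify the effective matrices behind the two gradients. Writing $\vDelta = \mL\mR^H - \mM_\star$ and using $\vy = \A\G^*(\mM_\star)$ together with $(\I - \G\G^*)(\mM_\star) = \bzero$, the gradient of $f_1$ in \eqref{eq:f1} satisfies $\nabla_{\mL} f_1(\mF) = \mE_1\mR$ and $\nabla_{\mR} f_1(\mF) = \mE_1^H\mL$ with the single perturbation matrix $\mE_1 = \G(\A^*\A - \I)\G^*(\vDelta)$, while $\nabla f_0$ in \eqref{eq:f0} is generated analogously by $\vDelta$ itself.

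Next, introduce $\mQ_L = \mL(\mL^H\mL)^{-1/2}$ and $\mQ_R = \mR(\mR^H\mR)^{-1/2}$, which have orthonormal columns spanning the column spaces of $\mL$ and $\mR$. Substituting into the definition of the dual scaled norm, the conditioning of the factors cancels and one obtains the clean identities $(\|\nabla f_1(\mF)\|_{\mL,\mR}^{*})^2 = \|\mE_1\mQ_R\|_F^2 + \|\mQ_L^H\mE_1\|_F^2$ and $(\|\nabla f_0(\mF)\|_{\mL,\mR}^{*})^2 = \|\vDelta\mQ_R\|_F^2 + \|\mQ_L^H\vDelta\|_F^2$. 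This scale-invariance is precisely what removes the dependence on $\kappa$ and makes the scaled-norm estimate cleaner than its Frobenius counterpart.

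For the numerator, each of the two terms projects $\mE_1$ onto a one-sided component of the tangent space $T$ of $\mM = \mL\mR^H$; a block decomposition with respect to the column spaces of $\mL$ and $\mR$ gives $\|\mE_1\mQ_R\|_F \le \|\P_T\mE_1\|_F$ and $\|\mQ_L^H\mE_1\|_F \le \|\P_T\mE_1\|_F$, so the numerator is at most $\sqrt{2}\,\|\P_T\G(\A^*\A - \I)\G^*(\vDelta)\|_F$. I would then invoke the key restricted-isometry-type concentration bound for $\G(\A^*\A-\I)\G^*$ on the tangent space, which under the sample complexity $n \gtrsim O(\delta\mu_0\mu_1 sr\log(sn))$ and Assumption~\ref{assumption1} yields $\|\P_T\G(\A^*\A-\I)\G^*(\vDelta)\|_F \lesssim \delta\,\|\vDelta\|_F$, treating the part of $\vDelta$ lying outside $T$ separately via its second-order smallness. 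For the denominator, the same block decomposition gives $\|\vDelta\mQ_R\|_F^2 + \|\mQ_L^H\vDelta\|_F^2 \ge \|\P_T\vDelta\|_F^2 \ge (1-\delta^2)\|\vDelta\|_F^2$, since $\|\P_{T^\perp}\vDelta\|_F \lesssim \|\vDelta\|_F^2/\sigma_r(\mM_\star) \le \delta\|\vDelta\|_F$ whenever $\mF \in \B_{\mathrm{s}}(\tfrac{\delta}{20\sqrt{\mu_0 s}})$. Dividing the two bounds and absorbing the constants produces the claimed factor $5\delta$.

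The main obstacle is the tangent-space concentration bound in the numerator: establishing $\|\P_T\G(\A^*\A-\I)\G^*(\vDelta)\|_F \lesssim \delta\|\vDelta\|_F$ using only the incoherence of the ground truth $\mM_\star$ and not that of the iterate factors $\mL,\mR$. This requires transferring the concentration from the incoherent ground-truth tangent space $T_\star$ to the nearby iterate tangent space $T$, and verifying that the out-of-tangent component of $\vDelta$ — which need not be incoherent — contributes only at the negligible second order. This is exactly the less incoherence-demanding step, and it is the one place where the Hankel structure of $\G$ and the off-diagonal, non-bilinear sampling pattern must be used carefully.
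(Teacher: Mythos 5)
Your identification of the two gradients, the scale-invariant rewriting of the dual scaled norms in terms of $\mQ_L,\mQ_R$, and your lower bound on the denominator are all sound; the denominator route (via $\|\vDelta\mQ_R\|_F^2+\|\mQ_L^H\vDelta\|_F^2\geq\|\P_T\vDelta\|_F^2\geq(1-\delta^2)\|\vDelta\|_F^2$ with $T$ the iterate's tangent space) is in fact more direct than the paper's, which instead invokes its scaled PL inequality (Lemma~\ref{lem:pl-scale}) to convert $\sqrt{2f_0(\mF)}$ into $\|\nabla f_0(\mF)\|_{\mL,\mR}^{*}$. The gap is in the numerator. You bound $\|\mE_1\mQ_R\|_F\leq\|\P_T\mE_1\|_F$ with $T$ the tangent space of the \emph{iterate} $\mM=\mL\mR^H$, and then need concentration of $\G(\A^{*}\A-\I)\G^{*}$ on that space --- but the available concentration (Lemma~\ref{lem:tancon_inq}) lives only on the tangent space of $\mM_\star$. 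You name this transfer as ``the main obstacle'' and leave it unresolved, yet it is exactly the step where the proof lives or dies: if the transfer is attempted via the crude bound $\|(\P_T-\P_{T_\star})\mE_1\|_F\leq\|\P_T-\P_{T_\star}\|\,\|\mE_1\|_F$, one pays $\|\mE_1\|_F\leq 2\mu_0 s\|\vDelta\|_F$ against $\|\P_T-\P_{T_\star}\|\lesssim\|\vDelta\|_F/\sigma_r(\mM_\star)\leq\delta/(20\sqrt{\mu_0 s})$, leaving an uncancelled factor of $\sqrt{\mu_0 s}$, so the claimed $O(\delta)$ bound does not follow from your outline as written.

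The paper never projects onto the iterate's tangent space at all. It writes $\|\mE_1\mQ_R\|_F=\max_{\|\tilde{\mV}\|_F=1}|\langle\mE_1,\tilde{\mV}(\mR^H\mR)^{-1/2}\mR^H\rangle|$ and decomposes both arguments against the ground truth: $\vDelta=\bPhi+\bPsi$ with $\bPhi$ in the tangent space of $\mM_\star$, and $\mR=\mR_\star+\vDeltar$, so that the test matrix $\tilde{\mV}(\mR^H\mR)^{-1/2}\mR_\star^H$ also lies in that tangent space. This yields four cross terms in which the operator-norm bounds degrade gracefully ($\delta/10$ for the doubly projected operator, $\sqrt{\mu_0 s(1+\delta/10)}+1$ for the singly projected one, $2\mu_0 s$ for the unprojected one) in exact compensation with the smallness of the accompanying factors ($\|\bPsi\|_F\lesssim\delta\|\vDelta\|_F/\sqrt{\mu_0 s}$ and $\|\tilde{\mV}(\mR^H\mR)^{-1/2}\vDeltar^H\|_F\lesssim\delta/\sqrt{\mu_0 s}$), so every term is $O(\delta)\|\vDelta\|_F$. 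To repair your argument you would have to perform the transfer by the same $\mR=\mR_\star+\vDeltar$ decomposition inside $\mQ_R\mQ_R^H=\mR(\mR^H\mR)^{-1}\mR^H$, at which point you have reproduced the paper's computation rather than found an alternative to it.
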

\begin{proof}
    See Appendix~\ref{apd:upbd-grad-scale}.
\end{proof}

  Then we introduce the smoothness condition in the scaled norm under the gradient updating direction in ScalGD-VHL. {We denote $\mF_{+}$ for the next iterate as: 
  \begin{align*}
      \mF_{+}=\mF-\eta\begin{bmatrix}
    \nabla_{\mL} f(\mF)(\mR^H\mR)^{-1}\\
     \nabla_{\mR} f(\mF)(\mL^H\mL)^{-1} 
\end{bmatrix}.\numberthis \label{eq:nextit_scal}
  \end{align*}}
\begin{lemma}[Local smoothness of $f_0$ in the scaled norm]\label{lem:smoothness-scale}
 Provided $\mF\in\B_\mathrm{s}(\frac{1}{11})$ and $
    \|\nabla f_1(\mF)\|_{\mL,\mR}^{*}\leq \varepsilon \norm{\nabla f_0(\mF)}_F$.  {For $\mF_{+}$ in \eqref{eq:nextit_scal} where $\eta\leq 1$, one has} 
  \begin{align*}
		f_0(\mF_+)\leq f_0(\mF)+\Real\ip{\nabla f_0(\mF)}{\mF_+-\mF}+\frac{l}{2}\norm{\mF_{+}-\mF}_{\mL,\mR}^2,
	\end{align*} 
 where $l=(78+\varepsilon^2)/25.$
 
\end{lemma}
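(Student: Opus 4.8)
The plan is to prove this smoothness bound by an \emph{exact} Taylor expansion of the quartic $f_0$ at $\mF$, and then to control the higher-order remainder through the scaled norm. I would write $\mF_+-\mF=\begin{bmatrix}\vDeltal^H & \vDeltar^H\end{bmatrix}^H$ with $\vDeltal=-\eta\nabla_{\mL}f(\mF)(\mR^H\mR)^{-1}$ and $\vDeltar=-\eta\nabla_{\mR}f(\mF)(\mL^H\mL)^{-1}$, and set $\mN=\mL\mR^H-\mM_\star$. Expanding $\mL_+\mR_+^H-\mM_\star=\mN+(\mL\vDeltar^H+\vDeltal\mR^H)+\vDeltal\vDeltar^H$ and squaring, the first-order part reproduces $\Real\ip{\nabla f_0(\mF)}{\mF_+-\mF}$ exactly, since $\nabla_{\mL}f_0=\mN\mR$ and $\nabla_{\mR}f_0=\mN^H\mL$. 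It then remains to bound
$$R=\Real\ip{\mN}{\vDeltal\vDeltar^H}+\tfrac12\|\mL\vDeltar^H+\vDeltal\mR^H\|_F^2+\Real\ip{\mL\vDeltar^H+\vDeltal\mR^H}{\vDeltal\vDeltar^H}+\tfrac12\|\vDeltal\vDeltar^H\|_F^2$$
by $\tfrac{l}{2}\|\mF_+-\mF\|_{\mL,\mR}^2$ with $l=(78+\varepsilon^2)/25$.

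The first key step is two identities that tie these terms to the scaled norm with \emph{no} balancedness assumption. Using $\|\vDeltal\mR^H\|_F^2=\trace(\vDeltal^H\vDeltal\mR^H\mR)=\|\vDeltal(\mR^H\mR)^{1/2}\|_F^2$ and symmetrically $\|\mL\vDeltar^H\|_F=\|\vDeltar(\mL^H\mL)^{1/2}\|_F$, the middle (first-order-squared) term satisfies $\tfrac12\|\mL\vDeltar^H+\vDeltal\mR^H\|_F^2\le\|\mF_+-\mF\|_{\mL,\mR}^2$. For the second-order factor $\vDeltal\vDeltar^H$ I would establish the identity $\|(\mR^H\mR)^{-1/2}(\mL^H\mL)^{-1/2}\|=1/\sigma_r(\mL\mR^H)$, which follows since $(\mL^H\mL)^{1/2}(\mR^H\mR)(\mL^H\mL)^{1/2}$ is similar to $\mR^H\mR\,\mL^H\mL$, whose smallest eigenvalue is $\sigma_r^2(\mL\mR^H)$. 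Factoring $\vDeltal\vDeltar^H=[\vDeltal(\mR^H\mR)^{1/2}](\mR^H\mR)^{-1/2}(\mL^H\mL)^{-1/2}[\vDeltar(\mL^H\mL)^{1/2}]^H$ then gives both a quadratic bound $\|\vDeltal\vDeltar^H\|_F\le\tfrac{1}{2\sigma_r(\mL\mR^H)}\|\mF_+-\mF\|_{\mL,\mR}^2$ and a mixed bound keeping one factor in operator norm. This identity is the crux of the proof: it shows the second-order term is governed by $\sigma_r(\mL\mR^H)$ rather than by the individually uncontrolled product $\sigma_r(\mL)\sigma_r(\mR)$, which is precisely what frees the scaled analysis from a balancedness requirement.

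With these in hand I would use $\mF\in\B_\mathrm{s}(1/11)$, so Weyl's inequality yields $\sigma_r(\mL\mR^H)\ge\tfrac{10}{11}\sigma_r(\mM_\star)$ and $\|\mN\|_F\le\tfrac{1}{11}\sigma_r(\mM_\star)$. The term $\Real\ip{\mN}{\vDeltal\vDeltar^H}$ is then at most $\tfrac{1}{20}\|\mF_+-\mF\|_{\mL,\mR}^2$ via the quadratic bound. For the cubic and quartic terms I would apply the mixed bound together with a bound on the step length: writing the operator-norm factor as $\eta\|\nabla_{\mR}f(\mF)(\mL^H\mL)^{-1/2}\|$, splitting $\nabla f=\nabla f_0+\nabla f_1$, and using that $\mL(\mL^H\mL)^{-1/2}$ has orthonormal columns, the $\nabla f_0$ contribution is $\le\eta\|\mN\|\le\tfrac{1}{11}\sigma_r(\mM_\star)$, while the $\nabla f_1$ contribution is controlled through the hypothesis $\|\nabla f_1(\mF)\|_{\mL,\mR}^{*}\le\varepsilon\|\nabla f_0(\mF)\|_F$. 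Since $\eta\le1$, this shows $\|\vDeltar(\mL^H\mL)^{1/2}\|=O(\sigma_r(\mM_\star))$, which turns the cubic and quartic contributions into quadratic ones in $\|\mF_+-\mF\|_{\mL,\mR}$ and supplies the $\varepsilon^2$ correction; collecting the four constants gives $l=(78+\varepsilon^2)/25$.

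The main obstacle is exactly the second-order term $\vDeltal\vDeltar^H$: the naive split $\|\vDeltal\|_F\,\|\vDeltar\|$ produces a factor $1/(\sigma_r(\mL)\sigma_r(\mR))$, which is unbounded for unbalanced $\mL,\mR$ (e.g.\ rescaling $\mL\mapsto\mL\mQ$, $\mR\mapsto\mR\mQ^{-H}$), so the entire argument hinges on replacing it by the scaled identity $\|(\mR^H\mR)^{-1/2}(\mL^H\mL)^{-1/2}\|=1/\sigma_r(\mL\mR^H)$. The only remaining technical care is the step-length bound that keeps the cubic and quartic remainders quadratic in $\|\mF_+-\mF\|_{\mL,\mR}$, which is where $\eta\le1$ and the perturbation hypothesis enter.
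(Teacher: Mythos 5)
Your proposal is correct and follows essentially the same route as the paper's proof: the same exact expansion of the quartic $f_0$, the same key identity $\|(\mR^H\mR)^{-1/2}(\mL^H\mL)^{-1/2}\|=1/\sigma_r(\mL\mR^H)$ to tame the second-order term without any balancedness assumption, the same Weyl bound $\sigma_r(\mL\mR^H)\ge\frac{10}{11}\sigma_r(\mM_\star)$ on $\B_\mathrm{s}(\frac{1}{11})$, and the same use of $\eta\le1$ together with the perturbation hypothesis to keep the higher-order remainders quadratic in $\norm{\mF_{+}-\mF}_{\mL,\mR}$. The only differences are cosmetic bookkeeping: you expand the squared remainder and bound the cross terms individually where the paper applies $(a+b+c)^2\le 3(a^2+b^2+c^2)$ and then bounds $\norm{\mF_{+}-\mF}_{\mL,\mR}^2$ globally, and you justify the key identity by a similarity argument rather than via pseudo-inverses.
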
 
\begin{proof}
    See Appendix~\ref{apd:smooth-scale}.
\end{proof}
Equipped with the previous necessary ingredients, we show the linear convergence result for $f_0(\mF_{+})-f_0^\star$.
\begin{lemma}[Linear convergence of ScalGD-VHL]\label{lem:convergence-scale}
 Provided $n\gtrsim O(\delta_0 \mu_0\mu_1sr\log(sn))$ and $\mF\in\B_\mathrm{s}(\frac{\delta_0}{20\sqrt{\mu_0 s}})$ where $\delta_0\leq\frac{1}{10}$.{ For $\mF_{+}$ in \eqref{eq:nextit_scal} where $\eta\leq \frac{1}{4}$, one has } 
\begin{align*}
   f_0(\mF_{+})-f_0^\star\leq\(1-\frac{\eta}{25}\)(f_0(\mF)-f_0^\star),
\end{align*}
    with probability at least $1-(sn)^{-c}$. 
\end{lemma}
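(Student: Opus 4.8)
The plan is to mirror the VGD-VHL argument of Lemma~\ref{lem:convergence-vanilla}, but to carry out every estimate in the (dual) scaled norm rather than the Frobenius norm, so that the three ingredients already prepared---the scaled PL inequality (Lemma~\ref{lem:pl-scale}), the sharp scaled perturbation bound (Lemma~\ref{lem:upbd-gd-scale}), and the scaled smoothness (Lemma~\ref{lem:smoothness-scale})---fit together. The overarching point is that preconditioning turns the ScalGD update into ordinary gradient descent once the geometry is measured in $\|\cdot\|_{\mL,\mR}$.

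First I would record two algebraic identities that make the scaled norm the natural metric for the update $\mF_{+}$ in \eqref{eq:nextit_scal}. Writing the increment as $\mF_{+}-\mF=-\eta[\nabla_{\mL} f(\mF)(\mR^H\mR)^{-1};\,\nabla_{\mR} f(\mF)(\mL^H\mL)^{-1}]$ and using that $\mR^H\mR$ and $\mL^H\mL$ are Hermitian (so their inverse square roots are Hermitian and commute under the trace), one checks that
\[
\Real\ip{\nabla f_0(\mF)}{\mF_{+}-\mF}=-\eta\,S\bigl(\nabla f_0(\mF),\nabla f(\mF)\bigr),\qquad \|\mF_{+}-\mF\|_{\mL,\mR}^2=\eta^2\bigl(\|\nabla f(\mF)\|_{\mL,\mR}^{*}\bigr)^2,
\]
where $S(\cdot,\cdot)=\Real[\ip{\nabla_{\mL} f_0(\mR^H\mR)^{-1/2}}{\nabla_{\mL} f(\mR^H\mR)^{-1/2}}+\ip{\nabla_{\mR} f_0(\mL^H\mL)^{-1/2}}{\nabla_{\mR} f(\mL^H\mL)^{-1/2}}]$ is precisely the bilinear form polarizing the dual scaled norm. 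These equalities are the reason that ScalGD descends along a plain gradient direction in the scaled geometry.

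Next, as in Lemma~\ref{lem:convergence-vanilla}, I would apply the scaled smoothness of Lemma~\ref{lem:smoothness-scale} after verifying its hypotheses: $\mF\in\B_{\mathrm s}(1/11)$ holds because $\mF\in\B_{\mathrm s}(\delta_0/(20\sqrt{\mu_0 s}))$ with $\delta_0\le 1/10$ and $\mu_0 s\ge 1$, and the required $\|\nabla f_1(\mF)\|_{\mL,\mR}^{*}\le\varepsilon\|\nabla f_0(\mF)\|_F$ follows from Lemma~\ref{lem:upbd-gd-scale} with a harmless constant $\varepsilon$, fixing $l=(78+\varepsilon^2)/25$. Substituting the two identities, splitting $\nabla f=\nabla f_0+\nabla f_1$, and invoking Lemma~\ref{lem:upbd-gd-scale} in the contractive form $\|\nabla f_1(\mF)\|_{\mL,\mR}^{*}\le 5\delta_0\|\nabla f_0(\mF)\|_{\mL,\mR}^{*}\le\tfrac12\|\nabla f_0(\mF)\|_{\mL,\mR}^{*}$, I would bound the cross term through $\Real\ip{a}{b}\ge-\tfrac14\|a\|^2-\|b\|^2$ and the quadratic term through $(\|\nabla f\|_{\mL,\mR}^{*})^2\le 2(\|\nabla f_0\|_{\mL,\mR}^{*})^2+2(\|\nabla f_1\|_{\mL,\mR}^{*})^2$. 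Collecting terms with $\eta\le\tfrac14$ should yield a descent of the form $f_0(\mF_{+})-f_0(\mF)\le -c\,\eta\,(\|\nabla f_0(\mF)\|_{\mL,\mR}^{*})^2$ for a positive absolute constant $c$. Finally, the scaled PL inequality of Lemma~\ref{lem:pl-scale}, $(\|\nabla f_0(\mF)\|_{\mL,\mR}^{*})^2\ge\tfrac{9}{50}(f_0(\mF)-f_0^\star)$, converts this into the geometric contraction $f_0(\mF_{+})-f_0^\star\le(1-\eta/25)(f_0(\mF)-f_0^\star)$.

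I expect the main obstacle to be the constant bookkeeping in the scaled geometry rather than any conceptual difficulty. Unlike the VGD case, there is no free $\sigma_1(\mM_\star)$ factor available to shrink through the step size, so the descent coefficient $c$ is governed purely by the absolute smoothness constant $l$ and the perturbation ratio $5\delta_0$; keeping $c$ above the threshold ($2/9$) demanded by the PL constant $9/50$ and the target rate $1-\eta/25$ forces me to use the \emph{sharp} scaled perturbation bound of Lemma~\ref{lem:upbd-gd-scale}---the central payoff of the less incoherence-demanding analysis---rather than a crude estimate, and to verify that the preconditioner identities above hold with exact equality so that no spurious condition-number factor is introduced. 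This is exactly the mechanism that renders the ScalGD-VHL rate independent of $\kappa$.
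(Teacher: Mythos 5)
Your proposal is correct and follows essentially the same route as the paper's proof: verify the hypotheses of the scaled smoothness lemma via the local-region containment and the scaled perturbation bound, use the preconditioner identities to express both the inner-product and quadratic terms in the dual scaled norm, split $\nabla f=\nabla f_0+\nabla f_1$ with the same elementary inequalities, and close with the scaled PL inequality. Your constant bookkeeping (descent coefficient $1/4$ versus the threshold $2/9$ imposed by the PL constant and the target rate) matches what the paper actually achieves.
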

\begin{proof}
{
Firstly, we check the conditions and constant in Lemma~\ref{lem:smoothness-scale} and apply it. From Lemma~\ref{lem:upbd-gd-scale}, one can obtain $\|\nabla f_1(\mF)\|_{\mL,\mR}^{*}\leq 5\delta \norm{\nabla f_0(\mF)}_{\mL,\mR}^{*}\leq\frac{1}{2}\norm{\nabla f_0(\mF)}_{F}$. 

Then $\varepsilon=\frac{1}{2}$ and $l\leq\frac{16}{5}$ in Lemma~\ref{lem:smoothness-scale}, thus}
  \begin{align*}
&f_0(\mF_{+}){-}f_0(\mF)\leq \Real\left\langle\small{\begin{bmatrix}
    \nabla_{\mL} f_0(\mF)\\
    \nabla_{\mR} f_0(\mF)
\end{bmatrix}},-\eta\small{\begin{bmatrix}
    \nabla_{\mL} f(\mF)(\mR^H\mR)^{-1} \\
    \nabla_{\mR} f(\mF)(\mL^H\mL)^{-1}
\end{bmatrix}}\right\rangle\\
&\qquad +\frac{8\eta^2}{5}\norm{\small{\begin{bmatrix}
    \nabla_{\mL} f(\mF)\big((\mR)^H\mR\big)^{-1} \\
    \nabla_{\mR} f(\mF)\big((\mL)^H\mL\big)^{-1}
\end{bmatrix}}}_{\mL,\mR}^2\\
&{\overset{(a)}{=}}\frac{8\eta^2}{5}(\|\nabla f_0(\mF)+\nabla f_1(\mF)\|_{\mL,\mR}^{*})^2-\eta(\|\nabla f_0(\mF)\|_{\mL,\mR}^{*})^2\\&\quad-\eta\Real\left\langle\small{\begin{bmatrix}
    \nabla_{\mL} f_0(\mF)(\mR^H\mR)^{-\frac{1}{2}}\\
    \nabla_{\mR} f_0(\mF)(\mL^H\mL)^{-\frac{1}{2}}
\end{bmatrix}},\small{\begin{bmatrix}
    \nabla_{\mL} f_1(\mF)(\mR^H\mR)^{-\frac{1}{2}} \\
    \nabla_{\mR} f_1(\mF)(\mL^H\mL)^{-\frac{1}{2}}
\end{bmatrix}}\right\rangle
\\&{\overset{(b)}{\leq}} \frac{16\eta^2}{5}\((\|\nabla f_0(\mF)\|_{\mL,\mR}^{*})^2+(\|\nabla f_1(\mF)\|_{\mL,\mR}^{*})^2\)\\
&\quad-\frac{3}{4}\eta(\|\nabla f_0(\mF)\|_{\mL,\mR}^{*})^2+\eta(\|\nabla f_1(\mF)\|_{\mL,\mR}^{*})^2
\\&{\overset{(c)}{\leq}} -\frac{1}{4}\eta(\|\nabla f_0(\mF)\|_{\mL,\mR}^{*})^2,
\end{align*}
where {in (a) we split $\nabla f(\mF)$ as $\nabla f_0(\mF)+\nabla f_1(\mF)$}, and {in (b) we apply $(a+b)^2\leq 2a^2+2b^2$, $\Real\langle \va,\vb\rangle\geq -\frac{1}{4}\|\va\|_2^2-\|\vb\|_2^2$}, and {(c) results from $\|\nabla f_1(\mF)\|_{\mL,\mR}^{*}\leq\frac{1}{2}\norm{\nabla f_0(\mF)}_{\mL,\mR}^{*}$ as well as $\eta\leq\frac{1}{4}$}.

By Lemma~\ref{lem:pl-scale}, we know$	(\norm{\nabla f_0(\mF)}_{\mL,\mR}^*)^2\geq\frac{4}{25}\left(f_0(\mF)-f_0^\star\right)$ and then we can derive that
\begin{align*}
      f_0(\mF_{+})-f_0^\star\leq\(1-\frac{\eta}{25}\)(f_0(\mF)-f_0^\star).
\end{align*}

\end{proof}
\subsection{{Proof of Theorem \ref{thm:ScalGD-VHL}}} \label{subsec:formpf_scal}
As in Appendix \ref{apd:pf-VGD}, the key is to establish the linear convergence of $f_0(\mF^k)-f_0^\star$.
  In what follows, we prove that 
 \begin{align*}
      f_0(\mF^{k})-f_0^\star\leq\frac{1}{2}\(1-\frac{\eta}{25}\)^{k}(\frac{\delta_0\sigma_r(\mM_\star)}{20\sqrt{\mu_0 s}})^2.\numberthis\label{eq:induction-scale}
       \end{align*}

{For step $k=0$:} {setting $\varepsilon=\frac{\delta_0}{20}$ in Lemma~\ref{lem:spec-init}, we have }
\begin{align*}
   f_0(\mF^{0})-f_0^\star=\frac{1}{2}\|\mL^0(\mR^0)^H-\mM_\star\|_F^2\leq\frac{1}{2}(\frac{\delta_0\sigma_r(\mM_\star)}{20\sqrt{\mu_0 s}})^2.
\end{align*}

{From the $k$-th to the $(k+1)$-th step:} from \eqref{eq:induction-scale}, we conclude that 
$\mF^k\in\B_s(\frac{\delta_0}{20\sqrt{\mu_0 s}})$, lying in a local basin of attraction.
Invoking Lemma~\ref{lem:convergence-scale} to establish 
\begin{align*}
      f_0(\mF^{k+1})-f_0^\star\leq\(1-\frac{\eta}{25}\)(f_0(\mF^k)-f_0^\star),
\end{align*}
then \eqref{eq:induction-scale} is established for the next step. Therefore, we finish the induction. 

\section{Proof of Polyak-Łojasiewicz inequalities} \label{apd:C-pl}
In the following Appendices \ref{apd:C-pl}, \ref{apd:proof-smooth}, \ref{apd:E-upbd-grad}, and \ref{apd:F-support-lems}, we apply the following notations throughout.

\subsection{Additional notations\label{addNotations}}
{~}{Define a function $\mathrm{sgn}(\cdot)$ as $\mathrm{sgn}(\mM)=\mU\mV^H$ when the SVD of $\mM$ being $\mU\vSS\mV^H$.}

For $\mM_\star=\H(\mX_\star)=\mU_\star\vSS_\star{\mV_\star}^H$ in Assumption \ref{assumption1}, let $\mL_\star=\mU_\star\vSS_\star^{\frac{1}{2}}$, $\mR_\star=\mV_\star\vSS_\star^{\frac{1}{2}}$ and we denote $\mF_\star=\begin{bmatrix}
      \mL_\star^H~
      \mR_\star^H
  \end{bmatrix}^H.$ 

  For a rank-$r$ matrix $\mM$ whose compact SVD is $\mU\vSS\mV^H$, let $\Tilde{\mL}=\mU\vSS^\frac{1}{2}, \Tilde{\mR}=\mV\vSS^\frac{1}{2}$, and we denote $\Tilde{\mF}=\begin{bmatrix}
        \Tilde{\mL}^H~\Tilde{\mR}^H
\end{bmatrix}^H.$ 

Denote $\mQ^\natural\coloneqq \mathrm{sgn}(\Tilde{\mF}^H\mF_\star)$, and {without special instructions, we denote $\mL=\Tilde{\mL}\mQ^\natural$, $\mR=\Tilde{\mR}\mQ^\natural$. Define} 
\begin{align*}
    \vDeltal\coloneqq\mL-\mL_\star,~~ \vDeltar\coloneqq\mR-\mR_\star,
\end{align*}
and we have the following decompositions:
    \begin{align*}
   \mL\mR^H-\mM_\star=\bPhi+\bPsi,~
\vDeltal\mR^H+\mL\vDeltar^H=\bPhi+2\bPsi,\numberthis  \label{eq:decomp}
    \end{align*}
where $\bPhi\coloneqq\vDeltal\mR_{\star}^H+\mL_\star\vDeltar^H$ and $\bPsi\coloneqq\vDeltal\vDeltar^H$.
\subsection{Proof of Lemma~\ref{lem:pl-fro}}\label{apd:pl-fro}For $\mF\in\B_{\mathrm{v}}(\frac{1}{64},\frac{1}{20}),$ we have $\mF\in\B_{\mathrm{s}}(\frac{1}{64})$ where $\B_{\mathrm{s}}(\cdot)$ is defined in \eqref{eq:local_region_scal}. Then 
\begin{align*}
    \norm{\nabla f_0(\mF)}_F^2=&\|(\mL\mR^H-\mM_\star)\mR(\mR^H\mR)^{-\frac{1}{2}}(\mR^H\mR)^{\frac{1}{2}}\|_F^2\\
    &+\|(\mL\mR^H-\mM_\star)^H\mL(\mL^H\mL)^{-\frac{1}{2}}(\mL^H\mL)^{\frac{1}{2}}\|_F^2
    \\\geq&(\sigma_r^2(\mL)\wedge\sigma_r^2(\mR)) (\norm{\nabla f_0(\mF)}_{\mL,\mR}^*)^2
    \\\geq&
    \frac{\sigma_r(\mM_\star)}{7}(f_0(\mF)-f_0^\star),
\end{align*}
 where in the last inequality, we invoke Lemma~\ref{lem:pl-scale} when $\mF\in\B_{\mathrm{s}}(\frac{1}{64})$  and Lemma~\ref{lem:wellcond} when $\mF\in\B_{\mathrm{v}}(\frac{1}{64},\frac{1}{20})$  .  

\subsection{Proof of Lemma~\ref{lem:pl-scale}}\label{apd:pl-scale} Let $\mE{=}\mL\mR^H-\mM_\star{=}\mM-\mM_\star$. 
  The LHS of Lemma~\ref{lem:pl-scale} is
    \begin{align*}
        \norm{\nabla f_0(\mF)}_{\mL,\mR}^*{=}\sqrt{\|\mE\mR(\mR^H\mR)^{-\frac{1}{2}}\|_F^2{+}\|\mE^H\mL(\mL^H\mL)^{-\frac{1}{2}}\|_F^2}, \numberthis \label{eq:gradscal}
    \end{align*} 
{and the RHS of Lemma~\ref{lem:pl-scale} is $\|\mL\mR^H-\mM_\star\|_F^2=\|\mM-\mM_\star\|_F^2$ (omitting the coefficient).

First verify the construction of $\Tilde{\mL}, \Tilde{\mR}, \mQ^\natural$ from $\mM$, $\mM_\star$ in \ref{addNotations}. Without loss generality, we suppose $\mL=\Tilde{\mL}\mQ^\natural$, $\mR=\Tilde{\mR}\mQ^\natural$  in what follows  as the LHS and RHS of Lemma~\ref{lem:pl-scale} remain invariant\footnote{{First verify $\mE$ is invariant as $\mL\mR^H=\mM$ from constructions.} Then we show $\|\mE\mR(\mR^H\mR)^{-\frac{1}{2}}\|_F$ is invariant, and the same holds for $\|\mE^H\mL(\mL^H\mL)^{-\frac{1}{2}}\|_F$. Verify that $\|\mE\mR(\mR^H\mR)^{-\frac{1}{2}}\|_F=\|\mE\mR(\mR^H\mR)^{-1}\mR^H\|_F$. When 
  ${\mL}\leftarrow\Tilde{\mL}\mQ^\natural$ and ${\mR}\leftarrow\Tilde{\mR}\mQ^\natural$,
  this quantity remains invariant as $\mR(\mR^H\mR)^{-1}\mR^H$ {always denotes the projection matrix to the row space of $\mM$, before and after substitution.}}.}
Denoting $\mA_L=\mE\mR(\mR^H\mR)^{-\frac{1}{2}}, \mA_R=\mE^H\mL(\mL^H\mL)^{-\frac{1}{2}}$, rewrite \eqref{eq:gradscal} as 
{
\begin{align*}
        &\norm{\nabla f_0(\mF)}_{\mL,\mR}^*=\max_{\norm{\small{\begin{bmatrix}
            \mV_L^H~ \mV_R^H\end{bmatrix}^H}}_F=1}\big|\langle\mA_L,\mV_L\rangle+\langle{\mV}_R,\mA_R\rangle\big|
.\numberthis \label{eq:dual_local_max}
\end{align*}
}
To lower bound \eqref{eq:dual_local_max}, choose
\begin{align*}
    \mV_L&={\vDeltal(\mR^H\mR)^\frac{1}{2}}/{\big\|\begin{bmatrix}
            \vDeltal^H~ \vDeltar^H\end{bmatrix}^H\big\|_{\mL,\mR},}
            \\{\mV}_R&={\vDeltar(\mL^H\mL)^\frac{1}{2}}/{\big\|\begin{bmatrix}
            \vDeltal^H~ \vDeltar^H\end{bmatrix}^H\big\|_{\mL,\mR},}
\end{align*}
where $\vDeltal=\mL-\mL_\star$ and $\vDeltar=\mR-\mR_\star$. It is easy to check that $\big\|{\begin{bmatrix}
            \mV_L^H~\mV_R^H\end{bmatrix}^H}\big\|_F=1$. 
Inserting this choice 
into \eqref{eq:dual_local_max} to obtain
            \begin{align*}
            \norm{\nabla f_0(\mF)}_{\mL,\mR}^*\geq\frac{|\langle\mE,\vDeltal\mR^H+\mL\vDeltar^H\rangle|}{\big\|\begin{bmatrix}
            \vDeltal^H~ \vDeltar^H\end{bmatrix}^H\big\|_{\mL,\mR}
}.\numberthis \label{eq:gradscal_lowerbd}
            \end{align*}
{Now we bound the numerator and the denominator in \eqref{eq:gradscal_lowerbd} separately.} Invoke the decomposition  
 \eqref{eq:decomp} to obtain:
    \begin{align*}    &\Real\langle\mE,\vDeltal\mR^H+\mL\vDeltar^H\rangle=\Real\langle\bPhi+\bPsi,\bPhi+2\bPsi\rangle
    \\&=\|\bPhi\|_F^2+2\|\bPsi\|_F^2+3\Real\langle\bPhi,\bPsi\rangle
    \\&\overset{{(a)}}{\geq}\frac{3}{4}\|\bPhi\|_F^2-7\|\bPsi\|_F^2
    \overset{{(b)}}{\geq}\frac{1}{2}\|\mL\mR^H-\mM_\star\|_F^2,
    \end{align*}
    where {in (a) we invoke $3\Real\langle \va,\vb\rangle=\|\frac{1}{2}\va+3\vb\|_2^2-\frac{1}{4}\|\va\|_2^2-9\|\vb\|_2^2\geq-\frac{1}{4}\|\va\|_2^2-9\|\vb\|_2^2$} and {(b) results from  Lemma~\ref{lem:basic-local-facts} by setting $\varepsilon=1/32$ in it.} 
    Consequently, 
\begin{align*}
    |\langle\mE,\vDeltal\mR^H+\mL\vDeltar^H\rangle|&\geq\frac{1}{2}\|\mL\mR^H-\mM_\star\|_F^2.
\end{align*}
From Lemma \ref{lem:basic-local-facts}
~one can also obtain
\begin{align*}
  \big\|\begin{bmatrix}
            \vDeltal^H~ \vDeltar^H\end{bmatrix}^H\big\|_{\mL,\mR}\leq\frac{5}{3}\|\mL\mR^H-\mM_\star\|_F.
\end{align*}
Combine the above pieces into \eqref{eq:gradscal_lowerbd} to yield
 \begin{align*}
           (\norm{\nabla f_0(\mF)}_{\mL,\mR}^*)^2\geq \frac{9}{50}(f_0(\mF)-f_0^\star).
            \end{align*}

\section{Proof of Smoothness conditions}\label{apd:proof-smooth}
Set $\mV=\mF_{+}-\mF=\begin{bmatrix}
    \mV_L^H~\mV_R^H
\end{bmatrix}^H$ where $\mV_L=\mL_{+}-\mL
$ and $\mV_R=\mR_{+}-\mR
$.
We equivalently reformulate $f_0(\mF_{+})$ as:
\begin{align*}
    &f_0(\mF_{+})\\
    =&\frac{1}{2}\|\mL\mR^H-\mM_\star\|_F^2+\Real\langle\mL_{+}\mR_{+}^H-\mL\mR^H,\mL\mR^H-\mM_\star\rangle\\&+\frac{1}{2}\|\mL_{+}\mR_{+}^H-\mL\mR^H\|_F^2\\=& f_0(\mF)+\Real\(\ip{\nabla f_0(\mF)}{\mV}+\langle{\mL\mR^H-\mM_\star},{\mV_L\mV_R^H}\rangle\)\\
    &+\frac{1}{2}\|\mL\mV_R^H+\mV_L\mR^H+\mV_L\mV_R^H\|_F^2,\numberthis\label{eq:direct-decom-smoothness}
\end{align*}
where we point out that $\mL_{+}\mR_{+}^H-\mL\mR^H=\mL\mV_R^H+\mV_L\mR^H+\mV_L\mV_R^H$ in the first equality.
\subsection{Proof of Lemma~\ref{lem:smoothness-fro}}\label{apd:smooth-fro}

We prove the lemma based on the result of \eqref{eq:direct-decom-smoothness} where  $\mV=\mF_{+}-\mF$. We first list the following facts:
$
\|\mL\mV_R^H\|_F\leq\|\mL\|\|\mV_R\|_F,~\|\mV_L\mR^H\|_F\leq\|\mR\|\|\mV_L\|_F,~
\|\mV_L\mV_R^H\|_F\leq\frac{1}{2}\|\mV\|_{F}^2.
$ Additionally invoke the basic inequality $(a+b+c)^2\leq 3a^2+3b^2+3c^2$ to obtain 
\begin{small}
$$
    \|\mL\mV_R^H+\mV_L\mR^H+\mV_L\mV_R^H\|_F^2\leq \big(3(\|\mL\|^2\vee\|\mR\|^2)
    +\frac{3}{4}\|\mV\|_F^2\big)\|\mV\|_F^2. 
$$
\end{small}Then we establish the following formulation from \eqref{eq:direct-decom-smoothness}:
\begin{align*}
  &f_0(\mF_{+})\leq f_0(\mF)+\Real\ip{\nabla f(\mF)}{\mV}+\frac{\|\mV\|_{F}^2}{2}\Big(\frac{3}{4}\|\mV\|_F^2\\
  &+3(\|\mL\|^2\vee\|\mR\|^2)+\|\mL\mR^H-\mM_\star\|_F\Big).\numberthis \label{eq:smothfro-interresult}
\end{align*}
Besides, derive the upper bound for $\mV=\mF_{+}-\mF=-\eta\nabla f(\mF)$ from the conditions exposed:
\begin{align*}
&\|\mV\|_F^2 =\eta^2\|\nabla f_0(\mF)+\nabla f_1(\mF)\|_F^2
\\&\leq\eta^2(2\|\nabla f_0(\mF)\|_F^2+2\|\nabla f_1(\mF)\|_F^2)
\\&\leq 2\eta^2(1+\varepsilon^2)(\|(\mL\mR^H-\mM_\star)\mR\|_F^2+\|(\mL\mR^H-\mM_\star)^H\mL\|_F^2)
\\&\leq5(1+\varepsilon^2)\frac{\|\mL\mR^H-\mM_\star\|_F^2}{\sigma_1(\mM_\star)}\leq\frac{1+\varepsilon^2}{80}\sigma_1(\mM_\star),
\end{align*}
where we invoke $\eta\leq1/\sigma_1(\mM_\star)$ and $\|\mL\|^2\vee\|\mR\|^2\leq\frac{5\sigma_1(\mM_\star)}{4}$ in the third inequality and the last inequality follows from $\|\mL\mR^H-\mM_\star\|_F\leq\frac{1}{20}\sigma_r(\mM_\star)\leq\frac{1}{20}\sigma_1(\mM_\star)$. Additionally inserting the results about $\|\mL\|^2\vee\|\mR\|^2$ and $\|\mL\mR^H-\mM_\star\|_F$ again into \eqref{eq:smothfro-interresult}, we prove Lemma~\ref{lem:smoothness-fro}. 
\subsection{Proof of Lemma~\ref{lem:smoothness-scale}}\label{apd:smooth-scale}
This lemma is proved based on the result of \eqref{eq:direct-decom-smoothness} where  $\mV=\mF_{+}-\mF$. We first list the following facts:
\begin{align*}
&\|\mL\mV_R^H\|_F{\leq}\|\mL(\mL^H\mL)^{-\frac{1}{2}}\|\|\mV_R(\mL^H\mL)^{\frac{1}{2}}\|_F{\leq}\|\mV_R(\mL^H\mL)^{\frac{1}{2}}\|_F,\\
&\|\mR\mV_L^H\|_F{\leq}\|\mR(\mR^H\mR)^{-\frac{1}{2}}\|\|\mV_L(\mR^H\mR)^{\frac{1}{2}}\|_F{\leq}\|\mV_L(\mR^H\mR)^{\frac{1}{2}}\|_F,\\
&\|\mV_L\mV_R^H\|_F{\leq}\frac{1}{2}\|\mV\|_{\mL,\mR}^2\|(\mR^H\mR)^{-\frac{1}{2}}(\mL^H\mL)^{-\frac{1}{2}}\|{\leq}\frac{\|\mV\|_{\mL,\mR}^2}{2\sigma_r(\mL\mR^H)},
\end{align*}
where in the last line we invoke the fact $\|\mV_L\mV_R^H\|_F=\|\mV_L(\mR^H\mR)^{\frac{1}{2}}(\mR^H\mR)^{-\frac{1}{2}}(\mL^H\mL)^{-\frac{1}{2}}(\mL^H\mL)^{\frac{1}{2}}\mV_R^H\|_F$ and the following result: 
\begin{align*}
&\|(\mR^H\mR)^{-\frac{1}{2}}(\mL^H\mL)^{-\frac{1}{2}}\|=\|\mR(\mR^H\mR)^{-1}(\mL^H\mL)^{-1}\mL^H\|\\
&=\|(\mR^H)^{\dagger}\mL^\dagger\|=\|(\mL\mR^H)^{\dagger}\|=\frac{1}{\sigma_r(\mL\mR^H)}.
\end{align*}
Additionally invoking the basic inequality $(a+b+c)^2\leq 3a^2+3b^2+3c^2$, we  establish the following result from \eqref{eq:direct-decom-smoothness}
\begin{align*}
  &f_0(\mF_{+})\leq f_0(\mF)+\Real\ip{\nabla f_0(\mF)}{\mV}\\
  &+\frac{\|\mV\|_{\mL,\mR}^2}{2}\Big(3+\frac{\|\mL\mR^H-\mM_\star\|_F}{\sigma_r(\mL\mR^H)}+\frac{3\|\mV\|_{\mL,\mR}^2}{4\sigma_r^2(\mL\mR^H)}\Big).\numberthis \label{eq:smothscale-interresult}
\end{align*}
Now we are supposed to derive the upper bound for $\|\mV\|_{\mL,\mR}^2=\|\mF_{+}-\mF\|_{\mL,\mR}^2=\eta^2\begin{bmatrix}
    \nabla_{\mL} f(\mF)(\mR^H\mR)^{-\frac{1}{2}}\\
     \nabla_{\mR} f(\mF)(\mL^H\mL)^{-\frac{1}{2}}
\end{bmatrix}=\eta^2(\|\nabla f(\mF)\|_{\mL,\mR}^{*})^2$ from the conditions exposed:
\begin{align*}
&\|\mV\|_{\mL,\mR}^2 =\eta^2(\|\nabla f_0(\mF)+\nabla f_1(\mF)\|_{\mL,\mR}^{*})^2
\\&\leq2\eta^2\((\|\nabla f_0(\mF)\|_{\mL,\mR}^{*})^2+(\|\nabla f_1(\mF)\|_{\mL,\mR}^{*})^2\)
\\&\leq 2\eta^2(1+\varepsilon^2)(\|(\mL\mR^H-\mM_\star)\mR(\mR^H\mR)^{-\frac{1}{2}}\|_F^2
\\&\quad+\|(\mL\mR^H-\mM_\star)^H\mL(\mL^H\mL)^{-\frac{1}{2}}\|_F^2)
\\&\leq 4(1+\varepsilon^2)\|\mL\mR^H-\mM_\star\|_F^2,
\end{align*}
where we invoke $\|\mL(\mL^H\mL)^{-\frac{1}{2}}\|\leq 1$, $\|\mR(\mR^H\mR)^{-\frac{1}{2}}\|\leq1$ and $\eta\leq1$. 

By the fact $\|\mL\mR^H-\mM_\star\|_F\leq\frac{1}{11}\sigma_r(\mM_\star)$ and Weyl's theorem, we have
$\sigma_r(\mL\mR^H)\geq\frac{10}{11}\sigma_r(\mM_\star)$. Consequently, 
{
$$\frac{3\|\mV\|_{\mL,\mR}^2}{4\sigma_r^2(\mL\mR^H)}\leq3(1+\varepsilon^2)\frac{\|\mL\mR^H-\mM_\star\|_F^2}{\sigma_r^2(\mL\mR^H)}\leq\frac{1}{25}+\frac{\varepsilon^2}{25}.$$
}
Finally, we prove Lemma~\ref{lem:smoothness-scale} by inserting these results into \eqref{eq:smothscale-interresult}.

\section{Upper bound of the perturbation gradient}\label{apd:E-upbd-grad}
\subsection{Proof of Lemma~\ref{lem:upbd-perturbgd-fro}}\label{apd:upbd-grad-fro}
From $\mF\in\B_{\mathrm{v}}(\frac{\delta_0}{20\sqrt{\kappa \mu_0 s}},\frac{1}{20})$ where $\delta_0\leq\frac{1}{4}$, we conclude that $\mF\in\B_{\mathrm{v}}(\frac{1}{20},\frac{1}{20})$ and $\mF\in\B_{\mathrm{s}}(\frac{\delta_0}{20\sqrt{\kappa\mu_0 s}})$ where $\B_{\mathrm{s}}(\cdot)$ is defined in \eqref{eq:local_region_scal}.
Set $\mE=\G(\A^{*}\A-\I)\G^{*}(\mL\mR^H-\mM_\star)$, 
\begin{align*}
    \|\nabla f_1(\mF)\|_F^2&=\|\mE\mR(\mR^H\mR)^{-\frac{1}{2}}(\mR^H\mR)^\frac{1}{2}\|_F^2\\
    &\quad+\|\mE^H\mL(\mL^H\mL)^{-\frac{1}{2}}(\mL^H\mL)^\frac{1}{2}\|_F^2\\
    &\leq (\sigma_1^2(\mL)\vee\sigma_1^2(\mR))(\|\nabla f_1(\mF)\|_{\mL,\mR}^{*})^2
    \\&\leq\frac{\sqrt{5}}{2}\sigma_1(\mM_\star)^{\frac{1}{2}}(\|\nabla f_1(\mF)\|_{\mL,\mR}^{*})^2,
\end{align*}
where we invoke Lemma~\ref{lem:wellcond} in the last inequality. 

 The upper bound for $\|\nabla f_1(\mF)\|_{\mL,\mR}^{*}$ is established in Lemma~\ref{lem:upbd-gd-scale} whose proof is in Appendix \ref{apd:upbd-grad-scale}. We
apply an intermediary result \eqref{eq:upbdfv-gd-scale} of Appendix \ref{apd:upbd-grad-scale} to finish our proof:
$$\|\nabla f_1(\mF)\|_{\mL,\mR}^{*}\leq\frac{3\delta}{2}\sqrt{2f_0(\mF)},$$when $\mF\in\B_{\mathrm{s}}(\frac{\delta}{20\sqrt{\mu_0 s}})$ and $n\gtrsim O(\delta \mu_0\mu_1sr\log(sn))$. 
{By setting $\delta=\delta_0 /\sqrt{\kappa}$, then for $\mF\in\B_{\mathrm{s}}(\frac{\delta_0}{20\sqrt{\kappa\mu_0 s}})$:} 
\begin{align*}
    \|\nabla f_1(\mF)\|_F{\leq} \frac{3\sqrt{10}\delta_0\sigma_1(\mM_\star)^{\frac{1}{2}}}{4\sqrt{\kappa}}\sqrt{f_0(\mF)}{\leq}\frac{25}{4}\delta_0\|\nabla f_0(\mF)\|_F,
\end{align*}
where we invoke Lemma~\ref{lem:pl-fro} in the second inequality. 
\subsection{Proof of Lemma~\ref{lem:upbd-gd-scale}}\label{apd:upbd-grad-scale}
Set $\mE=\G(\A^{*}\A-\I)\G^{*}(\mL\mR^H-\mM_\star)$, we know
\begin{align*}
\|\nabla f_1(\mF)\|_{\mL,\mR}^{*}=&\sqrt{\|\mE\mR(\mR^H\mR)^{-\frac{1}{2}}\|_F^2+\|\mE^H\mL(\mL^H\mL)^{-\frac{1}{2}}\|_F^2}
\\=&\sqrt{I_1^2+I_2^2}\leq I_1+I_2,
\end{align*}
where $I_1=\|\mE\mR(\mR^H\mR)^{-\frac{1}{2}}\|_F$, $I_2=\|\mE^H\mL(\mL^H\mL)^{-\frac{1}{2}}\|_F$ and  we will bound $I_1$ and $I_2$ separately. 

{As in Appendix 
\ref{apd:pl-scale}, we suppose $\mL=\Tilde{\mL}\mQ^\natural$ and $\mR=\Tilde{\mR}\mQ^\natural$ as the quantity $I_1$, $I_2$ remain invariant.}  $\Tilde{\mL}$, $\Tilde{\mR}$ and $\mQ^\natural$ are defined in Appendix \ref{addNotations}. 
As $I_1=\max_{\|\Tilde{\mV}\|_F=1} |\langle\mE\mR(\mR^H\mR)^{-\frac{1}{2}},\Tilde{\mV}\rangle|$, 
we point out the following splitting
\begin{align*}
&|\langle\mE\mR(\mR^H\mR)^{-\frac{1}{2}},\Tilde{\mV}\rangle|=|\langle\mE,\Tilde{\mV}(\mR^H\mR)^{-\frac{1}{2}}\mR^H\rangle|\\
    &\leq|\langle\G(\A^{*}\A-\I)\G^{*}(\bPhi+\bPsi),\Tilde{\mV}(\mR^H\mR)^{-\frac{1}{2}}(\mR_\star+\vDeltar)^H\rangle|
    \\&\leq|\langle\P_T\G(\A^{*}\A-\I)\G^{*}\P_T(\bPhi),\Tilde{\mV}(\mR^H\mR)^{-\frac{1}{2}}\mR_\star^H\rangle|
    \\&\quad+|\langle\P_T\G(\A^{*}\A-\I)\G^{*}(\bPsi),\Tilde{\mV}(\mR^H\mR)^{-\frac{1}{2}}\mR_\star^H\rangle|
    \\&\quad+|\langle\G(\A^{*}\A-\I)\G^{*}\P_T(\bPhi),\Tilde{\mV}(\mR^H\mR)^{-\frac{1}{2}}\vDeltar^H\rangle|
    \\&\quad+|\langle\G(\A^{*}\A-\I)\G^{*}(\bPsi),\Tilde{\mV}(\mR^H\mR)^{-\frac{1}{2}}\vDeltar^H\rangle|
    \\&=J_1+J_2+J_3+J_4,
\end{align*}
where {$T$ is defined in Lemma~\ref{lem:tancon_inq}}, 
and $\vDeltar$, $\bPhi$, $\bPsi$ are defined in Appendix \ref{addNotations}.

 In what follows, we apply Lemma~\ref{lem:basic-local-facts} many times, setting $\varepsilon=\delta/10\sqrt{\mu_0 s}$ in it.
We first point out two results that are useful for bounding the previous quantities:
\begin{align*}
    \|\Tilde{\mV}(\mR^H\mR)^{-\frac{1}{2}}\mR_\star^H\|_F&\leq\|\Tilde{\mV}\|_F\|(\mR^H\mR)^{-\frac{1}{2}}\vSS_\star^{\frac{1}{2}}\|
    \leq\frac{1}{1-\delta},
    \\
    \|\Tilde{\mV}(\mR^H\mR)^{-\frac{1}{2}}\vDeltar^H\|_F&\leq\|\Tilde{\mV}\|_F\|(\mR^H\mR)^{-\frac{1}{2}}\vDeltar^H\|\leq\frac{\delta/(1-\delta)}{10\sqrt{\mu_0 s}},
\end{align*}
where we use the facts $\|\Tilde{\mV}\|_F=1$, $\mu_0 s\geq 1$, \eqref{eq:diag-div-scale-bd} and \eqref{eq:dd-scale-upbd} in Lemma~\ref{lem:basic-local-facts}. Then we are prepared to derive the detailed bounds, considering the fact $\delta\leq1/4$:
\begin{align*}
    J_1&\leq \|\P_T\G(\A^{*}\A-\I)\G^{*}\P_T\|\|\bPhi\|_F\|\Tilde{\mV}(\mR^H\mR)^{-\frac{1}{2}}\mR_\star^H\|_F\\
    &\leq\frac{\delta(1+\delta/10)}{10(1-\delta)}\|\mL\mR^H-\mM_\star\|_F\leq \frac{7\delta}{50}\|\mL\mR^H-\mM_\star\|_F,
    \end{align*}
    where we invoke Lemma~\ref{lem:tancon_inq} 
    and Lemma~\ref{lem:basic-local-facts} for  $\|\bPhi\|_F$.
\begin{align*}
        J_2&{\leq} \|\P_T\G(\A^{*}\A-\I)\G^{*}\|\|\bPsi\|_F\|\Tilde{\mV}(\mR^H\mR)^{-\frac{1}{2}}\mR_\star^H\|_F\\
    &{\leq}\frac{\delta(\sqrt{\mu_0 s(1{+}\frac{\delta}{10})}{+}1)}{10\sqrt{\mu_0 s}(1-\delta)}\|\mL\mR^H{-}\mM_\star\|_F
    {\leq} \frac{3\delta}{10}\|\mL\mR^H{-}\mM_\star\|_F,
    \end{align*}
    where we know  $\|\P_T\G(\A^{*}\A-\I)\G^{*}\|\leq\|\P_T\G\A^{*}\|\|\A\G^{*}\|+\|\P_T\G\G^{*}\|\leq \sqrt{\mu_0 s(1+\delta/10)}+1$, by $\|\P_T\|,\|\G\|\leq 1$, $\|\A\|\leq\sqrt{\mu_0 s}$ in \cite{Chen2022} and $\|\P_T\G\A^{*}\|=\|\A\G^{*}\P_T\|\leq\sqrt{1+\delta/10}$ from Lemma~\ref{lem:semi-tancon}. Besides, we invoke $\|\bPsi\|_F\leq\delta/10\sqrt{\mu_0 s}$ from Lemma~\ref{lem:basic-local-facts}. 
    
    Similarly from $\|\G(\A^{*}\A-\I)\G^{*}\P_T\|\leq\sqrt{\mu_0 s(1+\delta/10)}+1$, we can establish the bound for $J_3$
\begin{align*}
        J_3&\leq\|\G(\A^{*}\A-\I)\G^{*}\P_T\|\|\bPhi\|_F\|\Tilde{\mV}(\mR^H\mR)^{-\frac{1}{2}}\vDeltar^H\|_F
    \\&\leq \frac{3\delta}{10}\|\mL\mR^H-\mM_\star\|_F.
\end{align*}
Invoking $\|\bPsi\|_F\leq\delta/10\sqrt{\mu_0 s}$ from Lemma~\ref{lem:basic-local-facts} and the fact $\|\G(\A^{*}\A-\I)\G^{*}\|\leq\|\G\A^{*}\A\G^{*}\|+\|\G\G^{*}\|\leq 2\mu_0 s$, we have
\begin{align*}
     J_4&\leq\|\G(\A^{*}\A-\I)\G^{*}\|\|\bPsi\|_F\|\Tilde{\mV}(\mR^H\mR)^{-\frac{1}{2}}\vDeltar^H\|_F
     \\&
     \leq \frac{\delta}{100}\|\mL\mR^H-\mM_\star\|_F.
\end{align*}
Then we obtain $I_1\leq \frac{3\delta}{4}\|\mL\mR^H-\mM_\star\|_F$, and the same holds for $I_2$ that $I_2\leq \frac{3\delta}{4}\|\mL\mR^H-\mM_\star\|_F$. Then we establish 
\begin{align*}
    \|\nabla f_1(\mF)\|_{\mL,\mR}^{*}\leq 3\delta\sqrt{2f_0(\mF)}/2,
    \numberthis\label{eq:upbdfv-gd-scale}
\end{align*}
and combine $\sqrt{2f_0(\mF)}\leq \frac{10\|\nabla f_0(\mF)\|_{\mL,\mR}^{*}}{3}$ from Lemma~\ref{lem:pl-scale}, we finish the proof.

\section{Technical Lemmas}\label{apd:F-support-lems}
\begin{lemma}[\cite{Chen2022}{, Corollary \uppercase\expandafter{\romannumeral3}.9}]
	\label{lem:tancon_inq}
 Let $T$ be the tangent space of $\mM_\star$, which is {defined} as $T=\{\mU_\star\mC^H+\mD\mV_\star^H|\mC\in\C^{n_2\times r},\mD\in\C^{sn_1\times r}\}$. 	Under Assumption \ref{assumption1}, Assumption \ref{assumption2} and $n\geq C\varepsilon^{-2}\mu_0\mu_1 sr\log(sn)$, we have
	\begin{align*}
		\|{\P_{T}\G(\A^{*}\A - \I)\G^{*}\P_{T}}\|\leq \varepsilon \numberthis \label{eq:event-tancon-inq}
	\end{align*}
 with probability at least $1-(sn)^{-c}$ for 
  constants $c,C>0$.
\end{lemma}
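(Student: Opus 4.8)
The statement is a restricted-isometry-type bound for the composite operator $\G\A^{*}\A\G^{*}$ compressed to the tangent space $T$, and it is quoted directly from \cite{Chen2022}; in the paper the proof is simply the one-line invocation of that reference. For completeness I describe the self-contained argument I would run, which is an operator (matrix) Bernstein concentration. The plan is to write $\P_{T}\G(\A^{*}\A-\I)\G^{*}\P_{T}$ as a sum of $n$ independent, mean-zero, self-adjoint random operators on the finite-dimensional space $T$, and then apply a Bernstein tail bound.

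First I would use the isotropy condition \eqref{eq: isotropy property}, $\E[\vb_i\vb_i^H]=\mI_s$, to verify that $\E[\A^{*}\A]=\I$. Writing $\A^{*}\A=\sum_{j=0}^{n-1}\Z_j$ with $\Z_j(\mX)=(\vb_j^H\mX\ve_j)\,\vb_j\ve_j^T$, one gets $\sum_j\E[\Z_j(\mX)]=\sum_j\mX\ve_j\ve_j^T=\mX$, so $\A^{*}\A-\I=\sum_j(\Z_j-\E\Z_j)$. After conjugating by $\G$ and compressing by $\P_T$, each summand becomes $\mathcal{Y}_j=\P_T\G(\Z_j-\E\Z_j)\G^{*}\P_T$. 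Writing $\vg_j=\G(\vb_j\ve_j^T)$, the uncentered piece $\P_T\G\Z_j\G^{*}\P_T$ is the rank-one self-adjoint operator $\mN\mapsto\ip{\P_T\vg_j}{\mN}\,\P_T\vg_j$, so the $\mathcal{Y}_j$ are independent, self-adjoint and mean-zero.

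Next I would bound the two Bernstein parameters, the uniform per-term norm $L=\max_j\norm{\mathcal{Y}_j}$ and the matrix variance $\sigma^2=\norm{\sum_j\E[\mathcal{Y}_j^2]}$. Both reduce to controlling $\norm{\P_T\vg_j}_F^2=\norm{\P_T\G(\vb_j\ve_j^T)}_F^2$: Assumption \ref{assumption2} caps the entries of $\vb_j$ through $\mu_0$, while Assumption \ref{assumption1} — the $\mu_1$-incoherence of $\mU_\star,\mV_\star$, together with the square-root weights $\sqrt{w_i}$ folded into $\G=\H\D^{-1}$ — bounds how strongly the lifted rank-one measurement aligns with $T$. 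Carrying these through yields $L\lesssim \mu_0\mu_1 sr/n$, and since $\sum_j\E\Z_j=\I$ forces $\norm{\sum_j\E[(\P_T\vg_j)\ip{\P_T\vg_j}{\cdot}]}=\norm{\P_T\G\G^{*}\P_T}\le1$, also $\sigma^2\le L\cdot 1\lesssim\mu_0\mu_1 sr/n$. Matrix Bernstein then gives $\norm{\sum_j\mathcal{Y}_j}\lesssim\sqrt{\sigma^2\log(sn)}+L\log(sn)\le\varepsilon$ with probability at least $1-(sn)^{-c}$ once $n\gtrsim\varepsilon^{-2}\mu_0\mu_1 sr\log(sn)$, matching the stated sample requirement.

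The main obstacle is the bookkeeping behind the per-term and variance estimates: precisely tracking how the vectorized Hankel lift $\G=\H\D^{-1}$ — in particular the weights enforcing $\mD\A=\A\D$ — couples to the projector $\P_T$ and to the incoherence of the singular subspaces, so that $\norm{\P_T\G(\vb_j\ve_j^T)}_F^2$ scales like $\mu_0\mu_1 sr/n$ rather than something larger. Once that estimate is in hand the Bernstein step is routine; since \cite{Chen2022} already performs exactly this computation, in the present paper the lemma is invoked by reference rather than reproved.
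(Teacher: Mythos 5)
Your proposal is correct and consistent with the paper, which establishes this lemma purely by citation to Corollary III.9 of \cite{Chen2022} rather than reproving it. Your matrix Bernstein reconstruction — rank-one summands $\mN\mapsto\ip{\P_T\vg_j}{\mN}\P_T\vg_j$ with $\vg_j=\G(\vb_j\ve_j^T)$, the per-term bound $\norm{\P_T\vg_j}_F^2\lesssim\mu_0\mu_1 sr/n$ obtained by combining the blockwise incoherence of $\mU_\star$, the rowwise incoherence of $\mV_\star$, and the $1/\sqrt{w_j}$ weighting in $\G$, and the variance bound $\sigma^2\le L\norm{\P_T\G\G^*\P_T}\le L$ — is exactly the argument carried out in the cited reference, so no gap remains.
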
    
\begin{lemma}\label{lem:semi-tancon}
Under event \eqref{eq:event-tancon-inq}, one has
$
    \|\A\G^{*}\P_T\|\leq \sqrt{1+\varepsilon}.
$
 \begin{proof}
Rewrite that
\begin{align*}
&\|\A\G^{*}\P_T(\mM)\|_F^2=\langle\P_T\G\A^{*}\A\G^{*}\P_T(\mM),\mM\rangle\\
&=\langle\P_T\G(\A^{*}\A-\I)\G^{*}\P_T(\mM),\mM\rangle+\langle\P_T\G\G^{*}\P_T(\mM),\mM\rangle
\\&\leq(1+\varepsilon)\|\mM\|_F^2,
\end{align*}where we use the condition \eqref{eq:event-tancon-inq}. 
 \end{proof}   
\end{lemma}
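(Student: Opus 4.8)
The plan is to bound the operator norm $\|\A\G^{*}\P_T\|$ by expanding it into a quadratic form and then invoking the tangent-space near-isometry guaranteed by the event \eqref{eq:event-tancon-inq}. Since this operator norm equals $\sup_{\mM\neq\bzero}\|\A\G^{*}\P_T(\mM)\|_2/\|\mM\|_F$, it suffices to control $\|\A\G^{*}\P_T(\mM)\|_2^2$ for an arbitrary $\mM$. First I would pass the operators across the inner product using the adjoint pairs $\A\leftrightarrow\A^{*}$ and $\G\leftrightarrow\G^{*}$ together with the self-adjointness of the projection $\P_T$, writing
\[
\|\A\G^{*}\P_T(\mM)\|_2^2=\langle\P_T\G\A^{*}\A\G^{*}\P_T(\mM),\mM\rangle.
\]

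Next I would split $\A^{*}\A=(\A^{*}\A-\I)+\I$ to isolate the perturbation from the identity, obtaining the two terms $\langle\P_T\G(\A^{*}\A-\I)\G^{*}\P_T(\mM),\mM\rangle$ and $\langle\P_T\G\G^{*}\P_T(\mM),\mM\rangle$. The first is at most $\varepsilon\|\mM\|_F^2$ by Lemma~\ref{lem:tancon_inq}, since under \eqref{eq:event-tancon-inq} the operator $\P_T\G(\A^{*}\A-\I)\G^{*}\P_T$ has spectral norm bounded by $\varepsilon$. For the second term I would use the self-adjointness and idempotency of $\P_T$ to rewrite it as $\|\G^{*}\P_T(\mM)\|_F^2$, and then bound it by $\|\mM\|_F^2$ using $\|\G^{*}\|\leq 1$ (established earlier) and the fact that $\P_T$ is a contraction. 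Summing the two bounds yields $\|\A\G^{*}\P_T(\mM)\|_2^2\leq(1+\varepsilon)\|\mM\|_F^2$, and taking the supremum over $\mM$ followed by a square root gives the claim.

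I do not expect a genuine obstacle here: the statement is a short corollary of the tangent-space bound \eqref{eq:event-tancon-inq}. The only point requiring care is the operator bookkeeping — verifying that the adjoints move across the inner product exactly as claimed so that the quadratic form truly reproduces $\|\A\G^{*}\P_T(\mM)\|_2^2$, and that after subtracting the perturbation the residual identity piece collapses precisely to $\|\G^{*}\P_T(\mM)\|_F^2$ without an extra constant. Once that is checked, no quantitative estimate beyond the given $\varepsilon$-bound and $\|\G^{*}\|\leq 1$ is needed.
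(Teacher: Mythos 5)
Your proposal is correct and follows essentially the same route as the paper: expand $\|\A\G^{*}\P_T(\mM)\|_2^2$ into the quadratic form $\langle\P_T\G\A^{*}\A\G^{*}\P_T(\mM),\mM\rangle$, split off $\A^{*}\A-\I$ to invoke the bound \eqref{eq:event-tancon-inq}, and control the remaining term $\langle\P_T\G\G^{*}\P_T(\mM),\mM\rangle$ by $\|\mM\|_F^2$ using $\|\G^{*}\|\leq 1$. No substantive difference from the paper's argument.
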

 \begin{lemma}[Well-conditionedness of factors]\label{lem:wellcond}
 For  $\mF\in\B_{\mathrm{v}}(\frac{1}{20},\frac{1}{20})$, one has {well-conditionedness} of factors $\mL$, $\mR:$
 \begin{align*}
   \frac{4\sigma_r(\mM_\star)}{5}&\leq \sigma_r^2(\mL)\wedge\sigma_r^2(\mR)\leq\sigma_1^2(\mL)\vee\sigma_1^2(\mR)\leq\frac{5\sigma_1(\mM_\star)}{4}.
 \end{align*}
 \end{lemma}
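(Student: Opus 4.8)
The plan is to bound the extreme singular values of each factor $\mL,\mR$ \emph{directly}, by comparing them to $\mL_\star=\mU_\star\vSS_\star^{1/2}$ and $\mR_\star=\mV_\star\vSS_\star^{1/2}$ after a unitary (Procrustes) alignment, instead of passing through the product $\mL\mR^H$. This choice is the main conceptual point. If one tried to control $\sigma_r(\mL)$ through $\sigma_r(\mL\mR^H)$, the only available product inequality is $\sigma_r^2(\mL\mR^H)\le\sigma_1^2(\mR)\,\sigma_r^2(\mL)$; since $\sigma_r(\mL\mR^H)\approx\sigma_r(\mM_\star)$ while $\sigma_1^2(\mR)\approx\sigma_1(\mM_\star)$, this would only yield $\sigma_r^2(\mL)\gtrsim\sigma_r(\mM_\star)/\kappa$, injecting a spurious condition-number factor that the statement explicitly excludes. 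Aligning the factors to $\mF_\star$ avoids this loss entirely.

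First I would invoke a standard Procrustes-type distance bound (\cite{Tu2016,Tong2021}): letting $\mQ$ be a unitary minimizer of $\|\mL\mQ-\mL_\star\|_F^2+\|\mR\mQ-\mR_\star\|_F^2$ and writing $\vDeltal=\mL\mQ-\mL_\star$, $\vDeltar=\mR\mQ-\mR_\star$, one has
\[
\|\vDeltal\|_F^2+\|\vDeltar\|_F^2\le\frac{\sqrt2+1}{\sigma_r(\mM_\star)}\Big(\|\mL\mR^H-\mM_\star\|_F^2+\tfrac14\|\mL^H\mL-\mR^H\mR\|_F^2\Big).
\]
The right-hand side involves only the two quantities controlled by the defining inequalities of $\B_{\mathrm v}(\tfrac1{20},\tfrac1{20})$. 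Substituting $\|\mL\mR^H-\mM_\star\|_F\le\tfrac1{20}\sigma_r(\mM_\star)$ and $\|\mL^H\mL-\mR^H\mR\|_F\le\tfrac1{20}\sigma_r(\mM_\star)$ gives $\|\vDeltal\|_F^2+\|\vDeltar\|_F^2\le(\sqrt2+1)\tfrac{5}{1600}\sigma_r(\mM_\star)<\tfrac1{100}\sigma_r(\mM_\star)$, so that $\|\vDeltal\|\le\|\vDeltal\|_F<\tfrac1{10}\sqrt{\sigma_r(\mM_\star)}$ and likewise $\|\vDeltar\|<\tfrac1{10}\sqrt{\sigma_r(\mM_\star)}$.

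Next, I would apply Weyl's inequality to the factors. Since $\mU_\star,\mV_\star$ have orthonormal columns, $\mL_\star^H\mL_\star=\mR_\star^H\mR_\star=\vSS_\star$, hence $\sigma_i(\mL_\star)=\sigma_i(\mR_\star)=\sqrt{\sigma_i(\mM_\star)}$. Using $\sigma_i(\mL)=\sigma_i(\mL\mQ)=\sigma_i(\mL_\star+\vDeltal)$ and Weyl, $|\sigma_i(\mL)-\sqrt{\sigma_i(\mM_\star)}|\le\|\vDeltal\|$, so
\[
\sigma_1(\mL)\le\sqrt{\sigma_1(\mM_\star)}+\tfrac1{10}\sqrt{\sigma_r(\mM_\star)}\le\tfrac{11}{10}\sqrt{\sigma_1(\mM_\star)},\qquad \sigma_r(\mL)\ge\sqrt{\sigma_r(\mM_\star)}-\tfrac1{10}\sqrt{\sigma_r(\mM_\star)}=\tfrac9{10}\sqrt{\sigma_r(\mM_\star)}.
\]
Squaring gives $\sigma_1^2(\mL)\le\tfrac{121}{100}\sigma_1(\mM_\star)<\tfrac54\sigma_1(\mM_\star)$ and $\sigma_r^2(\mL)\ge\tfrac{81}{100}\sigma_r(\mM_\star)>\tfrac45\sigma_r(\mM_\star)$. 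Running the identical argument for $\mR$ and then taking the minimum and maximum over the two factors delivers the stated two-sided inequality.

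The hard part is really the distance bound of the second paragraph: it is the only step that must convert the closeness of the \emph{product} $\mL\mR^H$ to $\mM_\star$ together with the approximate balancing $\mL^H\mL\approx\mR^H\mR$ into a genuine per-factor alignment error, and it is what allows both $\sigma_r$ and $\sigma_1$ to be pinned --- without any $\kappa$ --- to $\sqrt{\sigma_r(\mM_\star)}$ and $\sqrt{\sigma_1(\mM_\star)}$. The remaining steps are routine Weyl perturbation and arithmetic; note also that the numerology is essentially tight for the constant $\tfrac1{20}$ appearing in $\B_{\mathrm v}$, which is precisely what produces the clean final constants $\tfrac45$ and $\tfrac54$.
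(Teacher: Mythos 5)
Your proof is correct and follows essentially the same route as the paper's: the same Procrustes-alignment distance bound (the paper cites it as Lemma~B.4 of \cite{Zilber2022}, with $\hat{\mQ}=\mathrm{sgn}(\mF^H\mF_\star)$ being exactly the unitary minimizer you describe), the same substitution of the two $\B_{\mathrm v}(\tfrac1{20},\tfrac1{20})$ bounds to get $\|\vDeltal\|_F^2\leq 0.01\,\sigma_r(\mM_\star)$, and the same application of Weyl's inequality to $\mL_\star+\vDeltal$. Your final constants ($\tfrac{11}{10}$ and $\tfrac{9}{10}$) are marginally tighter than the paper's $\sqrt{5}/2$ and $2/\sqrt{5}$, but the argument is the same.
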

 \begin{proof}
   From Lemma~B.4 in \cite{Zilber2022}
   , we have 
   \begin{align*}
       &\|\mL\hat{\mQ}-\mL_\star\|_F^2+\|\mR\hat{\mQ}-\mR_\star\|_F^2\\&\leq\frac{\sqrt{2}+1}{\sigma_r(\mM_\star)}\Big(\|\mL\mR^H-\mM_\star\|_F^2{+}\frac{1}{4}\|\mL^H\mL-\mR^H\mR\|_F^2\Big),
   \end{align*}
   where $\hat{\mQ}=\mathrm{sgn}(\mF^H\mF_\star)=\vU\vV^H$ if the SVD of $\mF^H\mF_\star$ is $\vU\vSS\vV^H$. Thus 
   $\|\mL\hat{\mQ}-\mL_\star\|_F^2\leq 0.01\sigma_r(\mM_\star)$ 
   for $\mF\in\B_{\mathrm{v}}(\frac{1}{20},\frac{1}{20})$. 
    From Weyl's theorem, we obtain that 
    \begin{align*}
     &\sigma_1(\mL){=}\sigma_1(\mL\hat{\mQ}){\leq} \|\mL\hat{\mQ}-\mL_\star\|_F{+}\sigma_1(\mL_\star){\leq}\sqrt{5}\sigma_1(\mM_\star)^\frac{1}{2}/2,
     \\
&\sigma_r(\mL){=}\sigma_r(\mL\hat{\mQ}){\geq}\sigma_r(\mL_\star){-}\|\mL\hat{\mQ}-\mL_\star\|_F{\geq} 2\sigma_r(\mM_\star)^\frac{1}{2}/\sqrt{5},
    \end{align*}
   and the same bounds hold for $\sigma_1(\mR)$ and $\sigma_r(\mR)$. 
 \end{proof}
{In the following lemmas, $\mL$, $\mR$, 
$\vDeltal$, $\vDeltar$, $\bPhi$ and $\bPsi$  are defined in additional notations in Appendix \ref{addNotations}.}
\begin{lemma} \label{lem:distbl_upbd}
One has 
\begin{align*}
   \|\vDeltal\vSS_{\star}^{\frac{1}{2}}\|_F^2+\|\vDeltar\vSS_{\star}^{\frac{1}{2}}\|_F^2\leq (\sqrt{2}+1)\|\mL\mR^H-\mM_\star\|_F^2.
\end{align*}
\begin{proof}
   This is an intermediary result in the proof of Lemma~24 in \cite{Tong2021} and Lemma~5.4 in \cite{Tu2016}. 
\end{proof}
\end{lemma}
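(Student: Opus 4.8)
The plan is to adapt the standard factorization-distance argument of Lemma~5.4 in \cite{Tu2016} and Lemma~24 in \cite{Tong2021} to the present balanced, optimally aligned, and $\vSS_\star^{1/2}$-weighted asymmetric setting. I would stress at the outset that this is a \emph{global} inequality: no closeness of $\mM=\mL\mR^H$ to $\mM_\star$ is assumed, so the quadratic term $\bPsi=\vDeltal\vDeltar^H$ must be carried exactly rather than dropped as a higher-order remainder, and the whole difficulty lies there.

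First I would record the two structural facts supplied by the construction in Appendix~\ref{addNotations}. Since $\Tilde{\mL}=\mU\vSS^{1/2}$ and $\Tilde{\mR}=\mV\vSS^{1/2}$ share the same Gram matrix and $\mL=\Tilde{\mL}\mQ^\natural$, $\mR=\Tilde{\mR}\mQ^\natural$, the factors are balanced, $\mL^H\mL=\mR^H\mR$, and likewise $\mL_\star^H\mL_\star=\mR_\star^H\mR_\star=\vSS_\star$. Moreover $\mQ^\natural=\mathrm{sgn}(\Tilde{\mF}^H\mF_\star)$ is the unitary polar factor, hence solves the orthogonal Procrustes problem $\min_{\mQ^H\mQ=\mI}\|\Tilde{\mF}\mQ-\mF_\star\|_F$; its optimality is equivalent to $\mF^H\mF_\star=\mL^H\mL_\star+\mR^H\mR_\star$ being Hermitian and positive semidefinite. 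Writing $\mA=\mL_\star^H\vDeltal$ and $\mB=\mR_\star^H\vDeltar$, this reads $2\vSS_\star+\mA^H+\mB^H\succeq 0$, which in particular forces $\mA+\mB$ to be Hermitian.

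Next I would expand the target quantity. Using $\mR_\star^H\mR_\star=\mL_\star^H\mL_\star=\vSS_\star$ a direct computation gives
\[
\|\bPhi\|_F^2=\|\vDeltal\vSS_\star^{1/2}\|_F^2+\|\vDeltar\vSS_\star^{1/2}\|_F^2+2\Real\trace(\mA\mB),
\]
so that $W:=\|\vDeltal\vSS_\star^{1/2}\|_F^2+\|\vDeltar\vSS_\star^{1/2}\|_F^2$ sits inside $\|\bPhi\|_F^2$. Combining this with the decomposition $\mL\mR^H-\mM_\star=\bPhi+\bPsi$ in \eqref{eq:decomp} yields
\[
\|\mL\mR^H-\mM_\star\|_F^2=\|\bPhi\|_F^2+2\Real\langle\bPhi,\bPsi\rangle+\|\bPsi\|_F^2.
\]
It then remains to prove $\|\mL\mR^H-\mM_\star\|_F^2\ge(\sqrt2-1)\,W$, which is the asserted bound because $1/(\sqrt2-1)=\sqrt2+1$.

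The hard part will be that neither the internal cross term $\Real\trace(\mA\mB)$ nor the interaction $\Real\langle\bPhi,\bPsi\rangle$ carries an evident sign; indeed, splitting $\mA,\mB$ into Hermitian and skew-Hermitian parts and using $\mA+\mB$ Hermitian gives $\Real\trace(\mA\mB)=\Real\trace(\mA_h\mB_h)+\|\mA_k\|_F^2$, whose first summand is unsigned. The route through, exactly as in \cite{Tu2016,Tong2021}, is to invoke the full semidefiniteness $2\vSS_\star+\mA^H+\mB^H\succeq 0$ (not merely Hermitianness) to absorb these cross terms into the nonnegative quantities $W$ and $\|\bPsi\|_F^2$, and then to reduce the matrix estimate to a scalar optimization whose worst case balances the linear contribution $\|\bPhi\|_F^2$ against the quadratic contribution $\|\bPsi\|_F^2$; minimizing that scalar expression is precisely what produces the sharp constant $\sqrt2-1$ (equivalently $\sqrt2+1$ after inversion). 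I expect the careful bookkeeping of these cross terms under the positive-semidefinite alignment condition to be the delicate step, while the algebraic identities above are routine.
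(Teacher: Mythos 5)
The paper does not actually prove this lemma; it only points to the intermediary step inside Lemma~24 of \cite{Tong2021} (and Lemma~5.4 of \cite{Tu2016}), so your attempt is a from-scratch reconstruction rather than a parallel of an in-paper argument. Your setup is sound: the balancing $\mL^H\mL=\mR^H\mR$, the characterization of $\mQ^\natural$ as the polar factor giving $\mF^H\mF_\star=2\vSS_\star+\mA^H+\mB^H\succeq 0$ with $\mA=\mL_\star^H\vDeltal$, $\mB=\mR_\star^H\vDeltar$, and the identity $\|\bPhi\|_F^2=\|\vDeltal\vSS_\star^{1/2}\|_F^2+\|\vDeltar\vSS_\star^{1/2}\|_F^2+2\Real\trace(\mA\mB)$ are all correct, as is the decomposition $\Real\trace(\mA\mB)=\Real\trace(\mA_h\mB_h)+\|\mA_k\|_F^2$.

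The genuine gap is that the proof stops exactly where the lemma's content begins. Everything up to that point is bookkeeping; the entire force of the statement is the inequality $\|\mL\mR^H-\mM_\star\|_F^2\geq(\sqrt2-1)\bigl(\|\vDeltal\vSS_\star^{1/2}\|_F^2+\|\vDeltar\vSS_\star^{1/2}\|_F^2\bigr)$, and for that you offer only a description of a strategy (``absorb the cross terms into $W$ and $\|\bPsi\|_F^2$ and reduce to a scalar optimization'') with no actual derivation. You correctly observe that $\Real\trace(\mA_h\mB_h)$ and $\Real\langle\bPhi,\bPsi\rangle$ are unsigned, but you never show how the positive-semidefiniteness of $2\vSS_\star+\mA^H+\mB^H$ controls them, nor where the constant $\sqrt2-1$ comes from; asserting that this works ``exactly as in \cite{Tu2016,Tong2021}'' is not accurate either, because those proofs do not proceed by this direct expansion of $\bPhi+\bPsi$. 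They instead pass to the symmetric lifting, comparing $\mF\mF^H$ with $\mF_\star\mF_\star^H$, using balancedness to bound the diagonal blocks $\|\mL\mL^H-\mL_\star\mL_\star^H\|_F^2+\|\mR\mR^H-\mR_\star\mR_\star^H\|_F^2$ by a multiple of $\|\mL\mR^H-\mM_\star\|_F^2$, and then invoking the PSD-case Procrustes bound (where the scalar optimization yielding $\tfrac{1}{2(\sqrt2-1)}$ is actually carried out). As written, your argument would not compile into a proof: the delicate step you flag is the whole lemma, and it is missing.
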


\begin{lemma}\label{lem:basic-local-facts}
When $\|\mL\mR^H-\mM_\star\|_F\leq 0.5\varepsilon\sigma_r(\mM_\star)$, one has  \begin{align*}   
     \|\bPsi\|_F &\leq\varepsilon\|\mL\mR^H-\mM_\star\|_F,
    \numberthis\label{eq:psi_upbd}
    \\
      (1-\varepsilon)\|\mL\mR^H-\mM_\star\|_F &\leq\|\bPhi\|_F\leq(1+\varepsilon)\|\mL\mR^H-\mM_\star\|_F,\numberthis\label{eq:phi_bd}
\\
    \|(\mL^H\mL)^{-\frac{1}{2}}\vSS_\star^{\frac{1}{2}} \|\vee&\|(\mR^H\mR)^{-\frac{1}{2}}\vSS_\star^{\frac{1}{2}}\|\leq \frac{1}{1-\varepsilon},\numberthis \label{eq:diag-div-scale-bd}
 \\   \|\vDeltal(\mL^H\mL)^{-\frac{1}{2}}\|\vee&\|\vDeltar(\mR^H\mR)^{-\frac{1}{2}}\|\leq \frac{\varepsilon}{1-\varepsilon},  \numberthis \label{eq:dd-scale-upbd}    
   \\\big\|{\begin{bmatrix}
            \vDeltal^H~ \vDeltar^H\end{bmatrix}^H}\big\|_{\mL,\mR}\leq&\frac{8}{5}(1+\varepsilon)\|\mL\mR^H-\mM_\star\|_F.
\numberthis\label{eq:local_norm_ddupbd}
\end{align*}
\end{lemma}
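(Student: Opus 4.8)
The plan is to reduce every one of the five claims to two master quantities: the spectral deviations $\|\vDeltal\vSS_\star^{-\frac12}\|,\|\vDeltar\vSS_\star^{-\frac12}\|$ and the $\vSS_\star^{\frac12}$-weighted Frobenius deviations $\|\vDeltal\vSS_\star^{\frac12}\|_F,\|\vDeltar\vSS_\star^{\frac12}\|_F$, writing $E:=\|\mL\mR^H-\mM_\star\|_F$ for brevity. First I would record the two preliminaries that drive everything. Lemma~\ref{lem:distbl_upbd} gives $\|\vDeltal\vSS_\star^{\frac12}\|_F^2+\|\vDeltar\vSS_\star^{\frac12}\|_F^2\le(\sqrt2+1)E^2$. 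Since $\vDeltal\vSS_\star^{-\frac12}=\vDeltal\vSS_\star^{\frac12}\vSS_\star^{-1}$ and $\|\vSS_\star^{-1}\|=\sigma_r(\mM_\star)^{-1}$, the hypothesis $E\le\frac12\varepsilon\sigma_r(\mM_\star)$ yields $\|\vDeltal\vSS_\star^{-\frac12}\|\le\|\vDeltal\vSS_\star^{\frac12}\|_F\,\sigma_r(\mM_\star)^{-1}\le\sqrt{\sqrt2+1}\cdot\tfrac12\varepsilon<\varepsilon$, and identically $\|\vDeltar\vSS_\star^{-\frac12}\|<\varepsilon$. This single preliminary $\|\vDeltal\vSS_\star^{-\frac12}\|\vee\|\vDeltar\vSS_\star^{-\frac12}\|\le\varepsilon$ feeds all remaining items.

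For \eqref{eq:psi_upbd} I would bound $\|\bPsi\|_F=\|\vDeltal\vDeltar^H\|_F\le\|\vDeltal\|\,\|\vDeltar\|_F\le\|\vDeltal\vSS_\star^{\frac12}\|_F\,\|\vDeltar\vSS_\star^{\frac12}\|_F\,\sigma_r(\mM_\star)^{-1}$, each factor absorbing one $\vSS_\star^{-\frac12}$; applying AM--GM and then Lemma~\ref{lem:distbl_upbd} gives $\|\bPsi\|_F\le\frac{\sqrt2+1}{2}E^2\sigma_r(\mM_\star)^{-1}\le\frac{\sqrt2+1}{4}\varepsilon E\le\varepsilon E$. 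The AM--GM step is essential here: the naive product bound $\|\vDeltal\vSS_\star^{-\frac12}\|\,\|\vDeltar\vSS_\star^{\frac12}\|_F$ loses a factor and overshoots $\varepsilon$. Then \eqref{eq:phi_bd} is immediate by writing $\bPhi=(\mL\mR^H-\mM_\star)-\bPsi$ from the decomposition \eqref{eq:decomp} and applying the triangle inequality in both directions together with \eqref{eq:psi_upbd}.

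For \eqref{eq:diag-div-scale-bd} I would use the identity $\|(\mL^H\mL)^{-\frac12}\vSS_\star^{\frac12}\|=\sigma_r(\mL\vSS_\star^{-\frac12})^{-1}$, which follows from $\|(\mL^H\mL)^{-\frac12}\vSS_\star^{\frac12}\|^2=\lambda_{\max}\big(\vSS_\star^{\frac12}(\mL^H\mL)^{-1}\vSS_\star^{\frac12}\big)=\lambda_{\min}\big(\vSS_\star^{-\frac12}\mL^H\mL\vSS_\star^{-\frac12}\big)^{-1}=\sigma_r(\mL\vSS_\star^{-\frac12})^{-2}$. Since $\mL\vSS_\star^{-\frac12}=\mU_\star+\vDeltal\vSS_\star^{-\frac12}$ with $\sigma_r(\mU_\star)=1$, Weyl's inequality gives $\sigma_r(\mL\vSS_\star^{-\frac12})\ge1-\|\vDeltal\vSS_\star^{-\frac12}\|\ge1-\varepsilon$, hence the $\frac{1}{1-\varepsilon}$ bound; the same holds for $\mR$. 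Then \eqref{eq:dd-scale-upbd} is a one-line consequence, $\|\vDeltal(\mL^H\mL)^{-\frac12}\|\le\|\vDeltal\vSS_\star^{-\frac12}\|\,\|\vSS_\star^{\frac12}(\mL^H\mL)^{-\frac12}\|\le\varepsilon\cdot\frac{1}{1-\varepsilon}$.

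Finally, for \eqref{eq:local_norm_ddupbd} I would expand the scaled norm as $\|\vDeltal(\mR^H\mR)^{\frac12}\|_F^2+\|\vDeltar(\mL^H\mL)^{\frac12}\|_F^2$ and bound $\|\vDeltal(\mR^H\mR)^{\frac12}\|_F\le\|\vDeltal\vSS_\star^{\frac12}\|_F\,\|\vSS_\star^{-\frac12}(\mR^H\mR)^{\frac12}\|$, where $\|\vSS_\star^{-\frac12}(\mR^H\mR)^{\frac12}\|=\sigma_1(\mR\vSS_\star^{-\frac12})\le1+\|\vDeltar\vSS_\star^{-\frac12}\|\le1+\varepsilon$ by Weyl again (now on $\mR\vSS_\star^{-\frac12}=\mV_\star+\vDeltar\vSS_\star^{-\frac12}$). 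Summing and invoking Lemma~\ref{lem:distbl_upbd} gives $(1+\varepsilon)\sqrt{\sqrt2+1}\,E$, and I would close with the numerical fact $\sqrt{\sqrt2+1}\le\frac85$ (since $(8/5)^2=2.56\ge\sqrt2+1$). The main obstacle throughout is constant-chasing rather than structural difficulty: each bound must land inside the stated sharp constants, which forces the AM--GM refinement in \eqref{eq:psi_upbd} and the $\sigma_1(\mR\vSS_\star^{-\frac12})$ Weyl estimate in \eqref{eq:local_norm_ddupbd} in place of cruder spectral-norm splittings.
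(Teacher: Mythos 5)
Your proposal is correct and follows essentially the same route as the paper's proof: both reduce everything to Lemma~\ref{lem:distbl_upbd} together with the spectral bound on $\|\vDeltal\vSS_\star^{-\frac{1}{2}}\|\vee\|\vDeltar\vSS_\star^{-\frac{1}{2}}\|$, use an AM--GM-type product bound for \eqref{eq:psi_upbd}, the triangle inequality for \eqref{eq:phi_bd}, and Weyl-type perturbation of $\mL\vSS_\star^{-\frac{1}{2}}=\mU_\star+\vDeltal\vSS_\star^{-\frac{1}{2}}$ for the remaining items. The only cosmetic difference is that you prove the bound \eqref{eq:diag-div-scale-bd} directly via the singular-value identity, whereas the paper cites Lemma~15 of Tong et al.\ for the same estimate.
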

\begin{proof}
From the relation that $\|\mA\mB\|_F\geq\sigma_r(\mB)\|\mA\|_F$ and $\|\mA\|_F\geq\|\mA\|$, one obtains 
\begin{align*}
    \|\vDeltal\vSS_\star^{-\frac{1}{2}}\|\vee\|\vDeltar\vSS_\star^{-\frac{1}{2}}\|\leq\frac{\|\vDeltal\vSS_\star^{\frac{1}{2}}\|_F\vee\|\vDeltar\vSS_\star^{\frac{1}{2}}\|_F}{\sigma_r(\mM_\star)}.
\end{align*}
By Lemma~\ref{lem:distbl_upbd} and $\|\mL\mR^H-\mM_\star\|_F\leq 0.5\varepsilon\sigma_r(\mM_\star)$ to obtain
\begin{align*}
    \|\vDeltal\vSS_\star^{-\frac{1}{2}}\|\vee\|\vDeltar\vSS_\star^{-\frac{1}{2}}\|
    &\leq \sqrt{2}(\sqrt{2}+1)^{-\frac{1}{2}}\varepsilon. \numberthis\label{eq:useful-fact}
\end{align*}
Setting $p=\|\vDeltal\vSS_\star^{-\frac{1}{2}}\|\vee\|\vDeltar\vSS_\star^{-\frac{1}{2}}\|$ and 
from the facts that $\|\vDeltal\vDeltar^H\|_F\leq\|\vDeltal\vSS_\star^{-\frac{1}{2}}\|\|\vDeltal\vSS_\star^{\frac{1}{2}}\|_F$ and $\|\vDeltal\vDeltar^H\|_F\leq\|\vDeltar\vSS_\star^{-\frac{1}{2}}\|\|\vDeltar\vSS_\star^{\frac{1}{2}}\|_F$, we have
\begin{align*}
    \|\bPsi\|_F=\|\vDeltal\vDeltar^H\|_F&\leq\frac{p}{2}\(\|\vDeltal\vSS_\star^{\frac{1}{2}}\|_F+\|\vDeltar\vSS_\star^{\frac{1}{2}}\|_F\)
    \\
    &\leq \varepsilon\|\mL\mR^H-\mM_\star\|_F,
\end{align*}
   where 
   in the last inequality we invoke Lemma~\ref{lem:distbl_upbd}.

    We know that $\|\bPhi\|_F=\|\mL\mR^H-\mM_\star-\bPsi\|_F$, and use the triangle inequality to obtain
  $\|\mL\mR^H-\mM_\star\|_F-\|\bPsi\|_F \leq\|\bPhi\|_F\leq\|\mL\mR^H-\mM_\star\|_F+\|\bPsi\|_F,$
    thus we establish \eqref{eq:phi_bd}.
    
    For \eqref{eq:diag-div-scale-bd}, we point out the relation that
\begin{align*}
    \|(\mL^H\mL)^{-\frac{1}{2}}\vSS_\star^{\frac{1}{2}}\|&=\|\mL(\mL^H\mL)^{-1}\vSS_{\star}^{\frac{1}{2}}\|
    \\&\leq\frac{1}{1-\|\vDeltal\vSS_{\star}^{-\frac{1}{2}}\|}\leq\frac{1}{1-\varepsilon},
\end{align*}
where we invoke Lemma~15 of \cite{Tong2021} 
in the first inequality and the fact \eqref{eq:useful-fact} in the last inequality. Similarly, we establish that $\|(\mR^H\mR)^{-\frac{1}{2}}\vSS_\star^{\frac{1}{2}}\|\leq\frac{1}{1-\varepsilon}$. For \eqref{eq:dd-scale-upbd},
    \begin{align*}
        \|\vDeltal(\mL^H\mL)^{-\frac{1}{2}}\|=\|\vDeltal\vSS_\star^{-\frac{1}{2}}\|\|(\mL^H\mL)^{-\frac{1}{2}}\vSS_\star^{\frac{1}{2}}\|\leq\frac{\varepsilon}{1-\varepsilon},
    \end{align*}
    and similarly we obtain $\|\vDeltar(\mR^H\mR)^{-\frac{1}{2}}\|\leq\frac{\varepsilon}{1-\varepsilon}$.  For \eqref{eq:local_norm_ddupbd},
    \begin{align*}
    \big\|{\begin{bmatrix}
            \vDeltal^H~ \vDeltar^H\end{bmatrix}^H}\big\|_{\mL,\mR}^{2}=\|\vDeltal(\mR^H\mR)^{\frac{1}{2}}\|_F^2+\|\vDeltar(\mL^H\mL)^{\frac{1}{2}}\|_F^2.
\end{align*}
We first bound $ \|\vDeltal(\mR^H\mR)^{\frac{1}{2}}\|_F$, and bounding $ \|\vDeltar(\mL^H\mL)^{\frac{1}{2}}\|_F$ follows from a similar route:
\begin{align*}
    \|\vDeltal(\mR^H\mR)^{\frac{1}{2}}\|_F&\leq\|\vDeltal\vSS_\star^{\frac{1}{2}}\|_F\|\vSS_\star^{-\frac{1}{2}}(\mR^H\mR)^\frac{1}{2}\|\\&=\|\vDeltal\vSS_\star^{\frac{1}{2}}\|_F\|\mR\vSS_\star^{-\frac{1}{2}}\|\leq(1+\varepsilon)\|\vDeltal\vSS_\star^{\frac{1}{2}}\|_F,
\end{align*}
where invoke the fact $\|\mR\vSS_\star^{-\frac{1}{2}}\|=\|(\mR_\star+\vDeltar)\vSS_\star^{-\frac{1}{2}}\|\leq 1+\|\vDeltar\vSS_\star^{-\frac{1}{2}}\|\leq1+\varepsilon$ in the last inequality. Together with Lemma~\ref{lem:distbl_upbd}, we obtain that 
\begin{align*}
    \big\|{\begin{bmatrix}
            \vDeltal^H~ \vDeltar^H\end{bmatrix}^H}\big\|_{\mL,\mR}\leq&\frac{8}{5}(1+\varepsilon)\|\mL\mR^H-\mM_\star\|_F.
\end{align*}

\end{proof}

\section*{Acknowledgments}
The authors would like to thank the anonymous reviewers and the Associate Editor for their constructive comments which have helped to improve the quality of this work. The authors also thank Jinchi Chen for helpful discussions.


\begin{thebibliography}{10}
	
	\bibitem{Quirin2011}
	S.~Quirin, S.~R.~P. Pavani, and R.~Piestun, ``Optimal 3d single-molecule
	localization for superresolution microscopy with aberrations and engineered
	point spread functions,'' {\em Proc. Nat. Acad. Sci.}, vol.~109,
	pp.~675--679, Dec. 2011.
	
	\bibitem{Bajwa2011}
	W.~U. Bajwa, K.~Gedalyahu, and Y.~C. Eldar, ``Identification of parametric
	underspread linear systems and super-resolution radar,'' {\em IEEE Trans.
		Signal Process.}, vol.~59, pp.~2548--2561, June 2011.
	
	\bibitem{Puschmann2005}
	K.~G. Puschmann and F.~Kneer, ``On super-resolution in astronomical imaging,''
	{\em Astronomy \& Astrophysics}, vol.~436, pp.~373--378, May 2005.
	
	\bibitem{Candes2013a}
	E.~J. Candès and C.~Fernandez‐Granda, ``Towards a mathematical theory of
	super‐resolution,'' {\em Commun. Pure Appl. Math.}, vol.~67, pp.~906--956,
	Apr. 2013.
	
	\bibitem{Zheng2017}
	L.~Zheng and X.~Wang, ``Super-resolution delay-doppler estimation for ofdm
	passive radar,'' {\em IEEE Trans. Signal Process.}, vol.~65, pp.~2197--2210,
	May 2017.
	
	\bibitem{Heckel2016}
	R.~Heckel, V.~I. Morgenshtern, and M.~Soltanolkotabi, ``Super-resolution
	radar,'' {\em Inf. and Infer: A. J. IMA}, vol.~5, pp.~22--75, Feb. 2016.
	
	\bibitem{Margrave2011}
	G.~F. Margrave, M.~P. Lamoureux, and D.~C. Henley, ``Gabor deconvolution:
	Estimating reflectivity by nonstationary deconvolution of seismic data,''
	{\em Geophysics}, vol.~76, pp.~W15--W30, May 2011.
	
	\bibitem{Luo2006}
	X.~Luo and G.~Giannakis, ``Low-complexity blind synchronization and
	demodulation for (ultra-)wideband multi-user ad hoc access,'' {\em IEEE
		Trans. Wireless Commun.}, vol.~5, pp.~1930--1941, July 2006.
	
	\bibitem{Rust2006}
	M.~J. Rust, M.~Bates, and X.~Zhuang, ``Sub-diffraction-limit imaging by
	stochastic optical reconstruction microscopy (storm),'' {\em Nature Methods},
	vol.~3, pp.~793--796, Aug. 2006.
	
	\bibitem{Chi2016}
	Y.~Chi, ``Guaranteed blind sparse spikes deconvolution via lifting and convex
	optimization,'' {\em IEEE J. Sel. Topics Signal Process.}, vol.~10,
	pp.~782--794, June 2016.
	
	\bibitem{Yang2016}
	D.~Yang, G.~Tang, and M.~B. Wakin, ``Super-resolution of complex exponentials
	from modulations with unknown waveforms,'' {\em IEEE Trans. Inf. Theory},
	vol.~62, pp.~5809--5830, Oct. 2016.
	
	\bibitem{Chen2022}
	J.~Chen, W.~Gao, S.~Mao, and K.~Wei, ``Vectorized hankel lift: A convex
	approach for blind super-resolution of point sources,'' {\em IEEE Trans. Inf.
		Theory}, vol.~68, pp.~8280--8309, Dec. 2022.
	
	\bibitem{Suliman2022}
	M.~A. Suliman and W.~Dai, ``Blind two-dimensional super-resolution and its
	performance guarantee,'' {\em IEEE Trans. Signal Process.}, vol.~70,
	pp.~2844--2858, 2022.
	
	\bibitem{Mao2022}
	S.~Mao and J.~Chen, ``Blind super-resolution of point sources via projected
	gradient descent,'' {\em IEEE Trans. Signal Process.}, vol.~70,
	pp.~4649--4664, 2022.
	
	\bibitem{Ahmed2014}
	A.~Ahmed, B.~Recht, and J.~Romberg, ``Blind deconvolution using convex
	programming,'' {\em IEEE Trans. Inf. Theory}, vol.~60, pp.~1711--1732, Mar.
	2014.
	
	\bibitem{Ma2021}
	C.~Ma, Y.~Li, and Y.~Chi, ``Beyond procrustes: Balancing-free gradient descent
	for asymmetric low-rank matrix sensing,'' {\em IEEE Trans. Signal Process.},
	vol.~69, pp.~867--877, 2021.
	
	\bibitem{Ma2019}
	C.~Ma, K.~Wang, Y.~Chi, and Y.~Chen, ``Implicit regularization in nonconvex
	statistical estimation: Gradient descent converges linearly for phase
	retrieval, matrix completion, and blind deconvolution,'' {\em Found. Comput.
		Math.}, vol.~20, no.~3, pp.~451--632, 2019.
	
	\bibitem{Soltanolkotabi2023}
	M.~Soltanolkotabi, D.~St{\"o}ger, and C.~Xie, ``Implicit balancing and
	regularization: Generalization and convergence guarantees for
	overparameterized asymmetric matrix sensing,'' in {\em Proc. Conf. Learn.
		Theory}, vol.~195, (Bangalore, India), pp.~5140--5142, 12--15 Jul 2023.
	
	\bibitem{Qiao2017}
	H.~Qiao and P.~Pal, ``Gridless line spectrum estimation and low-rank toeplitz
	matrix compression using structured samplers: A regularization-free
	approach,'' {\em IEEE Trans. Signal Process.}, vol.~65, no.~9,
	pp.~2221--2236, 2017.
	
	\bibitem{wu2024}
	Y.~Wu, M.~B. Wakin, P.~Gerstoft, and Y.~Park, ``Non-uniform frequency spacing
	for regularization-free gridless doa,'' in {\em IEEE Int. Conf. Acoust.
		Speech Signal Process.}, (Seoul, Korea, Republic of), pp.~9291--9295, 2024.
	
	\bibitem{Tong2021}
	T.~Tong, C.~Ma, and Y.~Chi, ``Accelerating ill-conditioned low-rank matrix
	estimation via scaled gradient descent,'' {\em J. Mach. Learn. Res.},
	vol.~22, no.~150, pp.~1--63, 2021.
	
	\bibitem{Tu2016}
	S.~Tu, R.~Boczar, M.~Simchowitz, M.~Soltanolkotabi, and B.~Recht, ``Low-rank
	solutions of linear matrix equations via procrustes flow,'' in {\em Proc.
		Int. Conf. Mach. Learn.}, pp.~964--973, 2016.
	
	\bibitem{Li2019}
	X.~Li, S.~Ling, T.~Strohmer, and K.~Wei, ``Rapid, robust, and reliable blind
	deconvolution via nonconvex optimization,'' {\em Appl. Comput. Harmon.
		Anal.}, vol.~47, no.~3, pp.~893--934, 2019.
	
	\bibitem{Candes2015}
	E.~Candes, X.~Li, and M.~Soltanolkotabi, ``Phase retrieval via wirtinger flow:
	Theory and algorithms,'' {\em IEEE Trans. Inf. Theory}, vol.~61, no.~4,
	pp.~1985--2007, 2015.
	
	\bibitem{Tang2013}
	G.~Tang, B.~N. Bhaskar, P.~Shah, and B.~Recht, ``Compressed sensing off the
	grid,'' {\em IEEE Trans. Inf. Theory}, vol.~59, no.~11, pp.~7465--7490, 2013.
	
	\bibitem{Bhaskar2013}
	B.~N. Bhaskar, G.~Tang, and B.~Recht, ``Atomic norm denoising with applications
	to line spectral estimation,'' {\em IEEE Trans. Signal Process.}, vol.~61,
	pp.~5987--5999, Dec. 2013.
	
	\bibitem{Chen2014}
	Y.~Chen and Y.~Chi, ``Robust spectral compressed sensing via structured matrix
	completion,'' {\em IEEE Trans. Inf. Theory}, vol.~60, no.~10, pp.~6576--6601,
	2014.
	
	\bibitem{Li2020}
	S.~Li, M.~B. Wakin, and G.~Tang, ``Atomic norm denoising for complex
	exponentials with unknown waveform modulations,'' {\em IEEE Trans. Inf.
		Theory}, vol.~66, pp.~3893--3913, June 2020.
	
	\bibitem{Ran2021}
	Y.~Ran and W.~Dai, ``Fast and robust admm for blind super-resolution,'' in {\em
		Proc. IEEE Int. Conf. Acoust., Speech Signal Process.}, (Toronto, Ontario,
	Canada), June 2021.
	
	\bibitem{Zhu2023}
	Z.~Zhu, J.~Chen, and W.~Gao, ``Blind super-resolution of point sources via fast
	iterative hard thresholding,'' {\em Commun. Math. Sci.}, vol.~21, no.~2,
	pp.~581--590, 2023.
	
	\bibitem{Chi2019review}
	Y.~Chi, Y.~M. Lu, and Y.~Chen, ``Nonconvex optimization meets low-rank matrix
	factorization: An overview,'' {\em IEEE Trans. Signal Process.}, vol.~67,
	pp.~5239--5269, Oct. 2019.
	
	\bibitem{Zheng2016}
	Q.~Zheng and J.~Lafferty, ``Convergence analysis for rectangular matrix
	completion using {B}urer-{M}onteiro factorization and gradient descent,''
	{\em arXiv:1605.07051}, 2016.
	
	\bibitem{Ge2017}
	R.~Ge, C.~Jin, and Y.~Zheng, ``No spurious local minima in nonconvex low rank
	problems: A unified geometric analysis,'' in {\em Proc. Int. Conf. Mach.
		Learn.}, (Sydney, NSW, Australia), p.~1233–1242, 2017.
	
	\bibitem{Cai2018}
	J.-F. Cai, T.~Wang, and K.~Wei, ``Spectral compressed sensing via projected
	gradient descent,'' {\em SIAM J. Optim.}, vol.~28, no.~3, pp.~2625--2653,
	2018.
	
	\bibitem{Chenj2020}
	J.~Chen, D.~Liu, and X.~Li, ``Nonconvex rectangular matrix completion via
	gradient descent without $\ell_{2,\infty}$ regularization,'' {\em IEEE Trans.
		Inf. Theory}, vol.~66, no.~9, pp.~5806--5841, 2020.
	
	\bibitem{Chen2019aGlobalpr}
	Y.~Chen, Y.~Chi, J.~Fan, and C.~Ma, ``Gradient descent with random
	initialization: fast global convergence for nonconvex phase retrieval,'' {\em
		Math. Program.}, vol.~176, pp.~5--37, Feb. 2019.
	
	\bibitem{Tong2021a}
	T.~Tong, C.~Ma, and Y.~Chi, ``Low-rank matrix recovery with scaled subgradient
	methods: Fast and robust convergence without the condition number,'' {\em
		IEEE Trans. Signal Process.}, vol.~69, pp.~2396--2409, 2021.
	
	\bibitem{Evans1982}
	J.~E. Evans, D.~F. Sun, and J.~R. Johnson, ``Application of advanced signal
	processing techniques to angle of arrival estimation in atc navigation and
	surveillance systems,'' {\em Massachusetts Inst. Tech. Lexington. Lincoln
		Lab. Tech. Rep. 582}, 1982.
	
	\bibitem{Burer2003}
	S.~Burer and R.~D. Monteiro, ``A nonlinear programming algorithm for solving
	semidefinite programs via low-rank factorization,'' {\em Math. Program.},
	vol.~95, pp.~329--357, Feb. 2003.
	
	\bibitem{Candes2009}
	E.~J. Cand{\`e}s and B.~Recht, ``Exact matrix completion via convex
	optimization,'' {\em Found. Comput. Math.}, vol.~9, no.~6, pp.~717--772,
	2009.
	
	\bibitem{Zhang2023}
	G.~Zhang, S.~Fattahi, and R.~Y. Zhang, ``Preconditioned gradient descent for
	overparameterized nonconvex burer--monteiro factorization with global
	optimality certification,'' {\em J. Mach. Learn. Res.}, vol.~24, no.~163,
	pp.~1--55, 2023.
	
	\bibitem{Polyak1963}
	B.~Polyak, ``Gradient methods for the minimisation of functionals,'' {\em USSR
		Comput. Math. Math. Phys.}, vol.~3, pp.~864--878, Jan. 1963.
	
	\bibitem{Lojasiewicz1963}
	S.~{\L}ojasiewicz, ``Une propri{\'e}t{\'e} topologique des sous-ensembles
	analytiques r{\'e}els,'' {\em Les {\'e}quations aux d{\'e}riv{\'e}es
		partielles}, vol.~117, pp.~87--89, 1963.
	
	\bibitem{davenport2016overview}
	M.~A. Davenport and J.~Romberg, ``An overview of low-rank matrix recovery from
	incomplete observations,'' {\em IEEE J. Sel. Topics Signal Process.},
	vol.~10, no.~4, pp.~608--622, 2016.
	
	\bibitem{Sun2016}
	R.~Sun and Z.-Q. Luo, ``Guaranteed matrix completion via non-convex
	factorization,'' {\em IEEE Trans. Inf. Theory}, vol.~62, no.~11,
	pp.~6535--6579, 2016.
	
	\bibitem{Wang2024a}
	H.~Wang, J.~Chen, H.~Fan, Y.~Zhao, and L.~Yu, ``Simultaneous blind demixing and
	super-resolution via vectorized hankel lift,'' in {\em IEEE Int. Conf.
		Commun.}, pp.~354--359, 2024.
	
	\bibitem{karimi2016}
	H.~Karimi, J.~Nutini, and M.~Schmidt, ``Linear convergence of gradient and
	proximal-gradient methods under the polyak-{\l}ojasiewicz condition,'' in
	{\em Machine Learning and Knowledge Discovery in Databases}, pp.~795--811,
	Springer International Publishing, 2016.
	
	\bibitem{Zilber2022}
	P.~Zilber and B.~Nadler, ``Gnmr: A provable one-line algorithm for low rank
	matrix recovery,'' {\em SIAM J. Math. Data Sci.}, vol.~4, no.~2,
	pp.~909--934, 2022.
	
\end{thebibliography}



\end{document}